\def\argmax{\operatornamewithlimits{arg\,max}}
\newtheorem{theorem}{Theorem}
\newtheorem{corollary}{Corollary}
\newtheorem{remark}{Remark}
\newtheorem{example}{Example}
\newtheorem{definition}{Definition}
\newtheorem{lemma}{Lemma}
\newenvironment{proof}[1][Proof]{\noindent\textbf{#1.} }{\ \rule{0.5em}{0.5em}}
\newcommand{\by}{\mathbf{y}}
\newcommand{\bx}{\mathbf{x}}
\newcommand{\bz}{\mathbf{z}}
\newcommand{\bZ}{\mathbf{Z}}
\newcommand{\bh}{\mathbf{h}}
\newcommand{\bA}{\mathbf{A}}
\newcommand{\ba}{\mathbf{a}}
\newcommand{\bd}{\mathbf{d}}
\newcommand{\bB}{\mathbf{B}}
\newcommand{\bu}{\mathbf{u}}
\newcommand{\bV}{\mathbf{V}}
\newcommand{\bv}{\mathbf{v}}
\newcommand{\CV}{\mathcal{V}}
\newcommand{\bt}{\mathbf{t}}
\newcommand{\bT}{\mathbf{T}}
\newcommand{\bR}{\mathbf{R}}
\newcommand{\bI}{\mathbf{I}}
\newcommand{\bL}{\mathbf{L}}
\newcommand{\bg}{\mathbf{g}}
\newcommand{\bG}{\mathbf{G}}
\newcommand{\bs}{\mathbf{s}}
\newcommand{\bS}{\mathbf{S}}
\newcommand{\bF}{\mathbf{F}}
\newcommand{\Zp}{\mathbb{Z}_p}
\newcommand{\ZZ}{\mathbb{Z}}
\newcommand{\RR}{\mathbb{R}}
\newcommand{\Tsnr}{\mathsf{SNR}}
\newcommand{\Tinr}{\mathsf{INR}}
\newcommand{\Ql}{Q_{\Lambda}}
\newcommand{\Mod}{\bmod\Lambda}
\newcommand{\Var}{\mathrm{Var}}
\newcommand{\csym}{C_{\text{SYM}}}
\newcommand{\rsym}{R_{\text{SYM}}}
\newcommand{\diag}{\mathop{\mathrm{diag}}}
\newcommand{\Span}{\mathop{\mathrm{span}}}
\newcommand{\rank}{\mathop{\mathrm{rank}}}
\newcommand{\sign}{\mathop{\mathrm{sign}}}
\newcommand{\Vol}{\mathrm{Vol}}
\newcommand{\map}{\theta}
\begin{document}

\title{The Approximate Sum Capacity of the Symmetric Gaussian $K$-User Interference Channel}

\author{Or Ordentlich, Uri Erez, and Bobak Nazer
\thanks{The work of O. Ordentlich was supported by the Adams Fellowship Program of the Israel Academy of Sciences and Humanities, a fellowship from The Yitzhak and Chaya Weinstein Research Institute for Signal Processing at Tel Aviv University, and the Feder Family Award. The work of U. Erez was supported in part by the Israel Science Foundation under Grant No. 1557/10. The work of B. Nazer was supported by the National Science Foundation under Grant CCF-1253918.}
\thanks{O. Ordentlich and U. Erez are with Tel Aviv University, Tel Aviv, Israel (email: ordent,uri@eng.tau.ac.il). B. Nazer is with the Department of Electrical and Computer Engineering, Boston University, Boston, MA 02215, USA (email: bobak@bu.edu)}
}


\maketitle

\begin{abstract}
Interference alignment has emerged as a powerful tool in the analysis of multi-user networks. Despite considerable recent progress, the capacity region of the Gaussian $K$-user interference channel is still unknown in general, in part due to the challenges associated with alignment on the signal scale using lattice codes. This paper develops a new framework for lattice interference alignment, based on the compute-and-forward approach. Within this framework, each receiver decodes by first recovering two or more linear combinations of the transmitted codewords with integer-valued coefficients and then solving these linear combinations for its desired codeword. For the special case of symmetric channel gains, this framework is used to derive the approximate sum capacity of the Gaussian interference channel, up to an explicitly defined outage set of the channel gains. The key contributions are the capacity lower bounds for the weak through strong interference regimes, where each receiver should jointly decode its own codeword along with part of the interfering codewords. As part of the analysis, it is shown that decoding $K$ linear combinations of the codewords can approach the sum capacity of the $K$-user Gaussian multiple-access channel up to a gap of no more than $\frac{K}{2}\log{K}$ bits.
\end{abstract}

\section{Introduction}
\label{sec:Into}

Handling interference efficiently is a major challenge in multi-user wireless communication. Recently, it has become clear that this challenge can sometimes be overcome via \textit{interference alignment} \cite{mmk08,cj08}. For instance, consider the $K$-user Gaussian interference channel, where $K$ transmitter-receiver pairs wish to communicate simultaneously. Through the use of clever encoding strategies, it is possible to align the transmitted signals so that each receiver only observes its desired signal along with a single effective interferer. As a result, each user can achieve roughly half the rate that would be available were there no interference whatsoever, i.e., $K/2$ degrees-of-freedom (DoF) are available. However, many schemes, such as the Cadambe-Jafar framework \cite{cj08} and ergodic interference alignment \cite{ngjv11IT}, require a large number of independent channel realizations to achieve near-perfect alignment. In certain settings, this level of channel diversity may not be attainable; ideally, we would like to achieve alignment over a single channel realization.

The capacity region of the (static) Gaussian $K$-user interference channel \cite{carleial78} is unknown in general, although significant progress has been made recently, in part due to the discovery of interference alignment and the shift from exact capacity results to capacity approximations \cite{etw08,adt11,bpt10}. It has been shown by Motahari \textit{et al.} that $K/2$ DoF are achievable for almost all channel realizations \cite{mgmk09} but it is an open question as to whether this result translates to real gains outside of the very high signal-to-noise ratio (SNR) regime. One promising direction is the use of lattice codes \cite{zse02,ez04,zamir09}, as they can enable alignment on the signal scale. By taking advantage of the fact that the sum of lattice codewords is itself a lattice codeword, a receiver can treat several users as one effective user, thereby reducing the number of effective interferers. A compelling example of this approach is the derivation of the approximate capacity of the many-to-one interference channel by Bresler, Parekh, and Tse \cite{bpt10}. For fully connected channels, much less is known, owing to the difficulty of choosing lattices that simultaneously align at several receivers.

In some cases, focusing on the special case of symmetric channel gains has yielded important insights. For instance, in the two-user case, Etkin, Tse, and Wang \cite{etw08} used the symmetric interference channel to develop the notion of generalized degrees-of-freedom. This in turn revealed five operating regimes, based on relative interference strength:
\begin{itemize}
\item \textit{Noisy:} Each receiver treats interference as noise, which is optimal for sufficiently weak interference \cite{mk09,skc09,av09IT}.
\item \textit{Weak and Moderately Weak:} Each transmitter sends a public and a private codeword following the scheme of Han and Kobayashi \cite{hk81}. Each receiver jointly decodes both public codewords and its desired private codeword while treating the interfering private codeword as noise.
\item \textit{Strong:} Each receiver jointly decodes both users' codewords. This regime and its capacity was discovered by Sato \cite{sato81} as well as Han and Kobayashi \cite{hk81}.
\item \textit{Very Strong:} Each receiver decodes and subtracts the interference before recovering its desired codeword. This regime and its capacity was discovered by Carleial \cite{carleial75}.
\end{itemize} Using these regimes as a guideline, they were able to approximate the capacity region to within half a bit per real channel dimension.

In this paper, we focus on the special case of the symmetric (real) Gaussian $K$-user interference channel. Each receiver observes
\begin{equation}
\by_k = \bx_k + g \sum_{\ell \neq k} \bx_\ell + \bz_k \label{e:chanmodel}
\end{equation} where $\bx_k$ is the codeword sent by the $k$th transmitter, $g$ is the cross-channel gain, and $\bz_k$ is additive white Gaussian noise. Building on the compute-and-forward strategy~\cite{ng11IT}, we propose a framework for lattice-based interference alignment whose performance can be evaluated both numerically and analytically at any SNR. Within our framework, each receiver first decodes integer linear combinations of the codewords and only afterwards solves these for its desired codeword. As we will argue, this choice of receiver architecture allows us to circumvent some of the difficulties encountered in the analysis of a direct decoding strategy. Below, we summarize the main technical contributions of the paper in the context of prior work.

\subsection{Paper Overview} \label{s:overview}

One of the appealing properties of the symmetric Gaussian interference channel is that, if each transmitter draws its codeword $\bx_k$ from the same lattice codebook, the sum of the $K-1$ interfering codewords at each receiver $\sum_{\ell \neq k} \bx_\ell$ will align into a single effective codeword. This is due to the fact that lattices are closed under addition, i.e., the sum of any lattice codewords is itself a lattice codeword. The difficulty is that, depending on the value of the cross-channel gain $g$, the desired codeword may also align with the interference, since it is drawn from the same lattice codebook. The achievable rate is thus closely linked to the behavior of signal scale alignment, which makes this channel an ideal setting to gain a deeper understanding of this phenomenon at finite SNR. In Section~\ref{s:probstate}, we provide a formal problem statement.

When $|g|$ is sufficiently large, it is easy for the receiver to distinguish its desired codeword from the aligned interfering codewords. Specifically, in the very strong regime ($|g| > \sqrt{\Tsnr}$), the sum of the interfering codewords acts as the cloud center from a classical superposition codebook~\cite{cover72} and the desired codebook acts as the cloud. Thus, as proposed by Sridharan~\textit{et al.}~\cite{sjvj08}, the receiver can employ a successive cancellation strategy: first decode the sum of the interference $\sum_{\ell \neq k} \bx_\ell$, then subtract it from its channel observation $\by_k$, and finally decode $\bx_k$ from the resulting interference-free effective channel. We review this approach within the context of our framework in Section~\ref{s:vsregime}.

\begin{figure}[h]
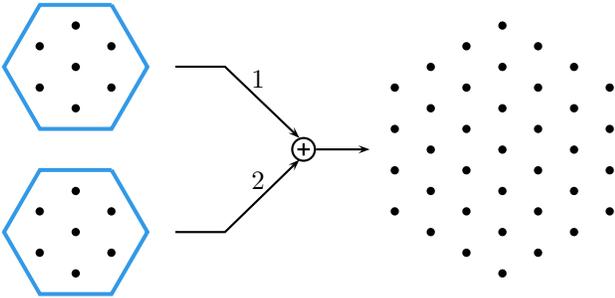

\begin{center}
\psset{unit=0.55mm}
\begin{pspicture}(8,-7)(160,67)

\rput(28,10){
\input{nestedlatticefig}
}
\rput(28,50){
\input{nestedlatticefig}
}

\rput(33,0){
\psline{->}(18,50)(30,50)(48,32.75)
\rput(38,47){$1$}

\psline{->}(18,10)(30,10)(48,27.5)
\rput(38,22.5){$2$}

\pscircle(49,30){3} \psline{-}(47.5,30)(50.5,30)
\psline{-}(49,28.5)(49,31.5)
\psline{->}(52,30)(65,30) 
}

\rput(130,30){
\input{sumlattice2fig}
}

\end{pspicture}
\end{center}
\caption{Two transmitters employ the same $7$-symbol lattice code over the channel $\mathbf{x}_1 + 2 \mathbf{x}_2$. The effective constellation seen by the receiver contains only $37$ points, which means that the receiver cannot always uniquely identify which pair of symbols was transmitted.}
\label{f:latticerational2}
\end{figure}

\begin{figure}[h]
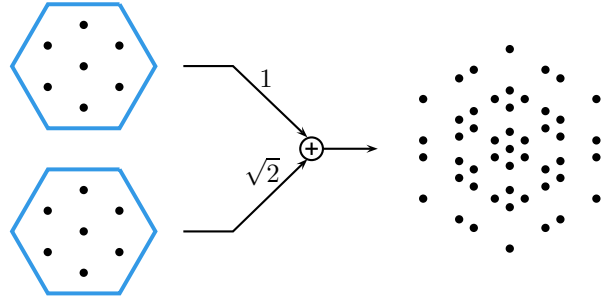

\begin{center}
\psset{unit=0.55mm}
\begin{pspicture}(8,-7)(160,67)

\rput(28,10){
\input{nestedlatticefig}
}
\rput(28,50){
\input{nestedlatticefig}
}

\rput(33,0){
\psline{->}(18,50)(30,50)(48,32.75)
\rput(38,47){$1$}

\psline{->}(18,10)(30,10)(48,27.5)
\rput(37,24.5){$\sqrt{2}$}

\pscircle(49,30){3} \psline{-}(47.5,30)(50.5,30)
\psline{-}(49,28.5)(49,31.5)
\psline{->}(52,30)(65,30) 
}

\rput(131,30){
\input{sqrt2latticefig}
}


\end{pspicture}
\end{center}
\caption{Two transmitters employ the same $7$-symbol lattice code over the channel $\mathbf{x}_1 + \sqrt{2} \mathbf{x}_2$. The effective constellation seen by the receiver consists of $49$ points, which enables the receiver to determine which pair of symbols was transmitted.}
\label{f:latticeirrational}
\end{figure}

As the magnitude of $g$ decreases below $\sqrt{\Tsnr}$, the codebooks corresponding to the desired codeword and the aligned interference will start to overlap from the receiver's perspective. For certain values of $g$, $\bx_k$ and $\sum_{\ell \neq k} \bx_\ell$ will align, which in turn significantly reduces the achievable rates. For example, in Figure~\ref{f:latticerational2}, we illustrate the effective codebook corresponding to the linear combination $\bx_1 + 2 \bx_2$ where $\bx_1$ and $\bx_2$ are drawn from the same lattice codebook. There are only $37$ points in this effective codebook, meaning that it is not always possible to uniquely determine which of the $49$ possible pairs of codewords was transmitted, regardless of the $\Tsnr$. However, for the linear combination $\bx_1 + \sqrt{2} \bx_2$ shown in Figure~\ref{f:latticeirrational}, there are $49$ points in the effective codebook, each corresponding to a unique codeword pair, even though the interference strength has decreased.

Thus, while employing the same lattice codebook at each transmitter aligns the interference at every receiver, it sometimes has the unintended effect of aligning the desired signal as well. When this occurs, the rate must be reduced until the desired codewords can be uniquely identified. We now summarize several recent papers that have aimed to quantify this effect. Etkin and Ordentlich~\cite{eo09} showed that, for the Gaussian $K$-user interference channel, the DoF is strictly less than $K/2$ if all channel gains are rational. They also demonstrated, using a scalar lattice codebook, that if the diagonal elements are irrational algebraic numbers and the off-diagonals are rational, $K/2$ DoF is achievable. Subsequently, Motahari \textit{et al.}~\cite{mgmk09} proposed the ``real interference alignment'' framework. In particular, they argued that scalar lattice codewords can be uniquely identified from a linear combination (in the high SNR limit) provided that the coefficients are rationally independent.\footnote{The coefficients $h_1,\ldots,h_K \in \mathbb{R}$ are said to be \textit{rationally independent} if there is no non-trivial choice of integers $q_1,\ldots,q_K$ such that $q_1 h_1 + \cdots + q_K h_K = 0$.} Using this framework, they demonstrated that, for the Gaussian $K$-user interference channel, $K/2$ DoF is achievable for almost all channel matrices by embedding the asymptotic alignment framework of~\cite{cj08} into a single dimension. This result was generalized by Wu \textit{et al.} using R\'enyi's information dimension~\cite{wsv11}.

For finite SNRs,~\cite{oe13} derived lower bounds on the achievable symmetric rate for a two-user multiple-access channel $\bx_1 + g \bx_2 + \bz$ where each user employs the same linear code over $\mathbb{Z}_p$ for some prime $p$. The sensitivity of the bounds to the rationality of $g$ at different SNRs was investigated, and the bounds were used to obtain achievable rate regions for Gaussian $K$-user interference channels with integer-valued off-diagonal channel gains. For the two-user Gaussian X channel\footnote{In the X channel scenario, each transmitter has an independent message for each receiver.}, Niesen and Maddah-Ali~\cite{nm13} approximated the sum capacity via an ``outage set'' characterization. Their coding scheme is guided by a variation on the deterministic model~\cite{adt11} and consists of a scalar lattice constellation combined with a random i.i.d. outer code. From one perspective, for any $c > 0$, their scheme approximates the sum capacity to within a constant gap of $c + 66$ bits up to an outage set of channel matrices of measure roughly $2^{-c/2}$.

\subsubsection{Novel Coding Strategies} The prior work described above attempts to directly bound the minimum distance in the effective codebook that results from the linear combination of the transmitters' lattice codebooks. This is a challenging task, even for scalar lattices, and limits the analytical and numerical results to relatively high SNRs. In this paper, we take an alternative approach: we lower bound the achievable rate by the rate required to decode enough integer linear combinations to reveal the desired messages. For instance, in the strong regime $1 \leq g \leq \sqrt{\Tsnr}$, each receiver \textit{first decodes two linear combinations} of the form
$$ a_{11} \bx_k + a_{12} \sum_{\ell \neq k} \bx_\ell \qquad \qquad   a_{21} \bx_k + a_{22} \sum_{\ell \neq k} \bx_\ell \ , $$ where $a_{11},a_{12},a_{21},$ and $a_{22}$ are integer-valued coefficients. If the vectors $\mathbf{a}_1 = [a_{11}~a_{12}]^T$ and $\mathbf{a}_2 = [a_{21}~a_{22}]^T$ are linearly independent, then each receiver can solve for its desired codeword $\bx_k$. The rates at which these linear combinations can be decoded can be determined directly via the compute-and-forward framework~\cite{ng11IT}, which we review in Section~\ref{s:cf}. Since this framework employs high-dimensional nested lattice codes that can approach the point-to-point AWGN capacity, we can obtain analytical and numerical results for any finite SNR.

\begin{figure}[htb]
\includegraphics[width=1 \columnwidth]{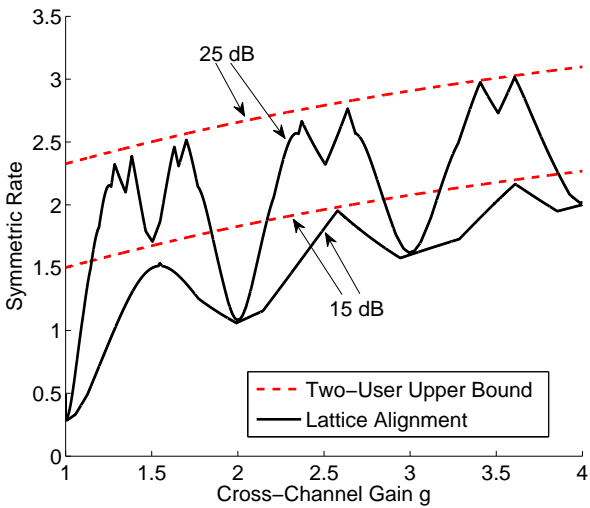}
\caption{Achievable symmetric rate for the symmetric Gaussian $3$-user interference channel from Theorem~\ref{thm:SymICnoLayeres}.}
\label{fig:badrationals}
\end{figure}

In Figure~\ref{fig:badrationals}, we have plotted the symmetric rate of this scheme (Theorem~\ref{thm:SymICnoLayeres}) at SNRs $15$ and $25$dB with respect to the cross-gain $g$ for the symmetric Gaussian $3$-user interference channel. Alongside, we have plotted the symmetric rate for the two-user upper bound described in Section~\ref{s:upper}, i.e., the rate that would be achievable if each receiver only encountered a single interferer. At $15$dB, it is clear that the desired codeword aligns with the interference only at integer-valued cross-gains. At $25$dB, alignment also occurs at $g = 3/2$, $5/2$, and $7/2$, i.e., rationals with denominator $2$. In other words, the number of channel gains where the rate saturates depends on the SNR.

We also propose a lattice version of the Han-Kobayashi scheme~\cite{hk81} for the weak and moderately weak regimes: each transmitter splits its information into a public lattice codeword $\bx_{k1}$ and a private lattice codeword $\bx_{k2}$. Each receiver recovers its desired information by first decoding three linear combinations of the form
$$ a_{m1} \bx_{k1} + a_{m2} \bx_{k2} + a_{m3} \sum_{\ell \neq k} \bx_{\ell 1}~~~~~m=1,2,3$$ for integer-valued coefficients $a_{m1},a_{m2},a_{m3}$ that suffice to solve for the desired public codeword $\bx_{k1}$, the desired private codeword $\bx_{k2}$, and the sum of the public interfering codewords $\sum_{\ell \neq k} \bx_{\ell 1}$. (The private interfering codewords are treated as noise.)

Within the standard compute-and-forward framework, the rate of each codeword should be set according to the lowest computation rate across all desired linear combinations. In Section~\ref{s:mac}, we propose an \textit{algebraic successive cancellation} decoding strategy that can achieve higher rates. Consider a single receiver that decodes $K$ linearly independent combinations of $K$ lattice codewords in a given order. Each linear combination is associated with a certain computation rate, which we set as the rate of one of the codewords. After decoding each linear combination, the receiver can cancel out the effect of one codeword from its channel observation to reduce the effective rate. As we show in Theorem~\ref{thm:SymICHK}, for the lattice Han-Kobayashi scheme, this allows each user to attain the sum of the second and third highest computation rates (as opposed to twice the third highest).

Overall, these two lattice strategies, when combined with successive cancellation for the very strong regime and treating interference as noise for the noisy regime, yield an achievable rate region for the symmetric Gaussian $K$-user interference channel. To evaluate this rate region, we only need to optimize over the integer coefficients of the linear combinations. See Section~\ref{s:numerical} for a discussion on how the space of integer-coefficients can be explored numerically. In Figure~\ref{f:symICrates}, we have plotted the resulting lower bound on the symmetric capacity along with the two-user upper bound from Section~\ref{s:upper}.

\subsubsection{Analytical Bounds} We also develop new tools for deriving closed-form lower bounds for the rate achievable via lattice alignment. These tools and specifically the compute-and-forward transform, derived in Section~\ref{sub:coftransform}, may be of independent interest. Consider again $K$ transmitted codewords and a receiver that decodes $K$ linear combinations according to the $K$ highest computation rates with linearly independent coefficient vectors. While the computation rate for each of these $K$ combinations is very sensitive to the exact values of the channel gains, the \textit{sum of the computation rates is equal to the multiple-access sum capacity up to a constant gap} that is independent of the channel gains and the SNR as we show in Theorem~\ref{thm:SumRate}. See Figure~\ref{fig:SumRateVsh40dB} for a plot of this behavior for $K = 2$. That is, lattice-based multiple-access can operate near the boundary of the capacity region. We also argue in Section~\ref{s:dof} that the degrees-of-freedom associated to each of these $K$ linear combinations is $1/K$ for almost all channel gains.

\begin{figure}[htb]
\includegraphics[width=1 \columnwidth]{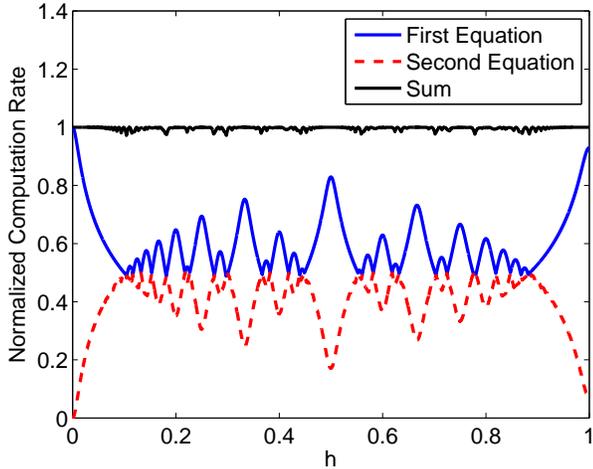}
\caption{Computation rates for the best two linearly independent integer linear combinations vs. $h$ for the channel $\by=\bx_1+h\bx_2+\bz$ at SNR=$40$dB. The sum of these computation rates is nearly equal to the multiple-access sum capacity. All rates are normalized by this sum capacity $1/2\log(1+ (1 + h^2)\Tsnr)$.}
\label{fig:SumRateVsh40dB}
\end{figure}

Interestingly, this sum capacity lower bound is very helpful in deriving closed-form lower bounds. For instance, in the strong regime, each user attains the rate associated with the second best linear combination. Thus, to bound the achievable rate, we should obtain an upper bound on the second best integer approximation of the real-valued channel gain $g$. Instead of attacking this problem directly, we instead develop a lower bound on the best integer approximation of $g$ and combine this with the sum capacity lower bound to obtain our upper bound. More details are given in Section~\ref{s:strongregime} and similar bounds are developed for the moderately weak and weak regime in Sections~\ref{s:moderateregime} and~\ref{s:weakregime}, respectively.

When compared with the two-user upper bound, these lower bounds yield an approximation of the sum capacity in all regimes that we summarize in Theorem~\ref{thm:approxCapacity}. As in~\cite{nm13}, our approximation is stated in terms of outage sets, i.e., for a given constant gap, we exclude a certain measure of channel gains. This outage set can be understood in terms of the quality of the best integer approximation of $g$, and is characterized as part of the analysis in Section~\ref{s:lower}.

\subsection{Related Work}\label{s:relatedwork}

Interference alignment has generated a great deal of excitement, due to the promise of higher throughputs in wireless networks \cite{mmk08,cj08} as well as other applications, including coding for distributed storage \cite{drws11}. See the recent monograph by Jafar for a comprehensive survey \cite{jafar11}. Of particular note is a series of recent papers that delineate the degrees-of-freedom limits of linear beamforming strategies for alignment over a finite number of channel realizations \cite{bt09,wgj11b}. Beamforming strategies can only approach perfect alignment asymptotically, whereas lattice-based schemes can achieve $K/2$ degrees-of-freedom over a single channel realization \cite{mgmk09}. However, lattice-based alignment at finite SNR has to date been limited to special cases, such as symmetric \cite{sjvj08,sjvjs08,oe13}, integer \cite{jv12}, and many-to-one interference channels \cite{bpt10,sb11}. Capacity approximations are also available for one-to-many \cite{bpt10} and cyclic interference channels \cite{zy11}, although these coding schemes do not employ alignment. Bandemer and El Gamal have recently proposed a class of three-user deterministic channels where the interfering signals are passed through a function on their way to the receiver, which, in a certain sense, models interference alignment \cite{be11}. They develop a new rate region based on interference decoding for this model.
%


Nested lattice codes have been thoroughly studied as a framework for efficient source and channel coding with side information \cite{zse02,ez04,esz05}. Recently, it has become clear that the inherent linear structure of lattices can enable many interesting new schemes, including distributed dirty paper coding \cite{pzek11}, distributed source coding of linear functions \cite{kp09,wagner11,mt10ISIT}, distributed antenna systems \cite{nsgs09,hc13IT}, and physical-layer network coding \cite{ng11PIEEE,wnps10,ncl10,ng11IT,fsk11}, to name a few. See~\cite{ramibook} for a comprehensive survey. The origins of these schemes can be traced to the work of K\"orner and Marton \cite{km79}, who showed that linear binning is optimal for the distributed compression of the parity of a doubly symmetric binary source.


\section{Symmetric Gaussian $K$-User Interference Channel}\label{s:kuseric}

\subsection{Problem Statement} \label{s:probstate}

We begin with some notational conventions. We will denote column vectors with boldface lowercase letters and matrices with boldface uppercase letters. For instance, $\ba \in \ZZ^K$ and $\bA \in \ZZ^{K \times K}$. Let $\| \ba \| = \sqrt{\sum_{k=1}^K a_k^2}$ denote the $\ell_2$-norm of the vector $\ba$. Also, let $\mathbf{0}$ denote the zero vector and $\bI_{K \times K}$ denote the identity matrix of size $K$. We use $\lfloor\cdot\rceil$ to denote rounding to the nearest integer, $\lfloor\cdot\rfloor$ to denote the floor operation and $\lceil\cdot\rceil$ for the ceiling operation. In general, the letters $a$ and $b$ are used in this paper whenever the variables they describe are integer valued.
All logarithms are to base $2$. We also occasionally use the notation $\log^+(x)\triangleq\max(0,\log(x))$. All measures in this paper are Lebesgue measures.

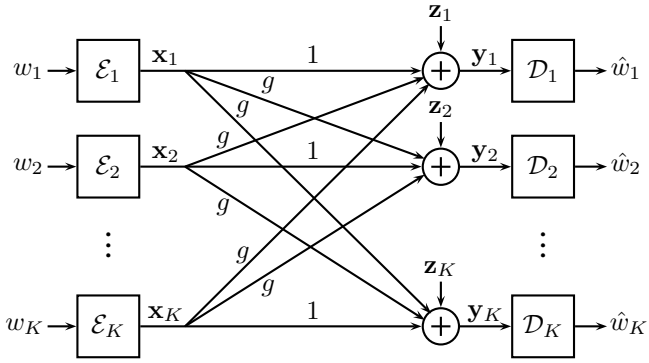
\begin{figure}[h]
\psset{unit=.85mm}
\begin{center}
\begin{pspicture}(0,10)(95,65)

\rput(0.5,55){$w_1$} \psline{->}(3.5,55)(8,55) \psframe(8,50)(18,60)
\rput(13,55){$\mathcal{E}_1$} \rput(22,57){$\bx_1$}
\psline{-}(18,55)(25,55)
\psline{->}(25,55)(62,55)
\rput(45,57.5){$1$}
\psline{->}(25,55)(62.5,41.25)
\rput(38,52.75){$g$}
\psline{->}(25,55)(63.5,17)
\rput(34,49){$g$}

\rput(0.5,40){$w_2$} \psline{->}(3.5,40)(8,40) \psframe(8,35)(18,45)
\rput(13,40){$\mathcal{E}_2$} \rput(22,42){$\bx_2$}
\psline{-}(18,40)(25,40)
\psline{->}(25,40)(62,40)
\rput(45,42.5){$1$}
\psline{->}(25,40)(62.5,54)
\rput(31,44.25){$g$}
\psline{->}(25,40)(62.5,16)
\rput(31,33.75){$g$}

\rput(13,29){\Large{$\vdots$}}
\rput(81,29){\Large{$\vdots$}}

\rput(0,15){$w_K$} \psline{->}(3.5,15)(8,15) \psframe(8,10)(18,20)
\rput(13,15){$\mathcal{E}_K$} \rput(22,17){$\bx_K$}
\psline{-}(18,15)(25,15)
\psline{->}(25,15)(62,15)
\rput(45,17.5){$1$}
\psline{->}(25,15)(62.5,38.75)
\rput(34,26.25){$g$}
\psline{->}(25,15)(63.25,53)
\rput(38,21){$g$}

\rput(10,0){
\pscircle(55,55){3} \psline{-}(55,53.5)(55,56.5) \psline{-}(53.5,55)(56.5,55)
\psline{->}(55,62)(55,58) \rput(55,64){$\bz_1$}
}
\psline{->}(68,55)(76,55) \rput(72,57){$\by_1$}

\rput(0,5){
\rput(10,0){
\pscircle(55,35){3} \psline{-}(55,33.5)(55,36.5) \psline{-}(53.5,35)(56.5,35)
\psline{->}(55,42)(55,38) \rput(55,44){$\bz_2$}
}
\psline{->}(68,35)(76,35) \rput(72,37){$\by_2$}
}
\rput(10,0){
\pscircle(55,15){3} \psline{-}(55,13.5)(55,16.5) \psline{-}(53.5,15)(56.5,15)
\psline{->}(55,22)(55,18) \rput(55,24){$\bz_K$}
}
\psline{->}(68,15)(76,15) \rput(72,17){$\by_K$}

\psframe(76,50)(86,60) \rput(81,55){$\mathcal{D}_1$}
\psline{->}(86,55)(91,55) \rput(94,55.5){${\hat{w}}_1$}

\psframe(76,35)(86,45) \rput(81,40){$\mathcal{D}_2$}
\psline{->}(86,40)(91,40) \rput(94,40.5){${\hat{w}}_2$}

\psframe(76,10)(86,20) \rput(81,15){$\mathcal{D}_K$}
\psline{->}(86,15)(91,15) \rput(94.5,15.5){${\hat{w}}_K$}

\end{pspicture}
\end{center}
\caption{Block diagram of a symmetric Gaussian $K$-user interference channel.}\label{f:symIC}
\end{figure}
 

There are $K$ transmitter-receiver pairs that wish to simultaneously communicate across a shared channel over $n$ time slots, where the channel gains are constant over all $n$ channel uses. We assume a real-valued channel model throughout.

\begin{definition}[Messages] Each transmitter has a {\em message} $w_k$ drawn independently and uniformly over $\{1,2,\ldots,2^{n\rsym}\}$.
\end{definition}

\begin{definition}[Encoders] Each transmitter is equipped with an {\em encoder}, $\mathcal{E}_k: \{1,2,\ldots, 2^{n\rsym}\} \rightarrow \RR^n$, that maps its message into a length-$n$ channel input $\bx_k = \mathcal{E}_k(w_k)$ that satisfies the power constraint,
\begin{align*}
\| \bx_k \|^2 \leq n \Tsnr
\end{align*} where $\Tsnr > 0$ is the signal-to-noise ratio.
\end{definition}

\begin{definition}[Channel Model] The channel output at each receiver is a noisy linear combination of its desired signal and the sum of the interfering terms, of the form
\begin{align}
\by_k = \bx_k + g \sum_{\ell \neq k} \bx_\ell + \bz_k \ ,\label{symICeq}
\end{align} where $g > 0$ parametrizes the interference strength and $\bz_k$ is an i.i.d. Gaussian vector with mean $0$ and variance $1$. We define the {\em interference-to-noise ratio} to be
\begin{align*}
\Tinr \triangleq g^2 \Tsnr
\end{align*} and the {\em interference level} to be
\begin{align*}
\alpha \triangleq \frac{\log (\Tinr)}{\log (\Tsnr)} \ .
\end{align*}
\end{definition}

\begin{remark}
Note that our definition of $\Tinr$ ignores the fact that there are $K-1$ interferers observed at each receiver. This is for two reasons. First, this definition parallels that of the two-user case \cite{etw08}, which will make it easier to compare the two rate regions. Second, the receivers will often be able to treat the interference as stemming from a single effective transmitter, via interference alignment. Of course, this is not the case when the receiver treats the interference as noise, as discussed in Section~\ref{s:noisyinterferenceregime}.
\end{remark}

\begin{definition}[Decoders] Each receiver is equipped with a {\em decoder}, $\mathcal{D}_k:\RR^n \rightarrow \{1,2,\ldots,2^{n\rsym}\}$, that produces an estimate $\hat{w}_k = \mathcal{D}_k(\by_k)$ of its desired message $w_k$.
\end{definition}

\begin{definition}[Symmetric Capacity] A symmetric rate $\rsym$ is {\em achievable} if, for any $\epsilon > 0$ and $n$ large enough, there exist encoders and decoders that can attain probability of error at most $\epsilon$,
\begin{align*}
\Pr\Big(\{ \hat{w}_1 \neq w_1 \} \cup \cdots \cup \{ \hat{w}_K \neq w_K \} \Big) < \epsilon \ .
\end{align*} The {\em symmetric capacity} $\csym$ is the supremum of all achievable symmetric rates.
\end{definition}

\vspace{2mm}

\begin{remark}
Due to the symmetry of the channel, the symmetric capacity is equal to the sum capacity, normalized by the number of users. To see this, assume that the users employ different rates and that a rate tuple $(R_1,R_2,\ldots, R_K)$ is achievable. Since each transmitter-receiver pair sees the same effective channel, we can simply exchange the encoders and decoders to achieve the rate tuple $(R_{\pi(1)}, R_{\pi(2)}, \ldots, R_{\pi(K)})$ for any permutation $\pi$. By time-sharing across all permutations, we find that each user can achieve $\frac{1}{K}\sum_{k=1}^K R_k$, corresponding to a symmetric rate. Thus, the sum of any achievable rate tuple is upper bounded by $K \csym$.
\end{remark}

\begin{definition}[Generalized Degrees-of-Freedom] The {\em generalized degrees-of-freedom} (GDoF) specifies the fraction of the point-to-point Gaussian capacity that can be attained per user for a given interference level $\alpha \geq 0$ as $\Tsnr$ tends to infinity,
\begin{align*}
d(\alpha) = \lim_{\Tsnr \rightarrow \infty} \frac{\csym}{\frac{1}{2}\log(1 + \Tsnr)} \ .
\end{align*}
\end{definition}

\subsection{Approximate Sum Capacity} \label{s:sumcapacity}

As shown by Jafar and Vishwanath \cite[Theorem 3.1]{jv10}, the GDoF of the symmetric $K$-user interference channel is identical to that of the two-user channel, except for a singularity at $\alpha = 1$,
\begin{align}
\label{symIC-GDoF}
d(\alpha) = \begin{cases}
1 - \alpha & 0 \leq \alpha < \frac{1}{2}  \text{~~(noisy)}\\
\alpha & \frac{1}{2} \leq \alpha < \frac{2}{3}  \text{~~(weak)}\\
1 - \frac{\alpha}{2} & \frac{2}{3} \leq \alpha < 1  \text{~~(moderately weak)}\\
\frac{1}{K} & \alpha = 1 \\
\frac{\alpha}{2} & 1 < \alpha < 2  \text{~~(strong)}\\
1 & \alpha \geq 2 \text{~~(very strong).}
\end{cases}
\end{align} See Figure \ref{f:gdof} for a plot. Notice that since $\Tsnr$ is taken to infinity, the GDoF characterization treats all channel gains $g$ that do not scale with $\Tsnr$ as a single point at $\alpha = 1$. A finer view of this regime is possible at high SNR by simply setting $g$ to be some fixed value and then taking $\Tsnr$ to infinity, corresponding to the standard notion of degrees-of-freedom. Surprisingly, this degrees-of-freedom characterization is discontinuous at rational values of $g$ \cite{eo09}. This presents an obstacle towards a clean capacity approximation at finite SNR.

\begin{figure}[!htb]
\psset{unit=.7mm}
\begin{center}
\begin{pspicture}(-5,-5)(120,80)

\psline{->}(0,0)(120,0)
\psline{->}(0,0)(0,80)

\rput(118,-5){$\alpha$}
\psline{-}(100,-2)(100,2)
\rput(100,-5){$2$}
\psline{-}(50,-2)(50,2)
\rput(50,-5){$1$}
\psline{-}(33.3,-2)(33.3,2)
\rput(33.3,-6){$\frac{2}{3}$}
\psline{-}(25,-2)(25,2)
\rput(25,-6){$\frac{1}{2}$}

\rput(7,80){$d(\alpha)$}
\psline{-}(-2,75)(2,75)
\rput(-4,75){$1$}
\psline{-}(-2,50)(2,50)
\rput(-4,50){$\frac{2}{3}$}
\psline{-}(-2,37.5)(2,37.5)
\rput(-4,37.5){$\frac{1}{2}$}
\psline{-}(-2,15)(2,15)
\rput(-4.5,15){$\frac{1}{K}$}

\psline[linewidth=2pt]{-}(0,75)(25,37.5)(33.3,50)(50,37.5)(100,75)(120,75)
\pscircle[linewidth=1pt,fillcolor=white,fillstyle=solid](50,37.5){1.5}
\pscircle[linewidth=1pt,fillcolor=black,fillstyle=solid](50,15){1.5}

\end{pspicture}
\end{center}
\caption{Generalized degrees-of-freedom for the symmetric Gaussian $K$-user interference channel.}\label{f:gdof}
\end{figure}
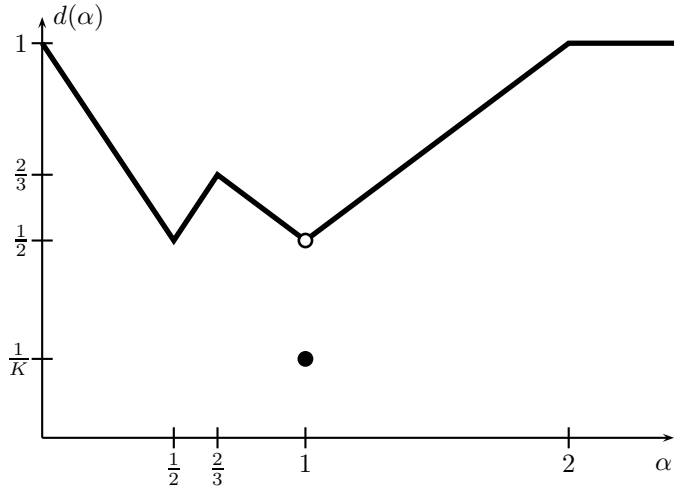
 

\begin{figure*}[!htb]
\begin{center}
\subfloat[$\Tsnr = 20$dB]{
\includegraphics[width=0.45\textwidth]{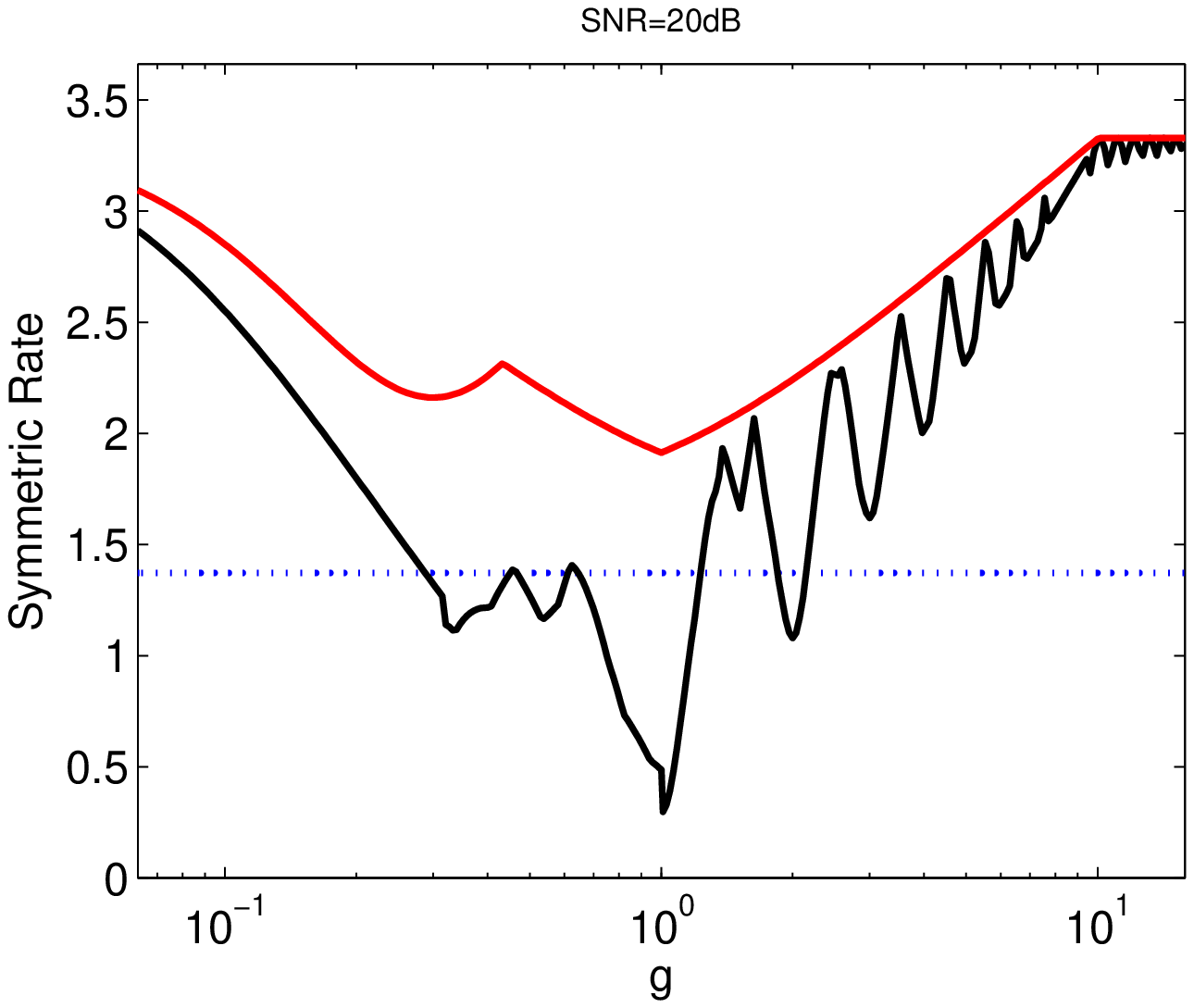}
\label{f:rate20dB}}
\qquad
\subfloat[$\Tsnr = 35$dB]{
\includegraphics[width=0.45\textwidth]{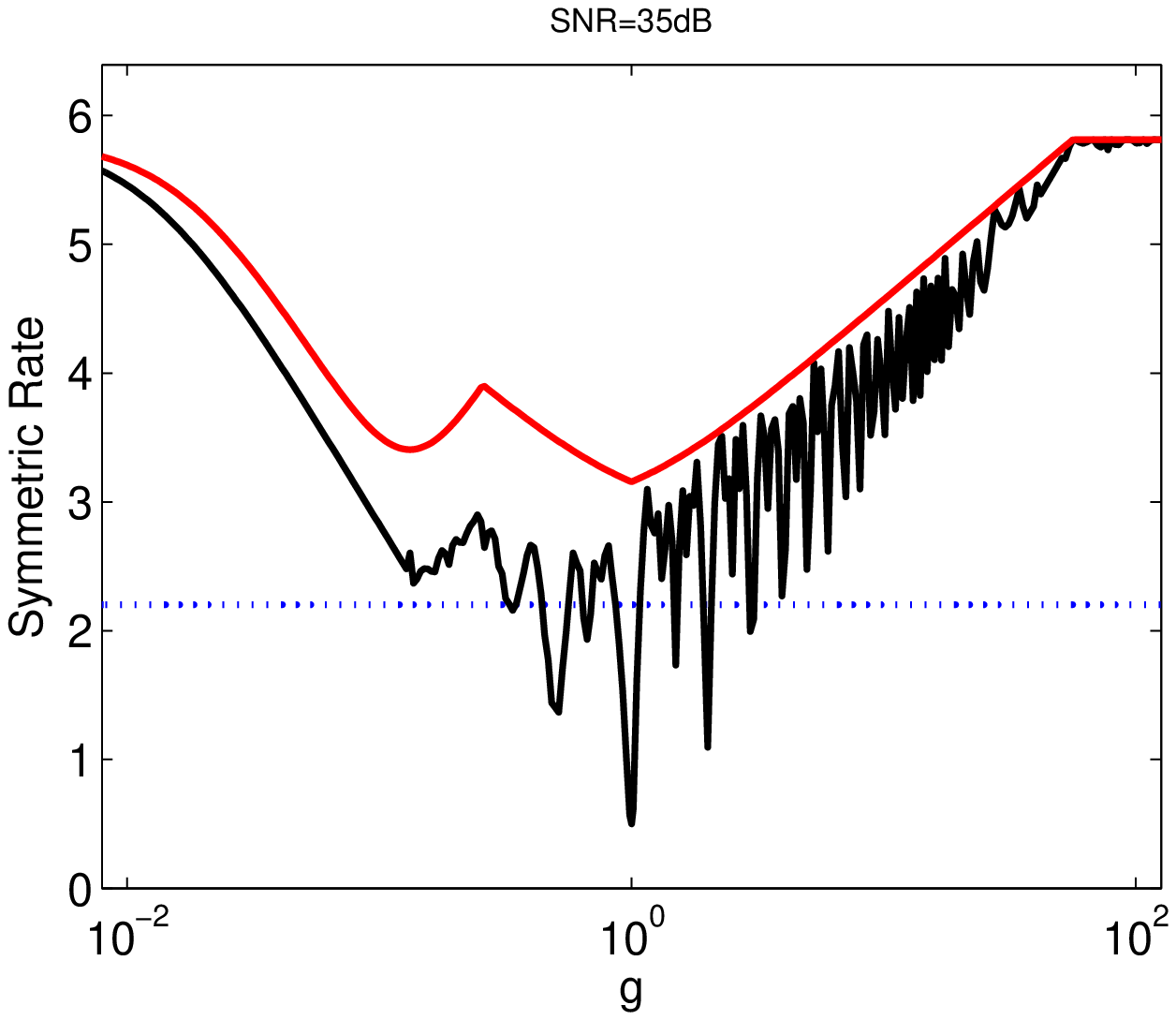}
\label{f:rate35dB}}

\subfloat[$\Tsnr = 50$dB]{
\includegraphics[width=0.45\textwidth]{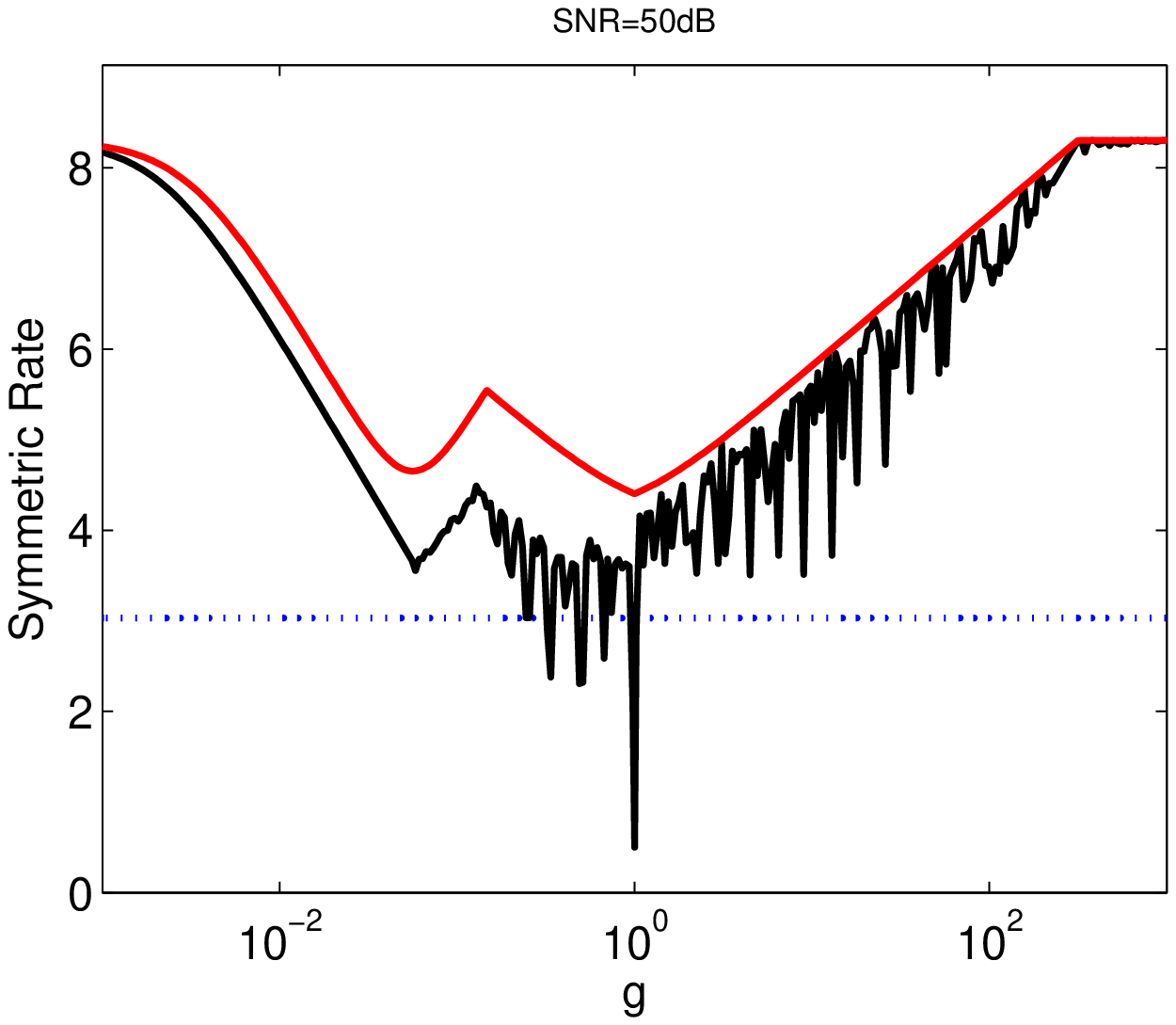}
\label{f:rate50dB}}
\qquad
\subfloat[$\Tsnr = 65$dB]{
\includegraphics[width=0.45\textwidth]{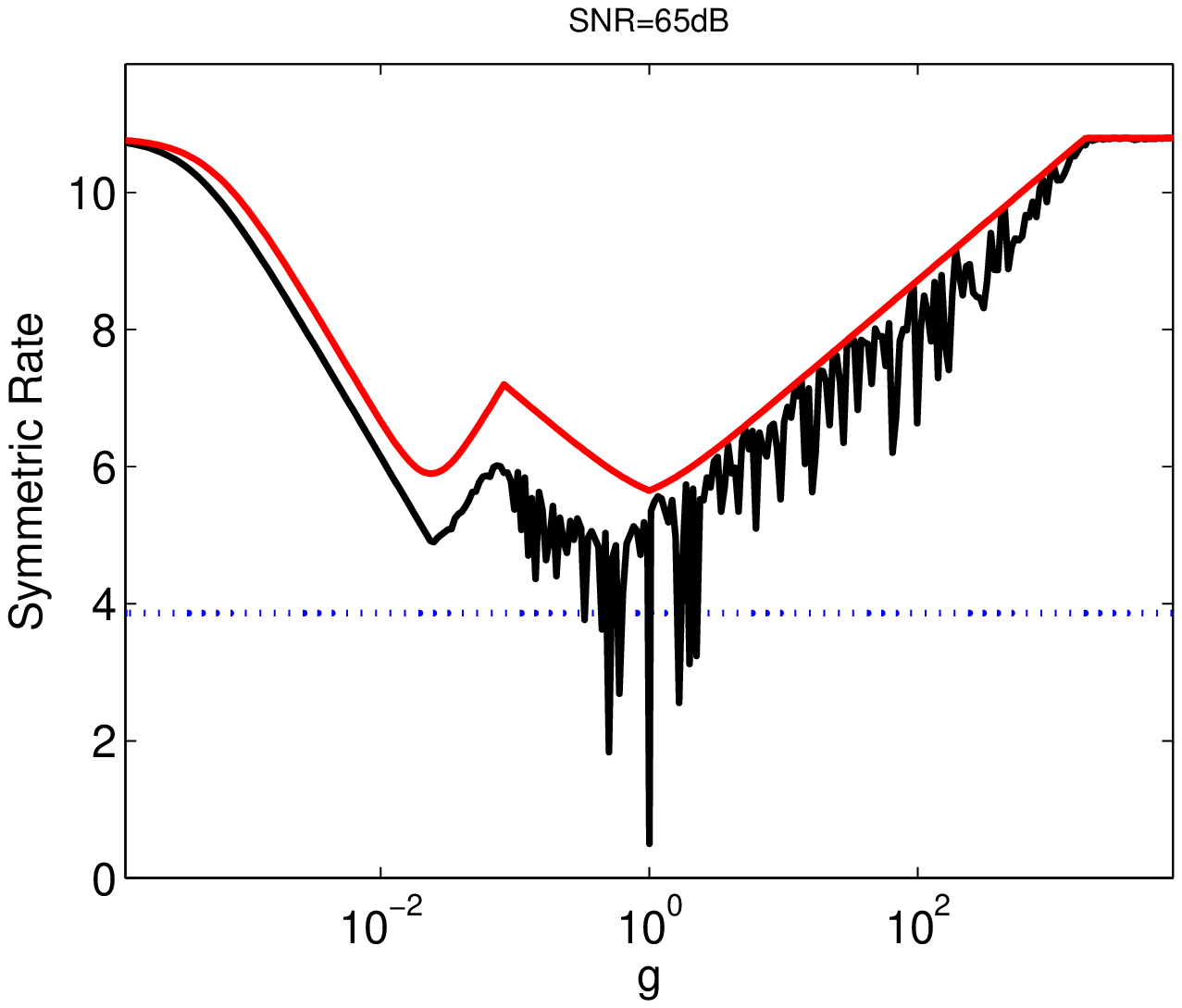}
\label{f:rate65dB}}
\end{center}
\caption{Upper and lower bounds on the symmetric capacity of a $3$-user symmetric Gaussian interference channel with respect to the cross-gain $g$. The upper bound (red line) is given by \eqref{upperBoundsEtkin} and the lower bound (black line) is the maximum of the achievable rates from Theorem \ref{thm:SymICnoLayeres} and Corollary \ref{cor:SymICHK}, which were computed numerically, and Theorem~\ref{thm:noisyInterference}. The lower bounds from Theorem~\ref{thm:approxCapacity} are not plotted in this figure.
For reference, we have also plotted the symmetric rate achievable via time-division (dotted blue line).}
\label{f:symICrates}
\end{figure*}

To overcome this difficulty, our approximations allow for the possibility of an {\em outage set}, which is explicitly characterized. Specifically, in the regime around $\alpha = 1$, our capacity results take the following shape: for any constant $c > 0$, the capacity is approximated within at most $c + 9 + \log{K}$ bits over the entire range of $\Tsnr$, and all channel gains $g$, except for a set of measure $\mu(c)$ which vanishes rapidly with $c$. This type of capacity approximation has also been used by Niesen and Maddah-Ali for the two-user Gaussian X channel \cite{nm13} and seems to arise from the capacity region itself, not just the lower bound. That is, it appears that the capacity may in fact simultaneously vary rapidly with the fine scale of the channel gains (e.g., the distance to an appropriately scaled integer) and slowly on the coarse scale (e.g., relative interference strength). In the high SNR limit, this behavior shows up as a discontinuity on the rationals but, at reasonable SNRs, our achievable scheme shows that this variation is in fact fairly smooth. The theorem below captures our capacity approximations in a simple form. All upper bounds in the theorem are based on~\cite{etw08} and~\cite{jv10}. The lower bound for the noisy interference regime is straightforward and the lower bound for the very strong interference regime is (a slight variation of) that of~\cite{sjvj08}. Our contibution is in the lower bounds for the weak and strong interference regimes.

\begin{theorem}
\label{thm:approxCapacity}
The symmetric capacity of the symmetric Gaussian $K$-user interference channel can be lower and upper bounded as follows:
\begin{itemize}
\item {\em Noisy Interference Regime,} $0 \leq \alpha < \frac{1}{2}$,
\begin{align*}
&\frac{1}{2}\log\left(1+\frac{\Tsnr}{1+\Tinr}\right)-\frac{1}{2}\log(K-1)\\ &~~~ \qquad \qquad \qquad \leq\csym<\frac{1}{2}\log\left(1+\frac{\Tsnr}{1+\Tinr}\right)+1
\end{align*}
\item {\em Weak Interference Regime,} $\frac{1}{2} \leq \alpha < \frac{2}{3}$,
\begin{align*}
\frac{1}{2} \log^+(\Tinr) - \frac{7}{2}- \log(K) \leq \csym \leq \frac{1}{2} \log^+(\Tinr) + 1
\end{align*} for all channel gains.
\item {\em Moderately Weak Interference Regime,} $\frac{2}{3} \leq \alpha < 1$,
\begin{align*}
&\frac{1}{2} \log^+\left(\frac{\Tsnr}{\sqrt{\Tinr}}\right) - c-8-\log(K) \\ & ~\qquad\qquad \qquad \qquad \leq \csym  \leq \frac{1}{2} \log^+\left(\frac{\Tsnr}{\sqrt{\Tinr}}\right)+1
\end{align*}
for all channel gains except for an outage set of measure $\mu < 2^{-c}$ for any $c > 0$.
\item {\em Strong Interference Regime,} $1 \leq \alpha < 2$,
\begin{align*}
\frac{1}{4} \log^+(\Tinr) - \frac{c}{2}-3 \leq \csym \leq \frac{1}{4} \log^+(\Tinr) + 1
\end{align*} for all channel gains except for an outage set whose measure is a fraction of $2^{-c}$ of the interval $1<|g|<\sqrt{\Tsnr}$, for any $c > 0$.

\item {\em Very Strong Interference Regime,} $\alpha \geq 2$,
\begin{align*}
\frac{1}{2} \log(1+\Tsnr)-1\leq\csym \leq \frac{1}{2} \log(1+\Tsnr)
\end{align*}
\end{itemize}
\end{theorem}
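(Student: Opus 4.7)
The proof proceeds by treating each of the five regimes separately. The upper bounds in all regimes follow directly from the two-user interference channel bounds of Etkin, Tse, and Wang \cite{etw08}, combined with the observation of Jafar and Vishwanath \cite{jv10} that any bound derived by supplying a genie of the interfering codewords to a subset of receivers applies equally well to the $K$-user case. Thus the main technical work lies in establishing the lower bounds.

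For the noisy regime I would use i.i.d.\ Gaussian codebooks and treat the aggregated interference as noise; the $-\frac{1}{2}\log(K-1)$ loss relative to the two-user case arises from replacing $\Tinr$ with $(K-1)\Tinr$ in the signal-to-interference-plus-noise ratio. For the very strong regime I would use the successive scheme reviewed in Section \ref{s:vsregime}: since $\sqrt{\Tinr}\geq \Tsnr$, each receiver can first decode the aligned interference $\sum_{\ell\neq k}\bx_\ell$ while treating $\bx_k$ as noise, then subtract it and recover $\bx_k$ over a clean AWGN channel. This yields a rate within one bit of the point-to-point capacity.

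For the strong, moderately weak, and weak regimes I would invoke the achievability statements of Theorem \ref{thm:SymICnoLayeres} and Corollary \ref{cor:SymICHK}, which express the achievable symmetric rate as the second-best (strong) or the sum of the second and third best (weak regimes) computation rates among linearly independent integer coefficient vectors. To convert these into the explicit closed-form bounds in Theorem \ref{thm:approxCapacity}, I would rely on two tools from earlier sections of the paper: the compute-and-forward transform of Section \ref{sub:coftransform}, which diagonalizes the relationship between the $K$ highest computation rates and the channel, and the sum-rate result of Theorem \ref{thm:SumRate}, which asserts that these $K$ rates sum to within a constant of the MAC sum capacity. The plan is then to \emph{upper bound} the single best computation rate (an easier task), and subtract it from the sum-rate lower bound to obtain the needed lower bound on the second (or second plus third) rate.

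The main obstacle is characterizing the outage set of channel gains for which the best integer approximation of $g$ (respectively $1/g$) is atypically good, inflating the top computation rate and hence shrinking the others. This is a Diophantine approximation question: the set of $g\in(1,\sqrt{\Tsnr})$ admitting integers $(a,b)$ with $|bg-a|<\delta$ has Lebesgue measure of order $\delta\sqrt{\Tsnr}$, so choosing $\delta\sim 2^{-c}/\sqrt{\Tsnr}$ caps the outage measure at $2^{-c}$ and contributes a $c/2$ penalty to the rate. A parallel argument, with $g$ replaced by $1/g$ and the scaling adjusted to the appropriate subinterval of $(0,1)$, handles the moderately weak and weak regimes. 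The additive constants ($\tfrac{7}{2}$, $8$, $c+9$, and the $\log K$ terms) then accumulate mechanically from the compute-and-forward nested lattice gaps, the power split between public and private codewords in the lattice Han--Kobayashi scheme (with private codewords set just below the noise floor at unintended receivers), and the $\log K$ slack in Theorem \ref{thm:SumRate}.
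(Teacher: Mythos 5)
Your proposal captures the paper's core strategy faithfully: the upper bounds come from \cite{etw08}/\cite{jv10}, and the lower bounds in the intermediate regimes are obtained by \emph{subtracting} an upper bound on the best computation rate from the Minkowski-based sum-rate lower bound (Theorem~\ref{thm:SumRate}/\ref{thm:EffSumRate}), with the outage set controlled by a Diophantine-approximation argument. This is exactly how Sections~\ref{s:strongregime}--\ref{s:weakregime} proceed. Two divergences from the paper are worth flagging, neither fatal: for the very strong regime the paper deliberately does \emph{not} invoke the successive scheme of~\cite{sjvj08} but instead re-derives the result through the single-layer Theorem~\ref{thm:SymICnoLayeres}, so as to show that the compute-and-forward-transform framework covers every regime uniformly; and the noisy-regime lower bound in the paper is established with lattice codes (Theorem~\ref{thm:noisyInterference}) rather than i.i.d.~Gaussian codebooks, again for uniformity, though both clearly yield the same rate.

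There are, however, two substantive inaccuracies you should repair. First, you lump the weak regime ($\tfrac12\le\alpha<\tfrac23$) with the moderately weak regime as both requiring an outage set, but the weak-regime bound in Theorem~\ref{thm:approxCapacity} holds \emph{for all channel gains}. The reason is that in this regime the per-interval lower bounds on $\sigma^2_{\text{HK}}$ (equations~\eqref{sigmaBoundInterval1b}--\eqref{sigmaBoundInterval4b}) never invoke a Diophantine estimate; the $\min_\beta$ problem can be controlled for every $g$ because the effective gain $1/\sqrt{g^2\Tsnr-1}$ is already small enough. Second, your quantitative description of the Diophantine measure is off in a way that would break the constant-gap accounting: the set of $g$ with $|qg-a|<\delta$ for \emph{some} integers has full measure unless the denominator $q$ is restricted to $q\le q_{\max}$, and the paper obtains this restriction from the structure of the $\beta$-interval decomposition, with $q_{\max,b}\sim\Tsnr^{1/4}/\sqrt{b}$ and precision $\Phi_b\sim\sqrt{b}\,\Tsnr^{-1/4-\delta'/2}$ chosen so that $q_{\max,b}\Phi_b=\Tsnr^{-\delta'}$ is constant across unit intervals; your suggested scaling $\delta\sim 2^{-c}/\sqrt{\Tsnr}$ (precision $\sim\Tsnr^{-1/2}$) would both over-constrain the outage set and misestimate the $c/2$ penalty. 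Also, the paper never substitutes $1/g$ in the moderately weak regime; it instead partitions $(\Tsnr^{-1/6},1)$ into dyadic intervals $[2^{-b},2^{-b+1})$ and rescales by $2^b$, which is not equivalent. Finally, the extraction of $R^{\text{HK}}_{\text{comp},2}+R^{\text{HK}}_{\text{comp},3}$ rather than $2R^{\text{HK}}_{\text{comp},3}$ from the lattice Han--Kobayashi scheme requires the algebraic successive cancellation of Theorem~\ref{thm:MACeff} (not just Corollary~\ref{cor:minCompRateAchievable}), a point that deserves explicit mention in the proof since without it the constants would be worse.
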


\begin{remark}
Our characterization of the outage set in the strong and moderately weak interference regimes is in fact somewhat stronger than the characterization given in Theorem~\ref{thm:approxCapacity}. Specifically, for the strong interference regime we show that, for any integer $b$ in the range $[1,\sqrt{\Tsnr})$ and constant gap $c > 0$, the measure of the set of channel coefficients in the interval $g\in[b,b+1)$ for which our inner bound does not hold is smaller than $2^{-c}$. Similarly, for the moderately weak interference regime we show that, for any integer $b$ in the range $[1,1/6\log(\Tsnr))$ and constant gap $c >0$, the measure of the set of channel coefficients the interval $g\in[2^{-b},2^{-b+1})$ for which our inner bound does not hold is smaller than $2^{-(c+b)}$. Using this refined characterization, our results can be interpreted in the following way: For all values of $\alpha$ except for an outage set with Lebesgue measure smaller than $2^{-c}$, the symmetric capacity of the symmetric Gaussian $K$-user interference channel is
\begin{align}
\csym=\frac{d(\alpha)}{2}\log(\Tsnr)\pm\delta(K,c),\nonumber
\end{align}
where $0\leq \delta(K,c)<c+\log(K)+10$ and $d(\alpha)$ is given in~\eqref{symIC-GDoF}.
\end{remark}

\section{Preliminaries} \label{s:cf}

In this section, we give some basic definitions and results that will be extensively used in the sequel.

\subsection{$K$-user Gaussian MAC}
Consider the $K$-user Gaussian MAC
\begin{align}
\by=\sum_{k=1}^K h_k \bx_k +\bz,\label{MACchannel}
\end{align}
where the vector $\bh=[h_1 \ \cdots \ h_K]^T\in\RR^K$ represents the channel gains, $\bx_k\in\RR^n$, $k=1,\ldots,K$, are the channel inputs, $\bz\in\RR^n$ is additive white Gaussian noise (AWGN) with zero mean and unit variance and $\by\in\RR^n$ is the channel output. Without loss of generality, we assume all $K$ users are subject to the same power constraint\footnote{As otherwise the different powers can be absorbed into the channel gains.}
\begin{align}
\|\bx_k\|^2\leq n\Tsnr, \ k=1,\ldots,K.\label{powerConstraint}
\end{align}
The capacity region of the channel~\eqref{MACchannel} is known (see e.g.,~\cite[Theorem 15.3.6]{coverthomas}) to be the set of all rate tuples $(R_1,\ldots, R_K)$ satisfying
\begin{align}
\sum_{k\in\mathcal{S}} R_k<\frac{1}{2}\log\left(1+\Tsnr\sum_{k\in\mathcal{S}}|h_k|^2\right)
\end{align}
for all subsets $\mathcal{S}\subseteq\{1,\ldots,K\}$. The achievability part of the capacity theorem is established using i.i.d. Gaussian codebooks for all users. Motivated by lattice interference alignment, we are interested in establishing the achievability of certain rate tuples under the constraint that the codebooks employed by the $K$ users form a chain of nested lattice codes.

\begin{remark} Recall that the corner points of the capacity region are achievable via successive interference cancellation, either using i.i.d. Gaussian codebooks \cite[Section 15.3.6]{coverthomas} or nested lattice codebooks \cite[Section VII.A]{ng11IT}. Time-sharing between these corner points suffices to reach any point in the capacity region. However, this time-sharing approach does not suffice for an interference channel, as each receiver will require a different time allocation between users.
\end{remark}

\subsection{Nested Lattice Codes}

We employ the nested lattice framework originally proposed in \cite{ez04}. A lattice $\Lambda$ is a discrete subgroup of $\RR^n$ which is closed under reflection and real addition. Formally, for any $\bt_1,\bt_2\in\Lambda$, we have that $-\bt_1,-\bt_2\in\Lambda$ and $\bt_1+\bt_2\in\Lambda$. Note that by definition the zero vector $\mathbf{0}$ is always a member of the lattice. Any lattice $\Lambda$ in $\RR^n$ is spanned by some $n\times n$ matrix $\bG$ such that \begin{align}
\Lambda=\{\mathbf{t}=\bG\mathbf{q}:\mathbf{q}\in\ZZ^n\}.\nonumber
\end{align}
We say that a lattice is full-rank if its spanning matrix $\bG$ is full-rank.

We denote the nearest neighbor quantizer associated with the lattice $\Lambda$ by
\begin{align}
\Ql(\bx)=\arg\min_{\mathbf{t}\in\Lambda}\|\bx-\mathbf{t}\|.
\label{NNquantizer}
\end{align}
The Voronoi region of $\Lambda$, denoted by $\CV$, is the set of all points in $\RR^n$ which are quantized to the zero vector, where ties in~\eqref{NNquantizer} are broken in a systematic manner.
The modulo operation returns the quantization error w.r.t. the lattice,
\begin{align}
\left[\bx\right]\Mod=\bx-\Ql(\bx),\nonumber
\end{align}
and satisfies the distributive law,
\begin{align}
\big[a[\bx]\Mod+b[\by]\Mod\big]\Mod=\left[a\bx+b\by\right]\Mod,\nonumber
\end{align}
for all $a,b\in\ZZ$ and $\bx,\by\in\RR^n$.

A lattice $\Lambda$ is said to be nested in $\Lambda_1$ if $\Lambda\subseteq\Lambda_1$. The coding schemes presented in this paper utilize a chain of $K+1$ nested lattices satisfying
\begin{align}
\Lambda\subseteq\Lambda_K\subseteq\cdots\subseteq\Lambda_1.\label{nestedChain}
\end{align}
From these lattices, we construct $K$ codebooks, one for each user. Specifically, user $k$ is allocated the codebook $\mathcal{L}_k=\Lambda_{\map(k)}\cap\CV$, where $\CV$ is the Voronoi region of $\Lambda$ and the function $\map(k):\{1,\ldots,K\}\rightarrow\{1,\ldots,K\}$ maps between users and lattices. The rate of each codebook $\mathcal{L}_k$ is
\begin{align}
R_k=\frac{1}{n}\log\big|\Lambda_{\map(k)}\cap\CV\big|.\nonumber
\end{align}
User $k$ encodes its message into a lattice point from its codebook, $\bt_k\in\mathcal{L}_k$. Each user also has a random\footnote{It can be shown that these random dithers can be replaced with deterministic ones, meaning that no common randomness is required.} dither vector $\bd_k$ which is generated independently and uniformly over $\CV$. These dithers are made available to the decoder.
The signal transmitted by user $k$ is
\begin{align}
\bx_k=\left[\bt_k-\bd_k\right]\Mod.\nonumber
\end{align}

\begin{remark} \label{r:intshaping}
The nested lattice construction from \cite{ez04} employs Construction A. To create each fine lattice, this procedure first embeds codewords drawn from a linear code into the unit cube, and then applies the generator matrix for the coarse lattice $\Lambda$. As shown in \cite{ez04}, this ensemble of nested lattice codes can approach the capacity of a point-to-point Gaussian channel. If the integers $\mathbb{Z}^n$ are selected as the coarse lattice, the resulting nested lattice code is equivalent to a linear code coupled with a pulse amplitude modulation (PAM) constellation. Furthermore, the $\hspace{-0.07in}\mod\Lambda$ operation simplifies to the quantization error from rounding to the integers. It can be shown that the cost of this simplification is only the shaping gain, which corresponds to at most $1/2\log(2\pi e/12)\simeq 0.255$ bits per channel use~\cite{oe12}.
\end{remark}

\subsection{Compute-and-Forward}

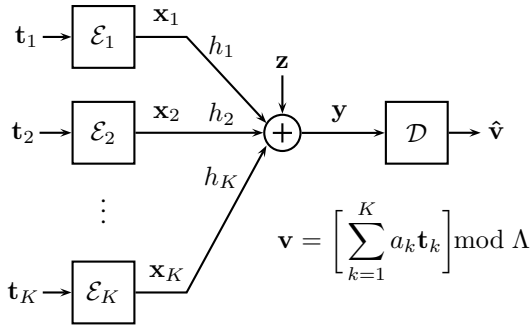
\begin{figure}[h]
\psset{unit=.85mm}
\begin{center}
\begin{pspicture}(19,10)(94,60)

\rput(20,55){$\mathbf{t}_1$} \psline{->}(22.5,55)(27,55) \psframe(27,50)(37,60)
\rput(32,55){$\mathcal{E}_1$} \rput(42,58){$\mathbf{x}_1$}
\psline[linecolor=black]{->}(37,55)(45,55)(57.5,41.5)
\rput(50.5,53.5){\textcolor{black}{$h_1$}}

\rput(19.5,40){$\mathbf{t}_2$} \psline{->}(22,40)(27,40) \psframe(27,35)(37,45)
\rput(32,40){$\mathcal{E}_2$} \rput(42,43){$\mathbf{x}_2$}
\psline{->}(37,40)(57,40)
\rput(50.5,43){\textcolor{black}{$h_2$}}

\rput(19.5,15){$\mathbf{t}_K$} \psline{->}(22.5,15)(27,15) \psframe(27,10)(37,20)
\rput(32,15){$\mathcal{E}_K$} \rput(42,18){$\mathbf{x}_K$}
\psline[linecolor=black]{->}(37,15)(45,15)(57.5,38)
\rput(50.5,33){\textcolor{black}{$h_K$}}

\rput(32,29){{$\vdots$}}


\rput(0,5){
\rput(5,0){
\pscircle(55,35){3} \psline{-}(55,33.5)(55,36.5) \psline{-}(53.5,35)(56.5,35)
\psline{->}(55,44)(55,38) \rput(55,46){$\mathbf{z}$}
}
\psline{->}(63,35)(76,35) \rput(69,38){$\mathbf{y}$}
}

\psframe(76,35)(86,45) \rput(81,40){$\mathcal{D}$}
\psline{->}(86,40)(91,40) \rput(93.5,40.5){$\mathbf{\hat{v}}$}
\rput(79,23){{${\displaystyle \mathbf{v} = \bigg[ \sum_{k=1}^K a_k \mathbf{t}_k} \bigg]\hspace{-0.05in}\Mod$}}

\end{pspicture}
\end{center}
\caption{Compute-and-forward on a Gaussian multiple-access channel. The transmitters send lattice points $\mathbf{t}_k$ and the receiver decodes an integer combination of them, modulo the coarse lattice $\Lambda$. The rate is  determined by how closely the equation coefficients $a_k$ match the channel coefficients $h_k$.} \label{f:computeforward}
\end{figure}
 

Our objective is to communicate over the MAC using the compute-and-forward scheme from~\cite{ng11IT}. To this end, the receiver first decodes a linearly independent set of $K$ integer linear combinations of the lattice codewords. Afterwards, it solves this set of linear combinations for the lattice codewords. Assume that the receiver is interested in decoding the integer linear combination
\begin{align}
\bv=\left[\sum_{k=1}^K a_k\bt_k\right]\Mod\nonumber
\end{align}
with coefficient vector $\ba=[a_1 \  \cdots \ a_K]^T\in\ZZ^K$.
Following the scheme of~\cite{ng11IT},
the receiver scales the observation $\by$ by a factor $\beta$, removes the dithers, and reduces modulo $\Lambda$ to get
\begin{align}
\bs&=\left[\beta\by+\sum_{k=1}^K a_k\bd_k\right]\bmod\Lambda\nonumber\\
&=\left[\sum_{k=1}^K a_k \bx_k+\sum_{k=1}^K a_k\bd_k+\sum_{k=1}^K (\beta h_k-a_k )\bx_k+\beta\bz\right]\bmod\Lambda\nonumber\\
&=\left[\bv+\bz_{\text{eff}}(\bh,\ba,\beta)\right]\Mod,\label{CoFestimate}
\end{align}
where
\begin{align}
\bz_{\text{eff}}(\bh,\ba,\beta) = \sum_{k=1}^K (\beta h_k-a_k)\bx_k+\beta\bz\label{Zeff}
\end{align}
is effective noise. From~\cite{ng11IT}, we have that $\bz_{\text{eff}}(\bh,\ba,\beta)$ is statistically independent of $\bv$ and its effective variance, defined as
\begin{align}
\sigma^2_{\text{eff}}(\bh,\ba,\beta) \triangleq \frac{1}{n}\mathbb{E}\|\bz_{\text{eff}}(\bh,\ba,\beta)\|^2
\end{align}
is
\begin{align}
\sigma^2_{\text{eff}}(\bh,\ba,\beta) = \|\beta \bh -\ba\|^2\cdot\Tsnr+\beta^2.\label{ZeffVar}
\end{align} 
Let $k^*=\min_{k:a_k\neq 0}\map(k)$ be the index of the densest lattice participating in the integer linear combination $\bv$. The receiver produces an estimate for $\bv$ by applying to $\bs$ the lattice quantizer associated with $\Lambda_{k^*}$,
\begin{align}
\hat{\bv}=\left[Q_{\Lambda_{k^*}}(\bs)\right]\Mod.
\end{align}
Let $\CV_{k^*}$ be the Voronoi region of $\Lambda_{k^*}$, and note that the probability of decoding error is upper bounded by the probability that the effective noise lies outside the Voronoi region of $\Lambda_{k^*}$,
\begin{align}
\Pr\left(\hat{\bv}\neq\bv\right)\leq\Pr\left(\bz_{\text{eff}}(\bh,\ba,\beta)\notin\CV_{k^*}\right).
\end{align}
The next theorem summarizes and reformulates relevant results from Sections IV.C, IV.D, and V.A of~\cite{ng11IT}.
\begin{theorem}
\label{thm:CoF}
For any $\epsilon>0$ and $n$ large enough there exists a chain of $n$-dimensional nested lattices $\Lambda\subseteq\Lambda_K\subseteq\cdots\subseteq\Lambda_1$ forming the set of codebooks $\mathcal{L}_1,\ldots,\mathcal{L}_K$ having rates $R_1,\ldots,R_K$ and satisfying the power constraint~\eqref{powerConstraint} such that:
\begin{enumerate}[(a)]
\item For all channel vectors $\bh\in\RR^K$ and coefficient vectors $\ba\in\ZZ^K$, the average error probability in decoding the integer linear combination $\bv=\left[\sum_{k=1}^K a_k\bt_k\right]\Mod$
of transmitted lattice points $\bt_k\in\mathcal{L}_k$ can be made smaller than $\epsilon$ so long as the message rates do not exceed the computation rate,
\begin{align}
R_k&<R_{\text{comp}}(\bh,\ba,\beta) \triangleq \frac{1}{2}\log\left(\frac{\Tsnr}{\sigma^2_{\text{eff}}(\bh,\ba,\beta)}\right) \ ,\label{compRate}
\end{align}
for all $k$ such that $a_k \neq 0$ and some $\beta \in \RR$. \label{thm:CoF1}
\item The codebooks $\mathcal{L}_1,\ldots,\mathcal{L}_K$ are isomorphic to some set of linear codebooks $\mathcal{C}_1,\ldots,\mathcal{C}_K$ over the finite field $\Zp$, where $p$ is a sufficiently large prime number.\label{thm:CoF2}
\item For the same $p$, the equation $[p\cdot\bt]\Mod=\mathbf{0}$ holds $\forall\bt\in\Lambda_k$, $k=1,\ldots,K$.\label{thm:CoF3}
\end{enumerate}
\end{theorem}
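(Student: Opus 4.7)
The theorem is a consolidated restatement of results in~\cite{ng11IT}, so the plan is to repackage the construction and the error-probability analysis from that paper, making sure that a single nested lattice chain simultaneously delivers the algebraic structure in~(b)--(c) and the computation rate bound in~(a). I would begin with the Construction~A-based ensemble, read off the algebraic properties directly from it, and then carry over the effective-noise calculation to establish the rate bound.

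\textbf{Construction and parts (b)--(c).} Following~\cite{ng11IT}, fix a sufficiently large prime $p$ and an $n\times n$ generator matrix $\bG$ for the coarse lattice $\Lambda$, and draw nested linear codes $\mathcal{C}_K\subseteq\cdots\subseteq\mathcal{C}_1$ over $\Zp$ with rates calibrated to the desired $R_1,\ldots,R_K$. Define each fine lattice by Construction~A as $\Lambda_k=\bG\bigl(p^{-1}\mathcal{C}_k+\ZZ^n\bigr)$. The nesting $\Lambda\subseteq\Lambda_K\subseteq\cdots\subseteq\Lambda_1$ is then immediate, and~(b) follows because $\mathcal{L}_k=\Lambda_{\map(k)}\cap\CV$ is in natural bijection with $\mathcal{C}_{\map(k)}$. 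For~(c), observe that $p\Lambda_k=\bG\bigl(\mathcal{C}_k+p\ZZ^n\bigr)\subseteq \bG\ZZ^n=\Lambda$, so $[p\bt]\Mod=\mathbf{0}$ for every $\bt\in\Lambda_k$.

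\textbf{Part~(a): computation rate.} Next I would carry over the effective-noise analysis from Sections~IV.C--IV.D of~\cite{ng11IT}. The receiver's statistic $\bs$ in~\eqref{CoFestimate} has the form $[\bv+\bz_{\text{eff}}]\Mod$, where the crypto lemma from~\cite{ez04} guarantees that $\bz_{\text{eff}}$ is statistically independent of $\bv$ with second moment given by~\eqref{ZeffVar}. Decoding by quantizing $\bs$ to $\Lambda_{k^*}$ succeeds whenever $\bz_{\text{eff}}$ falls inside $\CV_{k^*}$. Because $\bz_{\text{eff}}$ is a dithered lattice-distributed vector rather than Gaussian, I would invoke the ensemble of Poltyrev-good nested lattices from~\cite{ng11IT} together with a Minkowski--Hlawka averaging argument to drive $\Pr(\bz_{\text{eff}}\notin\CV_{k^*})$ below any prescribed $\epsilon$ for $n$ large enough, provided the rate of $\Lambda_{k^*}$ does not exceed $\tfrac{1}{2}\log\bigl(\Tsnr/\sigma^2_{\text{eff}}(\bh,\ba,\beta)\bigr)$. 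Since $\Lambda_{k^*}$ is the densest lattice actively participating in $\bv$, this rate bound simultaneously constrains every $\mathcal{L}_k$ with $a_k\neq 0$.

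\textbf{Main obstacle.} The one subtle point — largely cosmetic — is uniformity of the chosen chain in $(\bh,\ba,\beta)$: the theorem asks for a single code that serves every channel realization and every integer combination. This is not a genuine difficulty, since the error analysis enters $(\bh,\ba,\beta)$ only through the scalar $\sigma^2_{\text{eff}}$, so a single realization of the ensemble that is Rogers-good for $\Lambda$ and Poltyrev-good for each $\Lambda_k$ handles all $(\bh,\ba,\beta)$ meeting~\eqref{compRate}. The heavy lifting — simultaneously achieving goodness for covering and for channel coding along the whole nested chain — is carried out in~\cite{ez04,ng11IT}, and what remains here is essentially a bookkeeping exercise verifying that all three items can be read off from one and the same Construction-A ensemble.
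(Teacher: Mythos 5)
Your proposal is correct and follows the same route as the paper. The paper proves this theorem purely by citation — it explicitly states that the theorem ``summarizes and reformulates relevant results from Sections IV.C, IV.D, and V.A'' of~\cite{ng11IT} — and your sketch accurately reconstructs the ingredients those sections supply: the Construction-A chain for~(b)--(c) (with $p\Lambda_k\subseteq\Lambda$ giving~(c) immediately), the crypto lemma plus a Poltyrev-good nested-lattice ensemble for~(a), and the observation that the error event depends on $(\bh,\ba,\beta)$ only through $\sigma^2_{\text{eff}}$, which supplies the required universality over channel and coefficient vectors.
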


\begin{corollary}
\label{cor:invertibility}
Given $K$ integer linear combinations $\bV=[\bv_1 \ \cdots \ \bv_K]$ with coefficient vectors $\bA=[\ba_1 \ \cdots \ \ba_K]^T$, the lattice points $\bt_1, \ldots, \bt_K$ can be recovered if $[\bA]\bmod p$ is full rank over $\Zp$.
\end{corollary}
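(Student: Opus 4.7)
The plan is to exploit the two algebraic structures granted by Theorem~\ref{thm:CoF}: the isomorphism between the lattice codebooks and linear codes over $\Zp$ (part (b)), and the fact that $p \bt \in \Lambda$ for every lattice codeword (part (c)). Together these let us treat integer linear combinations of codewords modulo $\Lambda$ as if they were linear combinations over $\Zp$, so that invertibility of $[\bA] \bmod p$ over the field $\Zp$ is exactly the condition needed to solve for the $\bt_k$.

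The concrete steps are as follows. First, since $p$ is prime and $[\bA] \bmod p$ has full rank over $\Zp$, it is invertible; let $\bB = [b_{km}] \in \ZZ^{K \times K}$ be any integer-valued lift of its inverse, so that $\bB\bA = \bI_{K \times K} + p\bC$ for some integer matrix $\bC = [c_{kk'}]$. Second, the decoder forms, for each $k \in \{1,\ldots,K\}$,
\begin{align}
\tilde{\bt}_k = \left[\sum_{m=1}^K b_{km} \bv_m\right]\Mod.\nonumber
\end{align}
Using the distributive property of $[\,\cdot\,]\Mod$ and the definition $\bv_m = \left[\sum_{k'} a_{mk'}\bt_{k'}\right]\Mod$, this equals
\begin{align}
\tilde{\bt}_k = \left[\sum_{k'=1}^K \left(\sum_{m=1}^K b_{km} a_{mk'}\right) \bt_{k'}\right]\Mod = \left[\bt_k + p\sum_{k'=1}^K c_{kk'}\bt_{k'}\right]\Mod.\nonumber
\end{align}

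Third, invoke Theorem~\ref{thm:CoF}(\ref{thm:CoF3}): for every $k'$ we have $p\bt_{k'} \in \Lambda$, and since $\Lambda$ is closed under integer combinations, the entire sum $p\sum_{k'} c_{kk'}\bt_{k'}$ lies in $\Lambda$. Hence this term vanishes under $[\,\cdot\,]\Mod$, leaving $\tilde{\bt}_k = [\bt_k]\Mod$. Finally, because $\bt_k \in \mathcal{L}_k \subseteq \CV$ by construction, $[\bt_k]\Mod = \bt_k$, so $\tilde{\bt}_k = \bt_k$ as desired.

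There is no real obstacle; the argument is a routine unwinding of the modulo-lattice algebra, with the only care needed in invoking part (c) of Theorem~\ref{thm:CoF} for the $p$-scaled residual term and in lifting the $\Zp$-inverse to an integer matrix so that the distributive law over $[\,\cdot\,]\Mod$ applies. The only conceptual point worth emphasizing is that full rank over the integers is insufficient: one genuinely needs full rank modulo the prime $p$, since otherwise the linear system has a nontrivial kernel inside the $p$-torsion subgroup $\Lambda/(p\Lambda_1)$ that aliases distinct codeword tuples to the same set of combinations.
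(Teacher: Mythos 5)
Your proof is correct and takes the same route the paper relies on (the paper itself states the corollary without an explicit proof, deferring to Nazer--Gastpar [Section VI], and the same $\bmod\,p$ algebra appears in Lemma~\ref{lem:ModOperetaions} and the proof of Theorem~\ref{thm:MAC}). The key steps---lift the $\Zp$-inverse to an integer matrix $\bB$ with $\bB\bA = \bI + p\bC$, apply the distributive law of $[\,\cdot\,]\Mod$, and invoke Theorem~\ref{thm:CoF}(c) to kill the $p$-scaled residual since $p\bt_{k'}\in\Lambda$ and $\Lambda$ is closed under integer combinations---are exactly the intended argument, and the final step $[\bt_k]\Mod=\bt_k$ (because $\bt_k\in\CV$) closes it cleanly. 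One small imprecision in your closing remark: the relevant $\Zp$-vector space is the quotient $\Lambda_1/\Lambda$ (on which multiplication by $p$ acts as zero by Theorem~\ref{thm:CoF}(c)), not ``$\Lambda/(p\Lambda_1)$''; the substantive point---that invertibility over $\ZZ$ or $\RR$ is a priori insufficient and one needs invertibility over $\Zp$---stands, though the paper's subsequent remark notes that with $p$ chosen large relative to the bounded coefficient vectors of~\eqref{integerCondition}, full rank over $\RR$ does in fact suffice.
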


\vspace{1mm}

\begin{remark} By taking the blocklength $n$ and field size $p$ to be large enough, it can be shown that, for a fixed channel vector $\bh$ and finite $\Tsnr$, it suffices to check whether $\bA$ is full rank over the reals. See~\cite[Section VI]{ng11IT} for an in-depth discussion.
\end{remark}

\vspace{1mm}

\begin{remark}
Note that it is also possible to map both the messages and the integer linear combinations into an appropriately chosen finite field. That is, the messages can be written as vectors with elements that take values in a prime-sized finite field, and the receiver ultimately recovers linear combinations of the messages over the same finite field. See \cite{ng11IT} for more details.
\end{remark}

\vspace{1mm}

It follows from Theorem~\ref{thm:CoF}\eqref{thm:CoF1} that in order to maximize the computation rate $R_{\text{comp}}(\bh,\ba,\beta)$ for a given coefficient vector, one has to minimize $\sigma^2_{\text{eff}}(\bh,\ba,\beta)$ over $\beta$.
It is seen from~\eqref{ZeffVar} that the expression for $\sigma^2_{\text{eff}}(\bh,\ba,\beta)$ is equal to the mean squared error (MSE) for linear estimation of $\tilde{X}=\sum_{k=1}^K a_k X_k$ from $Y=\sum_{k=1}^K h_k X_k+Z$ where $\{X_k\}_{k=1}^K$ are i.i.d. random variables with zero mean and variance $\Tsnr$ and $Z$ is statistically independent of $\{X_k\}_{k=1}^K$ with zero mean and unit variance.
Hence the minimizing value of $\beta$ is the linear minimum mean squared error (MMSE) estimation coefficient of $\tilde{X}$ from $Y$.
This value of $\beta$ was found in~\cite[Theorem 2]{ng11IT} and the resulting MSE is given by
\begin{align}
\sigma^2_{\text{eff}}(\mathbf{h},\mathbf{a})&\triangleq \min_{\beta\in\RR}\sigma^2_{\text{eff}}(\mathbf{h},\mathbf{a},\beta)\nonumber\\
&=\Tsnr\left(\|\mathbf{a}\|^2-\frac{\Tsnr ( \mathbf{h}^T\mathbf{a})^2}{1+\Tsnr\|\mathbf{h}\|^2}\right)\nonumber\\
&=\Tsnr \ \mathbf{a}^T\left(\bI_{K\times K} - \frac{\Tsnr \ \mathbf{h} \mathbf{h}^T}{1+\Tsnr\|\mathbf{h}\|^2}\right) \mathbf{a} \nonumber \\
&=\mathbf{a}^T\left(\Tsnr^{-1}\bI_{K\times K}+\mathbf{h}\mathbf{h}^T\right)^{-1}\mathbf{a}\label{woodbury}\\
&=\left\|\left(\Tsnr^{-1}\bI_{K\times K}+\mathbf{h}\mathbf{h}^T\right)^{-1/2}\mathbf{a}\right\|^2,\label{optVar}
\end{align}
where~\eqref{woodbury} can be verified using Woodbury's matrix identity (i.e., the Matrix Inversion Lemma)~\cite[Thm 18.2.8]{harville}. Accordingly, we define
\begin{align}
R_{\text{comp}}(\bh,\ba)&\triangleq \max_{\beta\in\RR}R_{\text{comp}}(\bh,\ba,\beta)\nonumber\\
&=\frac{1}{2}\log\left(\frac{\Tsnr}{\sigma^2_{\text{eff}}(\mathbf{h},\mathbf{a})}\right).\label{compRate2}
\end{align}
In the sequel, we will require that the receiver decodes $K$ linearly independent integer linear combinations. However, the specific values of the coefficient vectors for these linear combinations are not important as long as they form a full-rank set. Therefore, we are free to choose these coefficients such as to maximize the corresponding computation rate.

Define the matrix \begin{align}
\bF\triangleq\left(\Tsnr^{-1}\bI_{K\times K}+\mathbf{h}\mathbf{h}^T\right)^{-1/2},\label{latticeMat}
\end{align}
and the lattice $\Lambda(\bF)=\{\bm{\nu}=\bF\ba \ : \ \ba\in\ZZ^K\}$. Notice that this $K$-dimensional lattice is induced by the channel matrix, not the $n$-dimensional coding scheme.
The effective variance for the coefficient vector $\ba$ is
\begin{align}
\sigma^2_{\text{eff}}(\mathbf{h},\mathbf{a})=\|\bF\ba\|^2,\label{effvarlattice}
\end{align}
and hence $\sigma^2_{\text{eff}}(\mathbf{h},\mathbf{a})$ is the length of the lattice vector corresponding to the integer-valued vector $\ba$. It follows that the problem of finding the $K$ linearly independent integer-valued vectors that result in the highest computation rates is equivalent to finding a set of shortest independent vectors $\{\bm{\nu}_1,\ldots,\bm{\nu}_K\}$ in the lattice $\Lambda(\bF)$, and then taking the integer coefficient vectors as $\ba_m=\bF^{-1}\cdot\bm{\nu}_m$. The lengths of the shortest linearly independent vectors in a lattice are called \emph{successive minima}, as defined next.

\begin{definition}[Successive minima]
\label{def:sucmin}
Let $\Lambda(\mathbf{F})$ be a full-rank lattice in $\RR^K$ spanned by the matrix $\mathbf{F} \in \mathbb{R}^{K \times K}$. For $m=1,\ldots,K$, we define the $m$th successive minimum as
\begin{align}
\lambda_m(\mathbf{F}) \triangleq \inf\left\{r \ : \ \dim\left(\Span\left(\Lambda(\mathbf{F})\bigcap \mathcal{B}(\mathbf{0},r)\right)\right)\geq m\right\}\nonumber
\end{align}
where $\mathcal{B}(\mathbf{0},r)=\left\{\bx\in\RR^K \ : \ \|\bx\|\leq r\right\}$ is the closed ball of radius $r$ around $\mathbf{0}$. In words, the $m$th successive minimum of a lattice is the minimal radius of a ball centered around $\mathbf{0}$ that contains $m$ linearly independent lattice points.
\end{definition}

The following definition identifies the $K$ linearly independent coefficient vectors which yield the highest computation rates.

\vspace{2mm}

\begin{definition}
\label{def:optEqs}
Let $\bF$ be the matrix defined in~\eqref{latticeMat}. We say that an ordered set of integer coefficient vectors $\{\ba_1,\ldots,\ba_K\}$ with corresponding computation rates $R_{\text{comp},m}\triangleq R_{\text{comp}}(\bh,\ba_m)$ is \emph{optimal} if the $K$ vectors are linearly independent and $\|\bF \ \ba_m\|=\lambda_m(\bF)$ for any $m=1,\ldots,K$. Note, that such a set always exists by definition of successive minima, and that it is not unique. For example, if $\{\ba_1,\ldots,\ba_K\}$ is an optimal set of coefficient vectors, so is the set $\{-\ba_1,\ldots,-\ba_K\}$. Note also that the optimal computation rates satisfy \mbox{$R_{\text{comp},1}\geq\cdots\geq R_{\text{comp},K}$}.
\end{definition}

\vspace{2mm}

\begin{remark} Several recent papers have proposed families of constellations and codes that are well-suited for low-complexity implementations of compute-and-forward \cite{fsk11,hn11,ozeng11,tn11,hc13IT,bl12,hnt14}. These codes could serve as building blocks for a practical implementation of our alignment scheme.
\end{remark}

\subsection{Numerical Evaluations}\label{s:numerical}

The optimal coefficient vectors and computation rates from Definition~\ref{def:optEqs} play an important role in the achievable rate regions derived in this paper. The problem of determining the optimal coefficient vectors is that of finding the set of $K$ linearly independent \emph{integer-valued} vectors that minimizes the effective noise~\eqref{optVar}. As discussed above, this problem is equivalent to finding the shortest $K$ linearly independent lattice vectors in the lattice $\Lambda(\bF)$ spanned by the matrix $\bF$ defined in~\eqref{latticeMat}.

It is shown in~\cite[Lemma 1]{ng11IT} that only integer vectors \mbox{$\ba \in \mathbb{Z}^K$} that satisfy the condition
\begin{align}
\|\ba\|^2<1+\|\bh\|^2\Tsnr\label{integerCondition}
\end{align}
yield positive rates. Therefore, in our considerations it suffices to enumerate all integer vectors (other than the zero vector) that satisfy~\eqref{integerCondition}, and then exhaustively search over these vectors in order to find the optimal set. At moderate values of $\Tsnr$ this task is computationally reasonable. Nevertheless, it is sometimes simpler to find a set of short linearly independent lattice vectors in $\Lambda(\bF)$, which is not necessarily optimal, in order to obtain lower bounds on the set of optimal computation rates. A simple low-complexity algorithm for computing a short lattice basis (which forms a set of $K$ linearly independent lattice vectors) is the LLL algorithm~\cite{lll82}.\footnote{Pseudocode for the LLL algorithm can be found, e.g., in~\cite{wbkk04}.} In producing the figures for this paper we have employed the LLL algorithm, meaning that the plotted achievable rates in Figure~\ref{f:symICrates} are in fact lower bounds on the rates given by Theorems~\ref{thm:SymICnoLayeres} and~\ref{thm:SymICHK}. 

We note that a similar procedure for finding the optimal coefficient vectors was also described in~\cite{fsk11}, where the optimal coefficient vectors are termed dominated solutions.

\section{Multiple-Access via Compute-and-Forward}\label{s:mac}

This section introduces a new coding technique for reliable communication over the $K$-user Gaussian multiple-access channel. The basic idea is to first decode a linearly independent set of $K$ integer linear combinations of the transmitted codewords, and then solve these for the transmitted messages. As we will argue, under certain technical conditions, it is possible to map the users' rates to the computation rates in a one-to-one fashion.
We begin this section with a high-level overview of the scheme, which is illustrated in Figures~\ref{f:cftransform} and~\ref{f:aftertransform}.

\begin{figure*}[!t]
\begin{center}
\psset{unit=0.68mm}
\begin{pspicture}(9,-23)(260,50)

\rput(-0.5,32){
\rput(6,0){$w_1$}\psline{->}(9.5,0)(14,0)
\psframe(14,-5)(24,5) \rput(19,0){$\mathcal{L}_1$}
\psline{->}(24,0)(33,0) \rput(28,3.6){$\mathbf{t}_1$}
\pscircle(35.5,0){2.5} \psline{-}(34.25,0)(36.75,0)\psline{-}(35.5,-1.25)(35.5,1.25)
\psline{->}(35.5,8.5)(35.5,2.5) \rput(34.5,11.5){$-\mathbf{d}_1$}
\psline{->}(38,0)(43,0)
\psframe(43,-5)(58.5,5) \rput(49.5,0){\footnotesize{$\mod\Lambda$}}
}
\rput(-7,0){
\psline{->}(65,32)(75,32)(88,14) \rput(70,35){$\mathbf{x}_1$}
\rput(81,30){$h_1$}
}
\rput(-0.5,12){
\rput(6,0){$w_2$}\psline{->}(9.5,0)(14,0)
\psframe(14,-5)(24,5) \rput(19,0){$\mathcal{L}_2$}
\psline{->}(24,0)(33,0) \rput(28,3.6){$\mathbf{t}_2$}
\pscircle(35.5,0){2.5} \psline{-}(34.25,0)(36.75,0)\psline{-}(35.5,-1.25)(35.5,1.25)
\psline{->}(35.5,8.5)(35.5,2.5) \rput(34.5,11.5){$-\mathbf{d}_2$}
\psline{->}(38,0)(43,0)
\psframe(43,-5)(58.5,5) \rput(49.5,0){\footnotesize{$\mod\Lambda$}}
}
\rput(-7,0){
\psline{->}(65,12)(87,12) \rput(70,15){$\mathbf{x}_2$}
\rput(81,15.5){$h_2$}
}

\rput(19,-1){$\vdots$}
\rput(50,-1){$\vdots$}

\rput(-0.5,-18){
\rput(6,0){$w_K$}\psline{->}(10,0)(14,0)
\psframe(14,-5)(24,5) \rput(19,0){$\mathcal{L}_K$}
\psline{->}(24,0)(33,0) \rput(28,3.6){$\mathbf{t}_K$}
\pscircle(35.5,0){2.5} \psline{-}(34.25,0)(36.75,0)\psline{-}(35.5,-1.25)(35.5,1.25)
\psline{->}(35.5,8.5)(35.5,2.5) \rput(34.8,11.5){$-\mathbf{d}_K$}
\psline{->}(38,0)(43,0)
\psframe(43,-5)(58.5,5) \rput(49.5,0){\footnotesize{$\mod\Lambda$}}
}
\rput(-7,0){
\psline{->}(65,-18)(75,-18)(88,10) \rput(70,-15){$\mathbf{x}_K$}
\rput(80.5,3){$h_K$}
}
\rput(-7,0){
\pscircle(89.5,12){2.5} \psline(88.25,12)(90.75,12) \psline(89.5,10.75)(89.5,13.25)
\psline{<-}(89.5,14.5)(89.5,20.5) \rput(89.5,22.5){$\mathbf{z}$}
\psline(92,12)(102,12) \pscircle[fillstyle=solid,fillcolor=black](102,12){1}
\rput(97,15){$\mathbf{y}$}

\psline{->}(102,12)(115,32)(122,32)
\rput(110,30.5){$\beta_1$}
\psline{->}(102,12)(122,12)
\rput(110,15.5){$\beta_2$}
\psline{->}(102,12)(115,-18)(122,-18)
\rput(111,3){$\beta_K$}
}
\psframe(200,-23)(212,37) \rput(206,8.5){$\mathbf{A}_p^{-1}$}

\rput(100,0){

\rput(-17.5,32){
\pscircle(35,0){2.5}  \rput(35,0){\psline{-}(0,-1.25)(0,1.25) \psline{-}(-1.25,0)(1.25,0)}
\psline{->}(35,7.5)(35,2.5) \rput(37,11){$\sum a_{1k}\mathbf{d}_k$}
\psline{->}(37.5,0)(47,0)
\psframe(47,-5)(60,5) \rput(53.5,0){$Q_{\Lambda_1}$}
\psline{->}(60,0)(66.5,0)
\psframe(66.5,-5)(82,5) \rput(73,0){\footnotesize{$\mod\Lambda$}}
\psline{->}(82,0)(117.5,0) \rput(88,3.7){$\mathbf{\hat{v}}_1$}
\rput(25.5,0){
\psline{->}(104,0)(109.5,0)
\psframe(109.5,-5)(125,5) \rput(116,0){\footnotesize{$\mod\Lambda$}}
\psline{->}(125,0)(135,0) \rput(129.5,3.7){$\mathbf{\hat{t}}_1$}
\rput(35,0){
\psframe(100,-5)(112,5) \rput(106,0){$\mathcal{L}_1^{-1}$}
\psline{->}(112,0)(117,0)\rput(120.5,0.3){${\hat{w}}_1$}
}
}
}

\rput(-17.5,12){
\pscircle(35,0){2.5}  \rput(35,0){\psline{-}(0,-1.25)(0,1.25) \psline{-}(-1.25,0)(1.25,0)}
\psline{->}(35,7.5)(35,2.5) \rput(37,11){$\sum a_{2k}\mathbf{d}_k$}
\psline{->}(37.5,0)(50.5,0) 
\pscircle(53,0){2.5}\psline(53,-1.25)(53,1.25) \psline(51.75,0)(54.25,0)
\psline{<-}(53,2.5)(53,12)(98,12)(98,20)
\pscircle[fillstyle=solid,fillcolor=black](98,20){1}
\rput(58,8.5){$r_{21}$}
\psline{->}(55.5,0)(63,0)
\rput(15,0){
\psframe(48,-5)(61,5) \rput(54.5,0){$Q_{\Lambda_{2}}$}
\psline{->}(61,0)(66.5,0)
\pscircle[fillstyle=solid,fillcolor=black](69,12){1}
\psline{->}(69,12)(69,2.5)
\rput(75,8.5){$-r_{21}$}
\pscircle(69,0){2.5}\psline(69,-1.25)(69,1.25) \psline(67.75,0)(70.25,0)
\psline{->}(71.5,0)(77,0)
\rput(10.5,0){
\psframe(66.5,-5)(82,5) \rput(73,0){\footnotesize{$\mod\Lambda$}}
\psline{->}(82,0)(92,0) \rput(86,3.7){$\mathbf{\hat{v}}_2$}
\psline{->}(104,0)(109.5,0)
\psframe(109.5,-5)(125,5) \rput(116,0){\footnotesize{$\mod\Lambda$}}
\psline{->}(125,0)(135,0) \rput(129.5,3.7){$\mathbf{\hat{t}}_2$}
\rput(35,0){
\psframe(100,-5)(112,5) \rput(106,0){$\mathcal{L}_2^{-1}$}
\psline{->}(112,0)(117,0)\rput(120.5,0.3){${\hat{w}}_2$}
}
}
}
}
\rput(52,2.5){$\vdots$}
\rput(82,2.5){$\vdots$}
\rput(124.5,-1){$\vdots$}
\rput(149,-1){$\vdots$}

\rput(-17.5,-18){
\pscircle(35,0){2.5}  \rput(35,0){\psline{-}(0,-1.25)(0,1.25) \psline{-}(-1.25,0)(1.25,0)}
\psline{->}(35,7.5)(35,2.5) \rput(36,11){$\sum a_{Kk}\mathbf{d}_k$}
\psline{->}(37.5,0)(50.5,0) 
\pscircle(53,0){2.5}\psline(53,-1.25)(53,1.25) \psline(51.75,0)(54.25,0)
\psline{<-}(53,2.5)(53,7.5) \rput(60.25,11){$\sum r_{K\ell}\mathbf{\hat{v}}_\ell$}
\psline{->}(55.5,0)(63,0)
\rput(15,0){
\psframe(48,-5)(61,5) \rput(55,0){$Q_{\Lambda_{K}}$}
\psline{->}(61,0)(66.5,0)
\psline{->}(69,7.5)(69,2.5)
 \rput(73.75,11){$-\sum r_{K\ell}\mathbf{\hat{v}}_\ell$}
\pscircle(69,0){2.5}\psline(69,-1.25)(69,1.25) \psline(67.75,0)(70.25,0)
\psline{->}(71.5,0)(77,0)
\rput(10.5,0){
\psframe(66.5,-5)(82,5) \rput(73,0){\footnotesize{$\mod\Lambda$}}
\psline{->}(82,0)(92,0) \rput(86.5,3.7){$\mathbf{\hat{v}}_K$}
\psline{->}(104,0)(109.5,0)
\psframe(109.5,-5)(125,5) \rput(116,0){\footnotesize{$\mod\Lambda$}}
\psline{->}(125,0)(135,0) \rput(130,3.7){$\mathbf{\hat{t}}_K$}
\rput(35,0){
\psframe(100,-5)(112,5) \rput(105.5,0){$\mathcal{L}_K^{-1}$}
\psline{->}(112,0)(116.5,0)\rput(120.5,0.3){${\hat{w}}_K$}
}

}
}
}

}
\end{pspicture}
\end{center}
\caption{System diagram of the nested lattice encoding and decoding operations employed as part of the compute-and-forward transform. Each message $w_k$ is mapped to a lattice codeword $\mathbf{t}_k$ according to codebook $\mathcal{L}_k$, dithered, and transmitted as $\mathbf{x}_k$. The multiple-access channel scales codeword $k$ by $h_k$ and outputs the sum plus Gaussian noise $\mathbf{z}$. The decoder attempts to recover a linearly independent set of $K$ integer linear combinations with coefficients $\mathbf{A} = \{a_{mk}\}$. For the figure, we have assumed that $R_1 \geq R_2 \geq \cdots \geq R_K$ and that $R_m < R_{\text{comp}}(\mathbf{h},\mathbf{a}_m,\beta_m)$. To decode the first linear combination $\mathbf{v}_1 = [\sum a_{1k} \mathbf{t}_k] \Mod$, the receiver scales $\mathbf{y}$ by $\beta_1$, removes the dithers, quantizes using $Q_{\Lambda_1}$, and takes $\hspace{-0.05in}\mod\Lambda$. For the second linear combination $\mathbf{v}_2 = [\sum a_{2k} \mathbf{t}_k] \Mod$, the decoder scales by $\beta_2$, removes the dithers, and then eliminates the lattice point $\mathbf{t}_1$ using its estimate of the first linear combination $\mathbf{\hat{v}}_1$ so that the rate of the remaining lattice points is at most $R_2$. It then quantizes using $Q_{\Lambda_2}$, adds back in $\mathbf{\hat{v}}_1$, and takes $\hspace{-0.05in}\mod\Lambda$. Decoding proceeds in this fashion, using a form of successive interference cancellation to keep the rates of the lattice points below the computation rates. Afterwards, the receiver solves for the original lattice points by multiplying by $\mathbf{A}_p^{-1}$, which is the inverse of $\mathbf{A}$ over $\mathbb{Z}_p$, and taking $\hspace{-0.05in}\mod\Lambda$. Finally, it maps these estimates $\mathbf{\hat{t}}_k$ of the transmitted lattice points back to the corresponding messages.} \label{f:cftransform}
\end{figure*}
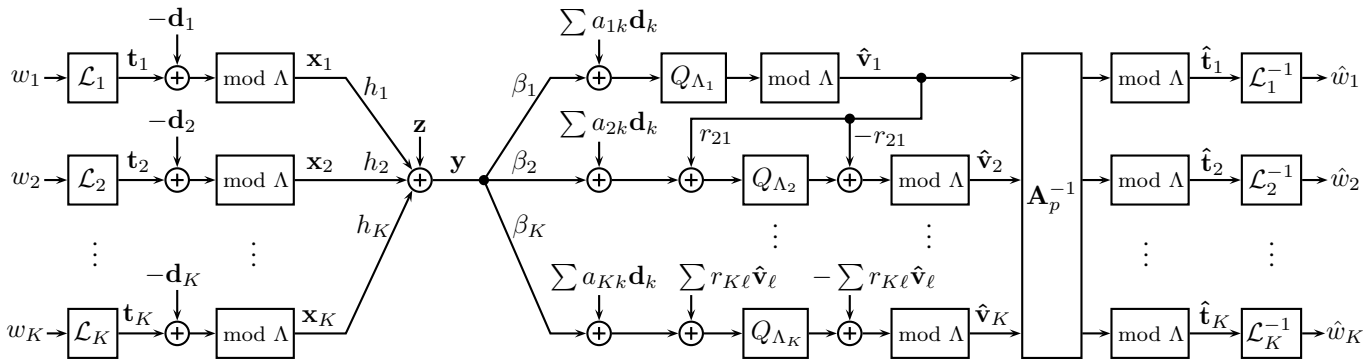
\begin{figure*}[!t]
\begin{center}
\psset{unit=0.68mm}
\begin{pspicture}(9,-23)(240,50)

\rput(-0.5,32){
\rput(6,0){$w_1$}\psline{->}(9.5,0)(14,0)
\psframe(14,-5)(24,5) \rput(19,0){$\mathcal{L}_1$}
\psline{->}(24,0)(36,0) \rput(30,3.6){$\mathbf{t}_1$}
\psline{->}(48,0)(53,0)\psframe(53,-5)(68.5,5) \rput(59,0){\footnotesize{$\mod\Lambda$}}
\psline{->}(68.5,0)(80,0) \rput(74,3.6){$\mathbf{v}_1$}
}
\rput(-0.5,12){
\rput(6,0){$w_2$}\psline{->}(9.5,0)(14,0)
\psframe(14,-5)(24,5) \rput(19,0){$\mathcal{L}_2$}
\psline{->}(24,0)(36,0) \rput(30,3.6){$\mathbf{t}_2$}
\psline{->}(48,0)(53,0)\psframe(53,-5)(68.5,5) \rput(59,0){\footnotesize{$\mod\Lambda$}}
\psline{->}(68.5,0)(80,0) \rput(74,3.6){$\mathbf{v}_2$}
}
\rput(19,-1){$\vdots$}

\rput(60.25,-1){$\vdots$}

\rput(-0.5,-18){
\rput(6,0){$w_K$}\psline{->}(10,0)(14,0)
\psframe(14,-5)(24,5) \rput(19,0){$\mathcal{L}_K$}
\psline{->}(24,0)(36,0) \rput(30,3.6){$\mathbf{t}_K$}
\psline{->}(48,0)(53,0)\psframe(53,-5)(68.5,5) \rput(59,0){\footnotesize{$\mod\Lambda$}}
\psline{->}(68.5,0)(80,0) \rput(74,3.6){$\mathbf{v}_K$}
}

\psframe(35.5,-23)(47.5,37) \rput(41.5,9){$\mathbf{A}$}

\psframe(179.5,-23)(191.5,37) \rput(185.5,8.5){$\mathbf{A}_p^{-1}$}

\rput(79.5,0){

\rput(-17.5,32){
\pscircle(20,0){2.5}  \rput(20,0){\psline{-}(0,-1.25)(0,1.25) \psline{-}(-1.25,0)(1.25,0)}
\psline{->}(20,7.5)(20,2.5) \rput(30,11){$\mathbf{z}_{\text{eff}}(\mathbf{h},\mathbf{a}_1,\beta_1)$}
\psline{->}(22.5,0)(47,0) 
\psframe(47,-5)(60,5) \rput(53.5,0){$Q_{\Lambda_1}$}
\psline{->}(60,0)(66.5,0)
\psframe(66.5,-5)(82,5) \rput(73,0){\footnotesize{$\mod\Lambda$}}
\psline{->}(82,0)(117.5,0) \rput(88,3.7){$\mathbf{\hat{v}}_1$}
\rput(25.5,0){
\psline{->}(104,0)(109.5,0)
\psframe(109.5,-5)(125,5) \rput(116,0){\footnotesize{$\mod\Lambda$}}
\psline{->}(125,0)(135,0) \rput(129.5,3.7){$\mathbf{\hat{t}}_1$}
\rput(35,0){
\psframe(100,-5)(112,5) \rput(106,0){$\mathcal{L}_1^{-1}$}
\psline{->}(112,0)(117,0)\rput(120.5,0.3){${\hat{w}}_1$}
}
}
}

\rput(-17.5,12){
\pscircle(20,0){2.5}  \rput(20,0){\psline{-}(0,-1.25)(0,1.25) \psline{-}(-1.25,0)(1.25,0)}
\psline{->}(20,7.5)(20,2.5) \rput(30,11){$\mathbf{z}_{\text{eff}}(\mathbf{h},\mathbf{a}_2,\beta_2)$}
\psline{->}(22.5,0)(50.5,0) 
\pscircle(53,0){2.5}\psline(53,-1.25)(53,1.25) \psline(51.75,0)(54.25,0)
\psline{<-}(53,2.5)(53,12)(98,12)(98,20)
\pscircle[fillstyle=solid,fillcolor=black](98,20){1}
\rput(58,8.5){$r_{21}$}
\psline{->}(55.5,0)(63,0)
\rput(15,0){
\psframe(48,-5)(61,5) \rput(54.5,0){$Q_{\Lambda_{2}}$}
\psline{->}(61,0)(66.5,0)
\pscircle[fillstyle=solid,fillcolor=black](69,12){1}
\psline{->}(69,12)(69,2.5)
\rput(75,8.5){$-r_{21}$}
\pscircle(69,0){2.5}\psline(69,-1.25)(69,1.25) \psline(67.75,0)(70.25,0)
\psline{->}(71.5,0)(77,0)
\rput(10.5,0){
\psframe(66.5,-5)(82,5) \rput(73,0){\footnotesize{$\mod\Lambda$}}
\psline{->}(82,0)(92,0) \rput(86,3.7){$\mathbf{\hat{v}}_2$}
\psline{->}(104,0)(109.5,0)
\psframe(109.5,-5)(125,5) \rput(116,0){\footnotesize{$\mod\Lambda$}}
\psline{->}(125,0)(135,0) \rput(129.5,3.7){$\mathbf{\hat{t}}_2$}
\rput(35,0){
\psframe(100,-5)(112,5) \rput(106,0){$\mathcal{L}_2^{-1}$}
\psline{->}(112,0)(117,0)\rput(120.5,0.3){${\hat{w}}_2$}
}
}
}
}
\rput(52,2.5){$\vdots$}
\rput(82,2.5){$\vdots$}
\rput(124.5,-1){$\vdots$}
\rput(149,-1){$\vdots$}

\rput(-17.5,-18){
\pscircle(20,0){2.5}  \rput(20,0){\psline{-}(0,-1.25)(0,1.25) \psline{-}(-1.25,0)(1.25,0)}
\psline{->}(20,7.5)(20,2.5) \rput(31,11){$\mathbf{z}_{\text{eff}}(\mathbf{h},\mathbf{a}_K,\beta_K)$}
\psline{->}(22.5,0)(50.5,0) 
\pscircle(53,0){2.5}\psline(53,-1.25)(53,1.25) \psline(51.75,0)(54.25,0)
\psline{<-}(53,2.5)(53,7.5) \rput(60.25,11){$\sum r_{K\ell}\mathbf{\hat{v}}_\ell$}
\psline{->}(55.5,0)(63,0)
\rput(15,0){
\psframe(48,-5)(61,5) \rput(55,0){$Q_{\Lambda_{K}}$}
\psline{->}(61,0)(66.5,0)
\psline{->}(69,7.5)(69,2.5)
 \rput(73.75,11){$-\sum r_{K\ell}\mathbf{\hat{v}}_\ell$}
\pscircle(69,0){2.5}\psline(69,-1.25)(69,1.25) \psline(67.75,0)(70.25,0)
\psline{->}(71.5,0)(77,0)
\rput(10.5,0){
\psframe(66.5,-5)(82,5) \rput(73,0){\footnotesize{$\mod\Lambda$}}
\psline{->}(82,0)(92,0) \rput(86.5,3.7){$\mathbf{\hat{v}}_K$}
\psline{->}(104,0)(109.5,0)
\psframe(109.5,-5)(125,5) \rput(116,0){\footnotesize{$\mod\Lambda$}}
\psline{->}(125,0)(135,0) \rput(130,3.7){$\mathbf{\hat{t}}_K$}
\rput(35,0){
\psframe(100,-5)(112,5) \rput(105.5,0){$\mathcal{L}_K^{-1}$}
\psline{->}(112,0)(116.5,0)\rput(120.5,0.3){${\hat{w}}_K$}
}

}
}
}

}
\end{pspicture}
\end{center}
\caption{Effective MIMO channel induced by the compute-and-forward transform of a Gaussian multiple-access channel. The channel output $\mathbf{y} = \sum h_k \mathbf{x}_k + \mathbf{z}$ is converted into a linearly independent set of $K$ integer linear combinations $\mathbf{v}_m = [\sum a_{mk} \mathbf{t}_k] \Mod$ plus effective noise $\mathbf{z}_{\text{eff}}(\mathbf{h},\mathbf{a}_m,\beta_m) = \beta_m \mathbf{z} + \sum (\beta_m h_k - a_{mk}) \mathbf{x}_k$. As in Figure \ref{f:cftransform}, these linear combinations can be decoded using a version of successive cancellation.} \label{f:aftertransform}
\end{figure*}
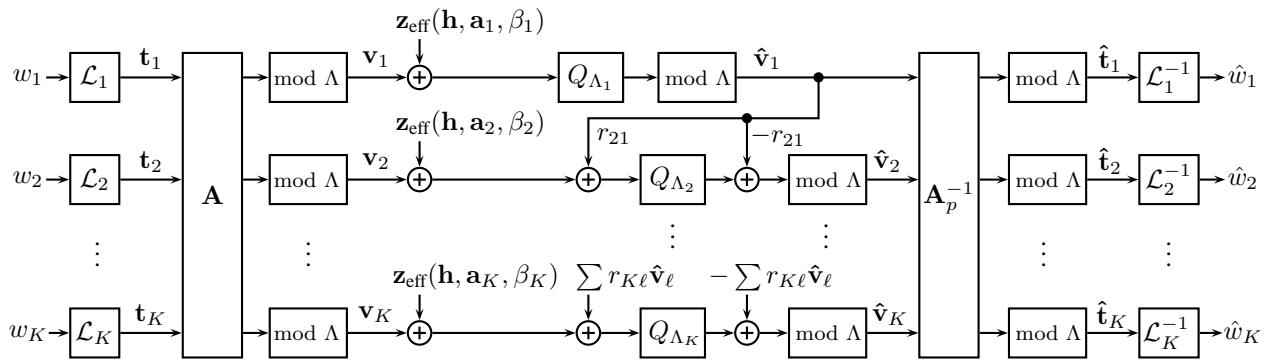

Each user $k$ maps its message to a lattice point $\bt_k$ in its codebook $\mathcal{L}_k$ and transmits a dithered version of it. The $K$ lattice codebooks utilized by the different users form a chain of nested lattices as in~\eqref{nestedChain}. Assume for now that the users are ordered with descending rates \mbox{$R_1\geq R_2\geq\cdots\geq R_K$}, i.e., $\map(k)=k$ for $k=1,\ldots,K$.
The receiver, which sees a noisy real-valued linear combination of the transmitted codewords, begins by decoding the integer linear combination \mbox{$\mathbf{v}_1 = [\sum a_{1k} \mathbf{t}_k] \Mod$} that yields the highest computation rate $R_{\text{comp},1}$. Using the compute-and-forward framework, this is possible if $R_1<R_{\text{comp},1}$.
Then, it proceeds to decode the integer linear combination \mbox{$\mathbf{v}_2 = [\sum a_{2k} \mathbf{t}_k] \Mod$} that yields the second highest computation rate $R_{\text{comp},2}$. In general, $\bt_1$ participates in this linear combination and the condition for correct decoding of $\bv_2$ is therefore $R_1<R_{\text{comp},2}$. Nevertheless, this condition can be relaxed using the linear combination $\bv_1$ that has already been decoded. Specifically, after scaling of the channel output and removing the dithers, the receiver has a noisy observation
\begin{align}
\bs_2=\left[\bv_2+\bz_{\text{eff}}(\bh,\ba_2)\right]\Mod\nonumber
\end{align}
of the desired linear combination $\bv_2$. If $\bt_1$ participates in $\bv_1$, it is possible to cancel out $\bt_1$ from the second linear combination by adding a scaled version of $\bv_1$ to $\bs_2$. Namely, the receiver adds $r_{21}\bv_1$ to $\bs_2$,
where $r_{21}$ is an integer chosen such that $\left[(a_{21}+r_{21}a_{11})\right]\bmod p=0$, which assures that $\left[(a_{21}+r_{21}a_{11})\bt_1\right]\Mod=\mathbf{0}$ for any $\bt_1\in\mathcal{L}_1$. After reducing\!$\mod\Lambda$, this yields
\begin{align}
\bs_2^{\text{SI}}&=\left[\bv_2+r_{21}\bv_1+\bz_{\text{eff}}(\bh,\ba_2)\right]\Mod\nonumber\\
&=\left[\mathbf{\tilde{v}}_2+\bz_{\text{eff}}(\bh,\ba_2)\right]\Mod,\nonumber \\
\mathbf{\tilde{v}}_2 &= \left[ \sum_{k = 2}^K (a_{2k} + r_{21} a_{1k}) \bt_k \right]\Mod \nonumber \ .
\end{align}
Note that $\bt_1$ does not participate in $\mathbf{\tilde{v}}_2$. Since the effective noise $\bz_{\text{eff}}(\bh,\ba_2)$ is unchanged by this process, the receiver can decode $\mathbf{\tilde{v}}_2$ as long as $R_2<R_{\text{comp},2}$. Now, the receiver can obtain $\bv_2$ by subtracting $r_{21}\bv_1$ from $\mathbf{\tilde{v}}_2$ and reducing\!$\mod\Lambda$.\footnote{The operation of extracting $\bv_2$ from $\mathbf{\tilde{v}}_2$ is in fact not necessary as the receiver is only interested in decoding \emph{any} linearly independent set of $K$ integer linear combinations. We describe this step only to simplify the exposition of the scheme.}
The receiver decodes the remaining linear combinations in a similar manner, i.e., before decoding the $m$th linear combination $\bv_m$ with computation rate $R_{\text{comp},m}$ the receiver adds to
\begin{align}
\bs_m=\left[\bv_m+\bz_{\text{eff}}(\bh,\ba_m) \right]\Mod\nonumber
\end{align}
an integer linear combination $\left[\sum_{\ell=1}^{m-1}r_{m\ell}\bv_\ell\right]\Mod$ of its previously decoded linear combinations. The coefficients $r_{m1},\ldots,r_{m,m-1} \in \mathbb{Z}$ are chosen such that the effect of $\bt_1,\ldots,\bt_{m-1}$ is canceled out from $\bv_m$. Assuming that such coefficients exist, the receiver can decode $\mathbf{\tilde{v}}_m=\left[\bv_m+\sum_{\ell=1}^{m-1}r_{m\ell}\bv_\ell\right]\Mod$ as long as $R_m<R_{\text{comp},m}$.

Lemma~\ref{lem:ModuloTriangularization}, stated in Appendix~\ref{app:MACproof}, establishes that for any set of $K$ linearly independent coefficient vectors $\{\ba_1,\ldots,\ba_K\}$ there indeed always exist integer-valued coefficients $\{r_{m\ell}\}$ such that in the $m$th decoding step the receiver can cancel out $m-1$ lattice points from the desired linear combination $\bv_m$, using the previously decoded linear combinations $\left\{\bv_1,\ldots,\bv_{m-1}\right\}$. The procedure for finding these coefficients is reminiscent of the Gaussian elimination procedure of a full-rank matrix.
One of the basic operations in Gaussian elimination is row switching. In our considerations, this would correspond to using a linear combination that has not been decoded yet for eliminating lattice points from another linear combination, which is clearly not possible. Therefore, a major difference between our procedure for finding a good set of coefficients $\{r_{ij}\}$ and Gaussian elimination is that row switching is not permitted. This will sometimes constrain the order in which we can cancel out users from linear combinations. Nevertheless, there always exists at least one valid successive cancellation order. In other words, we can always cancel out the effect of $m-1$ users from $\bv_m$ using the decoded linear combination $\left\{\bv_1,\ldots,\bv_{m-1}\right\}$, but we cannot always control which of the $K$ users to cancel. As a result, there always exists at least one permutation vector $\mathbf{\pi}$ such that all $K$ linear combination can be decoded as long as
\begin{align}
R_{\pi(m)}<R_{\text{comp},m}, \ m=1,\ldots,K.
\end{align}
It follows that a sum-rate of $\sum_{m=1}^K R_{\text{comp},m}$ is achievable over the $K$-user MAC with our scheme, in which all users employ nested lattice codebooks. As we shall see, this sum rate is within a constant gap, smaller than $\nicefrac{K}{2}\log(K)$ bits, from the sum capacity of the MAC, for all channel gains and SNR.

\subsection{The Compute-and-Forward Transform}\label{sub:coftransform}
We first introduce a transformation of a MAC to a multiple-input multiple-output (MIMO) mod-$\Lambda$ channel, where the $K\times K$ channel matrix is integer-valued. This transformation, dubbed the \emph{compute-and-forward transform}, will play an important role in our decoding scheme for the interference channel.

\vspace{2mm}

\begin{definition}
Let $\{\ba_1,\ldots,\ba_K\}$ be a set of optimal integer coefficient vectors (see Definition~\ref{def:optEqs}), $\beta_1,\ldots,\beta_K$ the corresponding optimal scaling factors, and $R_{\text{comp},1}\geq\cdots\geq R_{\text{comp},K}$ the corresponding optimal computation rates. We define the \textit{compute-and-forward transform} of the MAC with nested lattice codes as
\begin{align}
{\bS}=\left(
      \begin{array}{c}
        \bs_1 \\
        \vdots \\
        \bs_K \\
      \end{array}
    \right)
&=\left(\begin{array}{c}
            \left[\beta_1\by+\sum_{k=1}^K a_{1k}\bd_k\right]\bmod\Lambda \\
            \vdots \\
            \left[\beta_K\by+\sum_{k=1}^K a_{Kk}\bd_k\right]\bmod\Lambda
          \end{array}
\right)\nonumber\\
&=\left[\bA\ \left(
                       \begin{array}{c}
                         \bt_1 \\
                         \vdots \\
                         \bt_K \\
                       \end{array}
                     \right)+{\bZ}_{\text{eff}} \right]\bmod\Lambda,
\end{align}
where we have written the channel output $\mathbf{y}$, dithers $\mathbf{d}_k$, and lattice codewords $\mathbf{t}_k$ as length-$n$ row vectors. We also denote $\bA=[\ba_1 \ \cdots\  \ba_K]^T$ and ${\bZ}_{\text{eff}}=[\bz_{\text{eff},1}^T\  \cdots \ \bz_{\text{eff},K}^T]^T$.
\end{definition}

\vspace{2mm}

\begin{remark}
The transform is not unique as the set of optimal integer coefficient vectors is not unique. Nevertheless, the set of optimal computation rates is unique. As we shall see, the set of optimal computation rates dictates the rates attained over the transformed channel. Therefore, we use the term \emph{the} compute-and-forward transform of the channel, with the understanding that although there may be multiple options for the transform, they are all equivalent.
\end{remark}

\vspace{2mm}

The $m$th output $\bs_m$ of the transformed channel corresponds to an integer linear combination plus effective noise. Due to Theorem~\ref{thm:CoF}, each such linear combination can be reliably decoded as long as all lattice points participating in it belong to codes of rates smaller than $R_{\text{comp},m}$. We now lower bound the sum of $K$ optimal computation rates, and in the sequel we show that this sum can be translated to a valid MAC sum rate.

\vspace{2mm}

\begin{theorem}
The sum of optimal computation rates is lower bounded by
\begin{align}
\sum_{m=1}^K R_{\text{comp},m}\geq\frac{1}{2}\log\left(1+\|\mathbf{h}\|^2\Tsnr\right)-\frac{K}{2}\log(K) \ .
\label{sumRate3}
\end{align}
\label{thm:SumRate}
\end{theorem}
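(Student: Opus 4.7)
The proof reduces lower bounding $\sum_m R_{\text{comp},m}$ to upper bounding the product of successive minima of the lattice $\Lambda(\bF)$, after which Minkowski's second theorem closes the argument. The starting point is Definition~\ref{def:optEqs} combined with~\eqref{effvarlattice}, which yield $R_{\text{comp},m} = \frac{1}{2}\log(\Tsnr/\lambda_m^2(\bF))$ and hence
\begin{align*}
\sum_{m=1}^K R_{\text{comp},m} = \frac{K}{2}\log\Tsnr - \frac{1}{2}\log\prod_{m=1}^K \lambda_m^2(\bF).
\end{align*}
Thus, up to a channel-independent additive constant, the task becomes upper bounding $\prod_m \lambda_m^2(\bF)$ in terms of $\det(\bF)$.

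The plan is to invoke Minkowski's second theorem, which states that for any full-rank lattice $\Lambda$ in $\RR^K$,
\begin{align*}
\prod_{m=1}^K \lambda_m(\Lambda) \leq \frac{2^K}{V_K}\det(\Lambda),
\end{align*}
where $V_K = \pi^{K/2}/\Gamma(K/2+1)$ is the volume of the Euclidean unit ball in $\RR^K$. For our lattice $\det(\Lambda(\bF)) = |\det \bF|$, and the matrix determinant lemma applied to $\Tsnr^{-1}\bI_{K\times K}+\bh\bh^T$ gives $\det(\bF)^2 = \Tsnr^K/(1+\|\bh\|^2\Tsnr)$. Squaring Minkowski's inequality, combining with this determinant identity, and substituting back into the expression above yields
\begin{align*}
\sum_{m=1}^K R_{\text{comp},m} \geq \frac{1}{2}\log(1+\|\bh\|^2\Tsnr) - K + \log V_K.
\end{align*}

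It remains to verify the purely geometric inequality $\log V_K \geq K - \frac{K}{2}\log K$, equivalently $V_K \geq (4/K)^{K/2}$. Rewriting, this asks whether $(K\pi/4)^{K/2} \geq \Gamma(K/2+1)$, which can be checked to hold with equality at $K=1$ and with substantial slack for $K \geq 2$ (e.g., by Stirling's approximation, since $\pi e/2 > 1$ so the left-hand side grows geometrically in $K$ relative to the right). The main obstacle is keeping the constants in Minkowski's second theorem straight and verifying this volume bound cleanly for all $K \geq 1$; the tightness at $K=1$ explains why $\frac{K}{2}\log K$ is the natural gap produced by this approach. All remaining steps are routine algebraic manipulations.
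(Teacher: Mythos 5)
Your argument has the same skeleton as the paper's: rewrite $\sum_m R_{\text{comp},m}$ as $\tfrac{K}{2}\log\Tsnr-\tfrac{1}{2}\log\prod_m\lambda_m^2(\bF)$ using Definition~\ref{def:optEqs} and~\eqref{effvarlattice}, bound the product of successive minima via Minkowski, and evaluate $\det\bF$ via the determinant identity. The one difference is which normalization of Minkowski you invoke. The paper cites the form $\prod_m\lambda_m^2(\bF)\leq K^K|\det\bF|^2$ (its Theorem~\ref{thm:Minkowski}), which drops the constant $\tfrac{K}{2}\log K$ directly into the bound. You instead use the sharper form of Minkowski's second theorem with the unit-ball volume $V_K$, producing the intermediate bound $\tfrac{1}{2}\log(1+\|\bh\|^2\Tsnr)-K+\log V_K$, and then need the auxiliary volume inequality $\log V_K\geq K-\tfrac{K}{2}\log K$ (equivalently $(K\pi/4)^{K/2}\geq\Gamma(K/2+1)$) to recover the stated constant. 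That inequality is indeed true and tight at $K=1$, and your reduction is correct, but as written your verification for general $K$ leans on Stirling's \emph{asymptotic} and is not a closed argument; to make it airtight you would want an induction on $K$ or a non-asymptotic Stirling-type bound on $\Gamma$. The paper sidesteps this entirely by citing the weaker, pre-packaged form of Minkowski. Net effect: you actually obtain a slightly \emph{stronger} intermediate bound before deliberately weakening it, at the cost of one extra elementary lemma that the paper's choice of reference makes unnecessary.
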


\vspace{1mm}

The proof makes use of the following well-known theorem due to Minkowski~\cite[Theorem 1.5]{mg02}, that upper bounds the product of successive minima.

\vspace{2mm}

\begin{theorem}[Minkowski]
For any lattice $\Lambda(\mathbf{F})$ which is spanned by a full-rank $K\times K$ matrix $\bF$
\begin{align}
\prod_{m=1}^K \lambda_m^2(\mathbf{F})\leq K^K \left|\det(\bF)\right|^2.
\end{align}
\label{thm:Minkowski}
\end{theorem}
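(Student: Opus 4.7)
The plan is to express the sum of optimal computation rates directly in terms of the product of successive minima of the lattice $\Lambda(\mathbf{F})$, then apply Minkowski's theorem to upper bound this product by a determinant, and finally evaluate the determinant in closed form via a rank-one identity.

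First I would write, using \eqref{effvarlattice} and the definition of an optimal coefficient vector from Definition~\ref{def:optEqs}, that $\sigma^2_{\text{eff}}(\mathbf{h},\mathbf{a}_m) = \lambda_m^2(\mathbf{F})$. Substituting into \eqref{compRate2} and summing over $m$ yields
\begin{align*}
\sum_{m=1}^K R_{\text{comp},m} = \frac{K}{2}\log(\Tsnr) - \frac{1}{2}\log\!\left(\prod_{m=1}^K \lambda_m^2(\mathbf{F})\right).
\end{align*}
This reduces the theorem to an upper bound on $\prod_{m=1}^K \lambda_m^2(\mathbf{F})$, and Theorem~\ref{thm:Minkowski} (Minkowski) immediately gives $\prod_{m=1}^K \lambda_m^2(\mathbf{F}) \leq K^K |\det(\mathbf{F})|^2$, provided $\mathbf{F}$ is full-rank---which it is, since $\Tsnr^{-1}\bI_{K\times K}+\mathbf{h}\mathbf{h}^T$ is positive definite.

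Next I would compute $|\det(\mathbf{F})|^2 = \det(\mathbf{F}^2) = \det\!\big((\Tsnr^{-1}\bI_{K\times K}+\mathbf{h}\mathbf{h}^T)^{-1}\big)$. Using the matrix determinant lemma for a rank-one update, $\det(\Tsnr^{-1}\bI_{K\times K}+\mathbf{h}\mathbf{h}^T) = \Tsnr^{-K}\bigl(1+\Tsnr\|\mathbf{h}\|^2\bigr)$, so
\begin{align*}
|\det(\mathbf{F})|^2 = \frac{\Tsnr^K}{1+\Tsnr\|\mathbf{h}\|^2}.
\end{align*}
Combining the three displayed expressions gives
\begin{align*}
\sum_{m=1}^K R_{\text{comp},m} &\geq \frac{K}{2}\log(\Tsnr) - \frac{1}{2}\log\!\left(K^K \cdot \frac{\Tsnr^K}{1+\|\mathbf{h}\|^2\Tsnr}\right) \\
&= \frac{1}{2}\log\bigl(1+\|\mathbf{h}\|^2\Tsnr\bigr) - \frac{K}{2}\log(K),
\end{align*}
which is exactly~\eqref{sumRate3}.

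I do not anticipate a serious obstacle: once the identification $\sigma^2_{\text{eff}}(\mathbf{h},\mathbf{a}_m)=\lambda_m^2(\mathbf{F})$ is in hand, both Minkowski's bound and the rank-one determinant evaluation are standard. The only mild subtlety is that $\mathbf{F}$ is defined as a matrix square root and so $|\det(\mathbf{F})|^2$ must be interpreted as $\det(\mathbf{F}^2)$; invoking positive-definiteness of $\Tsnr^{-1}\bI_{K\times K}+\mathbf{h}\mathbf{h}^T$ sidesteps any ambiguity in the sign or choice of root. It is also worth noting that the bound is tight (in the sense of matching the MAC sum capacity $\tfrac{1}{2}\log(1+\|\mathbf{h}\|^2\Tsnr)$ up to a constant) exactly because Minkowski's inequality is tight up to a dimension-dependent factor, which motivates the $\tfrac{K}{2}\log K$ gap appearing in the statement.
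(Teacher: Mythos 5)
You have not proved the statement in question. The theorem to be established is Minkowski's second theorem on successive minima itself: $\prod_{m=1}^K \lambda_m^2(\mathbf{F})\leq K^K |\det(\mathbf{F})|^2$ for an arbitrary full-rank lattice $\Lambda(\mathbf{F})$. What you have written instead is a proof of Theorem~\ref{thm:SumRate} (the lower bound on the sum of optimal computation rates), and in the middle of it you invoke Theorem~\ref{thm:Minkowski} as a black box (``Theorem~\ref{thm:Minkowski} (Minkowski) immediately gives\dots''). With respect to the actual target, your argument is therefore circular: you assume exactly the inequality you were asked to prove. The identification $\sigma^2_{\text{eff}}(\mathbf{h},\mathbf{a}_m)=\lambda_m^2(\mathbf{F})$, the determinant evaluation via the rank-one update, and the final algebra are all correct, but they belong to the proof of the downstream application, not to the proof of Minkowski's bound.

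For the record, the paper itself offers no proof of Theorem~\ref{thm:Minkowski}; it is quoted as a known result from the geometry of numbers (\cite[Theorem 1.5]{mg02}). A genuine proof would have to be a geometry-of-numbers argument: for instance, one reduces to Minkowski's convex body theorem (any symmetric convex body of volume exceeding $2^K\det(\Lambda)$ contains a nonzero lattice point) applied to a ball of radius $\sqrt{K}\,|\det(\mathbf{F})|^{1/K}$ to control $\lambda_1$, and then bounds each $\lambda_m$ by $\sqrt{K}$ times the corresponding geometric mean via a volume/packing argument over the sublattice spanned by the first $m$ minimizers --- none of which appears in your proposal. If your intent was to prove Theorem~\ref{thm:SumRate}, your write-up matches the paper's proof of that theorem essentially line for line; but as a proof of the stated Minkowski theorem it is vacuous.
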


We are now ready to prove Theorem~\ref{thm:SumRate}.

\vspace{2mm}

\begin{proof}[Proof of Theorem~\ref{thm:SumRate}]
Let $\Lambda(\mathbf{F})$ be a lattice spanned by the matrix $\bF$ from~\eqref{latticeMat}, and let $\lambda_1(\bF),\ldots,\lambda_K(\bF)$ be its $K$ successive minima. Let $\ba_1, \ldots, \ba_K \in \mathbb{Z}^K$ denote the optimal coefficient vectors.
By Definition~\ref{def:optEqs} and~\eqref{effvarlattice} we have $\|\bF\ \ba_m\|=\lambda_m(\bF)$ for $m=1,\ldots,K$.
The sum of optimal computation rates is
\begin{align}
\sum_{m=1}^K R_{\text{comp},m}&=\sum_{m=1}^K R_{\text{comp}}(\mathbf{h},\mathbf{a}_m)\nonumber\\
&=\sum_{m=1}^K \frac{1}{2}\log\left(\frac{\Tsnr}{\sigma^2_{\text{eff}}(\mathbf{h},\mathbf{a}_m)}\right)\nonumber\\
&=\frac{K}{2}\log\left(\Tsnr\right)-\frac{1}{2}\log\left(\prod_{m=1}^K \left\|\bF \ \mathbf{a}_m\right\|^2\right)\nonumber\\
&=\frac{K}{2}\log\left(\Tsnr\right)-\frac{1}{2}\log\left(\prod_{m=1}^K \lambda^2_m(\bF)\right).\nonumber
\end{align}
Applying Theorem~\ref{thm:Minkowski} to the product $\prod_{m=1}^K \lambda^2_m(\bF)$ yields
\begin{align}
\sum_{m=1}^K R_{\text{comp},m}\geq\frac{K}{2}\log\left(\Tsnr\right)-\frac{1}{2}\log\left(K^K \left|\det(\bF)\right|^2\right). \label{sumRate2a}
\end{align}
Using Sylvester's determinant identity (see e.g.,~\cite{harville}) $$\det(\bI_{K \times K}+\Tsnr \ \bh\bh^T)=\det(1+\|\bh\|^2\Tsnr),$$ we have that
\begin{align}
\left|\det(\bF)\right|^2=\frac{\Tsnr^{K}}{1+\|\mathbf{h}\|^2\Tsnr} \ .\label{detG}
\end{align}
Substituting~\eqref{detG} into~\eqref{sumRate2a} proves the theorem.
\end{proof}

\vspace{1mm}

\begin{remark}
It is possible to avoid the loss of the constant factor $\nicefrac{K}{2}\log{K}$ in~\eqref{sumRate3} using successive compute-and-forward, as described in~\cite{nazer12IZS,oen13}. However, in this case the operational interpretation of the sum of computation rates becomes more involved than that described in the sequel. See~\cite{oen13} for more details.
\end{remark}

\vspace{1mm}

Next, we give an operational meaning to the $K$ optimal computation rates.

\subsection{Multiple-Access Sum Capacity to within a Constant Gap}\label{sub:mac2}

We now show that the compute-and-forward transform can be used for achieving several rate tuples within a constant gap from the boundary of the capacity region of the $K$-user MAC. To establish this result, we introduce a decoding technique that we will refer to as \emph{algebraic successive cancellation}. Namely, each decoded linear combination will be used to cancel out the effect of one user from the linear combinations that have yet to be decoded. We first illustrate the coding scheme by an example, and then formalize our result in Theorem~\ref{thm:MAC}.
\begin{example}
\label{ex:successive}
Consider the two-user MAC
\begin{align}
\by=\sqrt{5}\bx_1+\bx_2+\bz,\nonumber
\end{align}
at $\Tsnr=15$dB. It can be shown using~\eqref{optVar} and~\eqref{compRate2} that the compute-and-forward transform of this channel is
\begin{align}
\left(
  \begin{array}{c}
    \bs_1 \\
    \bs_2 \\
  \end{array}
\right)
=\left[\left(
            \begin{array}{cc}
              2 & 1 \\
              3 & 1 \\
            \end{array}
          \right)\left(
                        \begin{array}{c}
                          \bt_1 \\
                          \bt_2 \\
                        \end{array}
                      \right)+\left(
                                \begin{array}{c}
                                  \bz_{\text{eff},1} \\
                                  \bz_{\text{eff},2} \\
                                \end{array}
                              \right)
 \right]\Mod\nonumber
\end{align}
with $R_{\text{comp},1}\simeq2.409$ bits and $R_{\text{comp},2}\simeq1.372$ bits. Note that $(R_{\text{comp},1}+R_{\text{comp},2})/(1/2\log(1+\|\bh\|^2\Tsnr))\simeq0.998$.
We use a chain of three nested lattices $\Lambda\subseteq\Lambda_2\subseteq\Lambda_1$ that satisfy the conditions of Theorem~\ref{thm:CoF} in order to construct the codebooks $\mathcal{L}_1=\Lambda_1\cap\CV$ with rate $R_1$ arbitrarily close to $R_{\text{comp},1}$ for user $1$ and $\mathcal{L}_2=\Lambda_2\cap\CV$ with rate $R_2$ arbitrarily close to $R_{\text{comp},2}$ for user $2$.

From Theorem~\ref{thm:CoF}\eqref{thm:CoF1}, we know that \mbox{$\bv_1=[2\bt_1+\bt_2]\Mod$} can be decoded from $\bs_1$ since $R_1$ and $R_2$ are smaller than $R_{\text{comp},1}$. However, Theorem~\ref{thm:CoF} does not guarantee that $\bv_2=[3\bt_1+\bt_2]\Mod$ can be decoded directly from $\bs_2$ since the first user employs a codebook with a rate $R_1 \approx R_{\text{comp},1}$ which is higher than the second computation rate $R_{\text{comp},2}$. To circumvent this issue, we use the estimate $\mathbf{\hat{v}}_1$ of the linear combination $\mathbf{\hat{v}}_1$ as side information in order to cancel out the lattice point $\bt_1\in\Lambda_1\cap\CV$ from $\bs_2$.
Note that Theorem~\ref{thm:CoF}\eqref{thm:CoF3} guarantees that $[p\cdot\bt_k]\Mod=\mathbf{0}$, $k=1,2$ for some sufficiently large prime number $p$.
Let $2^{-1}\in\mathbb{Z}$ be an integer that satisfies $[2^{-1}\cdot 2]\bmod p=1$. The receiver computes
\begin{align}
\bs_2^{\text{SI}}&=\left[\bs_2-3\cdot 2^{-1}\mathbf{\hat{v}}_1 \right]\Mod\nonumber\\
&\stackrel{(a)}{=}\left[(3-3\cdot2^{-1}\cdot 2)\bt_1+(1-3\cdot 2^{-1})\bt_2+\bz_{\text{eff},2}\right]\Mod\nonumber\\
&\stackrel{(b)}{=}\left[[1-3\cdot2^{-1}]\bmod p\cdot\bt_2+\bz_{\text{eff},2}\right]\Mod\nonumber\\
&=\left[\tilde{a}_{12}\cdot\bt_2+\bz_{\text{eff},2}\right]\Mod,\label{exampleEq}
\end{align}
where $\tilde{a}_{12}=[1\ -\ 3\cdot2^{-1}]\bmod p$. Step $(a)$ in~\eqref{exampleEq} follows from the distributive law. Step $(b)$ follows since $3-3\cdot2^{-1}\cdot 2=M\cdot p$ for some $M\in\ZZ$. Thus,
\begin{align}
\left[(3-3\cdot2^{-1}\cdot 2)\bt_1 \right]\Mod&=\left[M\cdot p \cdot \bt_1 \right]\Mod\nonumber\\
&=\left[M\cdot [p\cdot\bt_1]\Mod \right]\Mod\nonumber\\
&=\mathbf{0},\nonumber
\end{align}
where the last equality is justified by Theorem~\ref{thm:CoF}\eqref{thm:CoF3}.

Now only $\bt_2$ participates in the linear combination $\mathbf{\tilde{v}}_2=[\tilde{a}_{12}\bt_2]\Mod$ and, since $R_2$ is smaller than $R_{\text{comp},2}$, Theorem~\ref{thm:CoF} guarantees that it can be decoded from $\bs_2^{\text{SI}}$. This is accomplished by quantizing onto $\Lambda_2$ and reducing modulo $\Lambda$, $$\mathbf{\hat{\tilde{v}}}_2=\big[Q_{\Lambda_2}(\bs_2^{\text{SI}})\big]\Mod.$$
After decoding both linear combinations $\bv_1$ and $\mathbf{\tilde{v}}_2$ the receiver can solve for the transmitted lattice points $\bt_1$ and $\bt_2$, as the two linear combinations are full-rank over $\Zp$. We have therefore shown that the rate region $R_1<R_{\text{comp},1}$ and $R_2<R_{\text{comp},2}$ is achievable. In a similar manner, we can show that the rate region $R_1<R_{\text{comp},2}$ and $R_2<R_{\text{comp},1}$ is achievable with this scheme.
\end{example}

\vspace{3mm}

In order to formally characterize the achievable rate region, we will need the following definition which identifies the orders for which algebraic successive cancellation can be performed.
\begin{definition}
For a full-rank $K\times K$ matrix $\bA$ with integer-valued entries we define the \emph{pseudo-triangularization} process,
which transforms the matrix $\bA$ to a matrix $\mathbf{\tilde{A}}$ which is upper triangular up to column permutation $\mathbf{\pi}=\left[\pi(1) \ \pi(2) \ \cdots  \ \pi(K)\right]$. This is accomplished by left-multiplying $\bA$ by a lower triangular matrix $\bL$ with unit diagonal, such that $\mathbf{\tilde{A}}=\bL\bA$ is upper triangular up to column permutation $\mathbf{\pi}$. Although the matrix $\bA$ is integer valued, the matrices $\bL$ and $\mathbf{\tilde{A}}$ need not necessarily be integer valued.
Note that the pseudo-triangularization process is reminiscent of Gaussian elimination except that row switching and row multiplication are prohibited. It is also closely connected to the LU decomposition where only column pivoting is permitted.
\end{definition}

\begin{example}
\label{ex:pseudoTriEx}
The $2\times 2$ matrix
\begin{align}
\bA=\left(
      \begin{array}{cc}
        2 & 1  \\
        3 & 1  \\
      \end{array}
    \right)\nonumber
\end{align}
from Example~\ref{ex:successive}
can be pseudo-triangularized with two different permutation vectors
\begin{align}
\mathbf{\tilde{A}}&=\left(
      \begin{array}{cc}
        1 & 0  \\
        -\frac{3}{2} & 1  \\
      \end{array}
    \right)\cdot\bA\nonumber\\
    &=\left(
      \begin{array}{cc}
        2 & 1 \\
        0 & -\frac{1}{2} \\
      \end{array}
       \right), \ \ \ \ \mathbf{\pi}=[1 \ 2],\nonumber
       \end{align}
or
       \begin{align}
\mathbf{\tilde{A}}&=\left(
      \begin{array}{cc}
        1 & 0  \\
        -1 & 1  \\
      \end{array}
    \right)\cdot\bA\nonumber\\
    &=\left(
      \begin{array}{cc}
        2 & 1 \\
        1 & 0 \\
      \end{array}
       \right), \ \ \ \ \mathbf{\pi}=[2 \ 1].\nonumber
\end{align}
\end{example}
\begin{remark}
Any full-rank matrix can be triangularized using the Gaussian elimination process, and therefore any full-rank matrix can be pseudo-triangularized with at least one permutation vector $\mathbf{\pi}$. In particular, since for any MAC the integer-valued matrix $\bA$ from the compute-and-forward transform is full-rank, it can always be pseudo-triangularized with at least one permutation vector $\mathbf{\pi}$. There are full-rank matrices that can be pseudo-triangularized with several different permutation vectors, such as $\bA$ from Example~\ref{ex:pseudoTriEx}. However, there are also full-rank matrices $\bA$ that can be pseudo-triangularized with only one permutation vector $\mathbf{\pi}$. An example of such a matrix is the identity matrix $\bI_{K\times K}$.
\end{remark}
\vspace{1mm}

The next theorem gives an achievable rate region for the MAC under the compute-and-forward transform. The proof is given in Appendix~\ref{app:MACproof} and follows along the same lines as Example~\ref{ex:successive} .
\begin{theorem}
Consider the MAC~\eqref{MACchannel}.
For any $\epsilon>0$ and $n$ large enough, there exists a chain of $n$-dimensional nested lattices $\Lambda\subseteq\Lambda_K\subseteq\cdots\subseteq\Lambda_1$ forming the set of codebooks $\mathcal{L}_1,\ldots,\mathcal{L}_K$ with rates $R_1,\ldots,R_K$ such that for all $\bh\in\RR^K$, if:
\begin{enumerate}
\item each user $k$ encodes its message using the codebook $\mathcal{L}_k$,
\item the integer-valued matrix from the compute-and-forward transform of the MAC~\eqref{MACchannel} can be pseudo-triangularized with the permutation vector $\mathbf{\pi}$, and the optimal computation rates are $R_{\text{comp},1}\geq\cdots\geq R_{\text{comp},K}$,
\item all rates $R_1,\ldots,R_K$ satisfy
\begin{align}
R_k<R_{\text{comp},\pi^{-1}(k)}, \ \text{for } k=1,\ldots,K\label{rateAllocation}
\end{align}
where $\mathbf{\pi}^{-1}$ is the inverse permutation vector of $\mathbf{\pi}$,
\end{enumerate}
then all messages can be decoded with error probability smaller than $\epsilon$.
\label{thm:MAC}
\end{theorem}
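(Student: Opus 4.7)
The plan is to translate the algebraic successive cancellation argument of Example~\ref{ex:successive} to the general $K$-user case, using the pseudo-triangularization of $\bA$ to dictate the cancellation order. The first step is to apply the compute-and-forward transform from Section~\ref{sub:coftransform}. This yields $K$ modulo-$\Lambda$ observations $\bs_m = [\bv_m + \bz_{\mathrm{eff},m}]\Mod$ where $\bv_m = [\sum_k a_{mk}\bt_k]\Mod$ and the effective noise $\bz_{\mathrm{eff},m}$ has variance such that, by Theorem~\ref{thm:CoF}(a), any integer combination of lattice points participating in $\bv_m$ with rates strictly less than $R_{\mathrm{comp},m}$ can be decoded reliably by quantizing onto the appropriate fine lattice.

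Next, I would invoke Lemma~\ref{lem:ModuloTriangularization} from the appendix, which is the analogue of the pseudo-triangularization of $\bA$ carried out modulo $p$ over the integers. Concretely, for the given permutation $\mathbf{\pi}$ the lemma supplies integer coefficients $\{r_{m\ell}\}_{\ell<m}$ such that
\begin{align}
\left[ \sum_{\ell=1}^{m-1} r_{m\ell}\, \bv_{\ell} + \bv_m \right]\Mod = \left[ \sum_{j=m}^K \tilde a_{m,\pi(j)}\, \bt_{\pi(j)} \right]\Mod \triangleq \tilde{\bv}_m,\nonumber
\end{align}
i.e., $\bt_{\pi(1)},\ldots,\bt_{\pi(m-1)}$ no longer appear in $\tilde{\bv}_m$. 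The existence of the integer $r_{m\ell}$'s (as opposed to the rational entries of $\bL$) relies on Theorem~\ref{thm:CoF}(c), which guarantees $[p\bt]\Mod=\mathbf{0}$ for all lattice codewords, so that any rational coefficient with denominator dividing $p$ can be realized over $\ZZ_p$.

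Proceeding inductively in $m=1,\ldots,K$, and assuming $\hat{\bv}_1=\bv_1,\ldots,\hat{\bv}_{m-1}=\bv_{m-1}$ have already been decoded correctly, the receiver forms
\begin{align}
\bs_m^{\mathrm{SI}} = \left[ \bs_m + \sum_{\ell=1}^{m-1} r_{m\ell}\hat{\bv}_\ell \right]\Mod = \left[ \tilde{\bv}_m + \bz_{\mathrm{eff},m} \right]\Mod,\nonumber
\end{align}
using the distributive law of the modulo operation; crucially, the effective noise is untouched by this side-information cancellation. By hypothesis $R_k < R_{\mathrm{comp},\pi^{-1}(k)}$, so for every index $j\geq m$ participating in $\tilde{\bv}_m$ we have $R_{\pi(j)} < R_{\mathrm{comp},j} \leq R_{\mathrm{comp},m}$ since the optimal computation rates are non-increasing. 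Therefore Theorem~\ref{thm:CoF}(a) applies with the quantizer taken to be $Q_{\Lambda_{\map(\pi(m))}}$ (the densest lattice among those still participating), and $\tilde{\bv}_m$ can be decoded with error probability at most $\epsilon/K$ for $n$ large enough. Recovering $\bv_m=[\tilde{\bv}_m - \sum_{\ell<m} r_{m\ell}\bv_\ell]\Mod$ then closes the induction. Once all $K$ linear combinations are available, Corollary~\ref{cor:invertibility} lets us invert $\bA$ over $\ZZ_p$ to extract each $\bt_k$, and decoding $\hat{w}_k$ from $\bt_k$ follows from the isomorphism in Theorem~\ref{thm:CoF}(b). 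A union bound on the $K$ successive steps keeps the overall error probability below $\epsilon$.

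The main obstacle is producing the integer cancellation coefficients $\{r_{m\ell}\}$ compatibly with the modulo-$\Lambda$ arithmetic, which is precisely the content of Lemma~\ref{lem:ModuloTriangularization}; the heart of that lemma is ruling out the row-switching step that standard Gaussian elimination would use, since a combination that has not yet been decoded cannot be employed to cancel lattice points from a pending one. The pseudo-triangularization hypothesis is exactly the condition that makes row-switch-free elimination possible in the order $\mathbf{\pi}$, so once Lemma~\ref{lem:ModuloTriangularization} is in hand, the rate bookkeeping and successive decoding argument above goes through without further subtleties.
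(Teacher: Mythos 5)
Your proposal is correct and follows essentially the same route as the paper's proof in Appendix~\ref{app:MACproof}: apply the compute-and-forward transform, invoke Lemma~\ref{lem:ModuloTriangularization} to obtain an integer-valued lower-triangular cancellation matrix $\bL^{(p)}$ modulo $p$ (the paper phrases this as $\bL^{(p)}=\bI+\bR$ rather than per-row coefficients $r_{m\ell}$, but these are the same thing), decode sequentially in the order dictated by $\mathbf{\pi}$ using the rate chain $R_{\pi(j)}<R_{\text{comp},j}\leq R_{\text{comp},m}$ for $j\geq m$, and finish by inverting $\bA$ over $\Zp$ via Corollary~\ref{cor:invertibility}. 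The only detail you left implicit is the paper's explicit choice $\map(k)=\pi^{-1}(k)$, which is exactly what makes $\Lambda_{\map(\pi(m))}=\Lambda_m$ the densest lattice still present in $\tilde{\bv}_m$; otherwise the bookkeeping matches.
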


\vspace{1mm}

Combining Theorems~\ref{thm:SumRate} and~\ref{thm:MAC} gives the following theorem.

\vspace{1mm}

\begin{theorem}
\label{thm:constantGapMAC}
The sum rate achieved by the compute-and-forward transform has a gap of no more than $\nicefrac{K}{2}\log K$ bits from the sum capacity of the MAC.
\end{theorem}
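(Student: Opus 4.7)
The plan is to combine Theorem~\ref{thm:SumRate} with Theorem~\ref{thm:MAC} in a straightforward way. The former provides a lower bound on the sum of the $K$ optimal computation rates, while the latter says this sum is essentially achievable as a MAC sum rate under a valid pseudo-triangularization order.

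First I would invoke Theorem~\ref{thm:MAC}. Since the integer matrix $\bA$ produced by the compute-and-forward transform is full-rank (its rows are an optimal set of linearly independent integer coefficient vectors), the remark following the definition of pseudo-triangularization guarantees the existence of at least one permutation vector $\mathbf{\pi}$ for which $\bA$ can be pseudo-triangularized. Fix any such $\mathbf{\pi}$. Then Theorem~\ref{thm:MAC} asserts that for any $\epsilon>0$ and $n$ large enough, every rate tuple satisfying $R_k<R_{\text{comp},\pi^{-1}(k)}$ for $k=1,\dots,K$ is achievable with error probability at most $\epsilon$. Summing over $k$ and recognizing that $\pi^{-1}$ merely relabels the indices, the achievable sum rate is arbitrarily close to
\begin{align*}
\sum_{k=1}^K R_{\text{comp},\pi^{-1}(k)} \;=\; \sum_{m=1}^K R_{\text{comp},m}.
\end{align*}

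Next I would invoke Theorem~\ref{thm:SumRate}, which lower bounds this quantity as
\begin{align*}
\sum_{m=1}^K R_{\text{comp},m} \;\geq\; \frac{1}{2}\log\!\left(1+\|\bh\|^2\Tsnr\right)-\frac{K}{2}\log(K).
\end{align*}
Since the sum capacity of the $K$-user Gaussian MAC is exactly $\frac{1}{2}\log(1+\|\bh\|^2\Tsnr)$, subtracting the achievable sum rate from the sum capacity gives a gap of at most $\frac{K}{2}\log K$ bits, uniformly over $\bh$ and $\Tsnr$. Taking $\epsilon\to 0$ concludes the argument.

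I do not expect any significant obstacle here; the two ingredients (the Minkowski-based lower bound on $\sum_m R_{\text{comp},m}$, and the achievability of this sum via algebraic successive cancellation) have already done all the heavy lifting. The only point worth emphasizing for clarity is that existence of a valid pseudo-triangularization permutation $\mathbf{\pi}$ is automatic because the optimal coefficient matrix $\bA$ is full-rank, so Theorem~\ref{thm:MAC} applies without any additional hypothesis.
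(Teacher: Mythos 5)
Your proof is correct and follows exactly the same route as the paper: invoke Theorem~\ref{thm:MAC} to achieve the rate tuple $R_k = R_{\text{comp},\pi^{-1}(k)} - \delta$, sum it (noting $\pi^{-1}$ is a relabeling), and then apply the Minkowski-based lower bound of Theorem~\ref{thm:SumRate} against the MAC sum capacity $\frac{1}{2}\log(1+\|\bh\|^2\Tsnr)$. Your added remark that the full-rank property of $\bA$ guarantees the existence of a valid pseudo-triangularization permutation is also precisely the justification the paper relies on.
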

\begin{proof}
Let $R_{\text{comp},1}\geq\cdots\geq R_{\text{comp},K}$ be the optimal computation rates in the compute-and-forward transform of the MAC~\eqref{MACchannel}.
The integer-valued matrix from the compute-and-forward transform can be pseudo-triangularized with at least one permutation vector $\mathbf{\pi}$. By Theorem~\ref{thm:MAC}, the rate tuple
\begin{align}
R_k=R_{\text{comp},\pi^{-1}(k)}-\delta, \ \text{for } k=1,\ldots,K
\end{align}
is achievable for any $\delta>0$. For this rate tuple we have
\begin{align}
\sum_{k=1}^K R_k &=\sum_{k=1}^K \left(R_{\text{comp},\pi^{-1}(k)}-\delta\right)\nonumber\\
&=\sum_{k=1}^K R_{\text{comp},k}-K\delta\nonumber\\
&\geq \frac{1}{2}\log\left(1+\|\bh\|^2\Tsnr\right)-\frac{K}{2}\log(K)-K\delta,\label{macSumBound}
\end{align}
where~\eqref{macSumBound} follows from Theorem~\ref{thm:SumRate}. Since this is true for any $\delta>0$, the result follows.
\end{proof}

\subsection{Effective Multiple-Access Channel}\label{sub:effmac}

A channel that often arises in the context of lattice interference alignment is a $K$-user Gaussian multiple-access channel (MAC) with integer-valued ratios between some of the channel coefficients. Specifically, the output of such a channel can be written as
\begin{align}
\by=\sum_{\ell=1}^L g_\ell \left(\sum_{i \in \mathcal{K}_\ell}b_{i}\bx_{i}\right)+\bz,\label{integerRatiosMAC}
\end{align}
where $\mathcal{K}_1, \ldots, \mathcal{K}_L$ are disjoint subsets of $\{1,\ldots,K\}$. We assume that the $b_{i}\in\ZZ$ are non-zero integers, which opens up the possibility of lattice alignment.

The channel~\eqref{integerRatiosMAC} may describe the signal seen by a receiver in an interference network, perhaps after appropriate precoding at the transmitters. In such networks, each receiver is only interested in the messages from some of the users while the others act as interferers. Hence, it is beneficial to align several interfering users into one effective interferer, by taking advantage of the fact that the sum of lattice codewords is itself a lattice codeword.

\vspace{1mm}

\begin{definition}[Effective users]
For the MAC specified by~\eqref{integerRatiosMAC}, we define $L$ \textit{effective users}
\begin{align}
\bx_{\text{eff},\ell} \triangleq \sum_{i \in \mathcal{K}_\ell}b_{i}\bx_{i}, \ \ell=1,\ldots,L.\nonumber
\end{align}
\end{definition}

\vspace{1mm}

\begin{definition}[Effective MAC]
The $K$-user MAC~\eqref{integerRatiosMAC} induces the \textit{effective $L$-user MAC}
\begin{align}
\by=\sum_{\ell=1}^L g_\ell \bx_{\text{eff},\ell}+\bz,\label{effectiveMAC}
\end{align}
with the vector of effective channel coefficients $\bg=[g_1 \ \cdots \ g_L]^T\in\RR^L$. The effective channel is further characterized by the effective users' weights  $$b^2_{\text{eff},\ell}\triangleq\sum_{i \in \mathcal{K}_\ell}b^2_{i}$$ for $\ell=1,\ldots,L$, and the effective (diagonal) weight matrix \begin{align}
\bB\triangleq\text{diag}(b^2_{\text{eff},1},\ldots,b^2_{\text{eff},L}).\label{Bdef}
\end{align}
\end{definition}

\vspace{1mm}

\begin{definition}[Effective lattice points]
Let $\bt_{i}$ be the lattice point transmitted by user $i$. We define the \textit{effective lattice point} corresponding to effective user $\ell$ as
\begin{align}
\bt_{\text{eff},\ell}=\left[\sum_{i \in \mathcal{K}_\ell} b_{i}\bt_{i}\right]\Mod.\nonumber
\end{align}
Let $\map_{\text{eff}}(\ell)=\min_{i\in\mathcal{K}_{\ell}}\map(i)$ (where $\map(\cdot)$ is the mapping between users and fine lattices defined in Section~\ref{s:cf})
be the index of the densest lattice contributing to $\bt_{\text{eff},\ell}$. Since all lattices are nested, it follows that $\bt_{\text{eff},\ell}\in\Lambda_{\map_{\text{eff}}(\ell)}$.
\end{definition}

\vspace{2mm}

\begin{example}[Symmetric $K$-user interference channel]
Consider the symmetric $K$-user interference channel~\eqref{symICeq}. The channel seen by the $k$th receiver is of the form of~\eqref{integerRatiosMAC} with $g_1=1$, $g_2=g$, $\mathcal{K}_1=\{k\}$,  $\mathcal{K}_2=\{1,\ldots,K\}\setminus k$, and $b_i=1$ for $i=1,\ldots,K$.
If each of the $K$ users transmits a single codeword drawn from a common nested lattice code, the channel becomes an effective two-user MAC,
\begin{align}
\by_k=\bx_{\text{eff},k1}+g\bx_{\text{eff},k2}+\bz_k,\nonumber
\end{align}
where the effective users are $\bx_{\text{eff},k1}=\bx_k$ and $\bx_{\text{eff},k2}=\sum_{i\neq k}\bx_i$, and the effective users' weights are $b^2_{\text{eff},1}=1$ and $b^2_{\text{eff},2}=K-1$. The effective lattice points are $\bt_{\text{eff},k1}=\bt_k$  and $\bt_{\text{eff},k2}=[\sum_{i\neq k}\bt_i]\Mod$.
\label{ex:symIC}
\end{example}

\vspace{2mm}

Our achievable schemes for the symmetric $K$-user interference channel, developed in Section~\ref{s:lower}, are based on transforming the $K$-user MAC seen by each receiver into an effective MAC with less effective users. We will develop two schemes: One transforms the channel into an effective two-user MAC as in the example above. The other, which mimics the Han-Kobayashi approach, transforms the channel into an effective three-user MAC.

\vspace{2mm}

When lattice interference alignment schemes are designed properly, the message intended for the receiver is mapped into a separate effective user, while multiple interfering users are folded into a smaller number of effective users. In this case, it suffices for the receiver to decode only the $L$ effective lattice points corresponding to the effective users, rather than the $K$ lattice points transmitted by all users.
In our considerations, the effective lattice points are recovered by first decoding $L$ integer linear combinations of the form
\begin{align}
\bv&=\left[\sum_{\ell=1}^L a_{\ell}\sum_{i\in\mathcal{K}_\ell} b_i\bt_i\right]\Mod\nonumber\\
&=\left[\sum_{\ell=1}^L a_{\ell}\bt_{\text{eff},\ell}\right]\Mod
\end{align}
with linearly independent coefficient vectors, and then solving for $\bt_{\text{eff},1},\ldots,\bt_{\text{eff},L}$.

As in Section~\ref{s:cf}, in order to decode an integer linear combination $\bv$, the receiver first scales its observation by a factor $\beta$, removes the dithers, and reduces modulo $\Lambda$, which yields
\begin{align}
\bs&=\left[\beta\by+\sum_{\ell=1}^L a_\ell\sum_{i\in\mathcal{K}_\ell}b_{i}\bd_{i}\right]\Mod\nonumber\\
&=\bigg[\sum_{l=1}^L a_\ell \bx_{\text{eff},\ell}+\sum_{\ell=1}^L a_\ell\sum_{i\in\mathcal{K}_\ell}b_{i}\bd_{i}\nonumber\\
& \ \ \ \ \ \ \ \ +\sum_{\ell=1}^L (\beta g_\ell-a_\ell )\bx_{\text{eff},\ell}+\beta\bz\bigg]\Mod\nonumber\\
&=\left[\bv+\bz_{\text{eff}}(\bg,\ba,\beta,\left\{b_i\right\})\right]\Mod,\label{CoFestimateEff}
\end{align}
where
\begin{align}
\bz_{\text{eff}}(\bg,\ba,\beta,\left\{b_i\right\})&=\beta\by-\sum_{\ell=1}^L a_\ell\bx_{\text{eff},\ell}\nonumber\\
&=\sum_{\ell=1}^L (\beta g_\ell-a_\ell)\sum_{i \in \mathcal{K}_\ell} b_{i}\bx_{i}+\beta\bz\label{ZeffTag}
\end{align}
is effective noise which is statistically independent of $\bv$. Its effective variance is
\begin{align}
\sigma^2_{\text{eff}}(\bg,\ba,\beta,\bB)=\Tsnr\ \sum_{\ell=1}^L(\beta g_\ell-a_\ell)^2b^2_{\text{eff},\ell}+\beta^2,\label{ZeffVarTag}
\end{align}
where $\bB$ is defined in~\eqref{Bdef}.
Let $\ell^*=\min_{\ell:a_\ell\neq 0}\map_{\text{eff}}(\ell)$ be the index of the densest lattice participating in the linear combination $\bv$. Since all lattices are nested, then $\bv\in\Lambda_{\ell^*}$.
The receiver produces an estimate for $\bv$ by applying to $\bs$ the lattice quantizer associated with $\Lambda_{\ell^*}$,
\begin{align}
\hat{\bv}=\left[Q_{\Lambda_{\ell^*}}({\bs})\right]\Mod.
\end{align}
It follows from Theorem~\ref{thm:CoF} that there exists a chain of $K+1$ nested lattices which allows to decode $\bv$ with a vanishing error probability so long as
\begin{align}
R_{i}&<R_{\text{comp}}(\bg,\ba,\beta,\bB)
=\frac{1}{2}\log\left(\frac{\Tsnr}{\sigma^2_{\text{eff}}(\bg,\ba,\beta,\bB)}\right),
\label{CompRatesEff}
\end{align}
for all $i\in \bigcup_{\ell: a_\ell \neq 0} \mathcal{K}_\ell$.

The expression for $\sigma^2_{\text{eff}}(\bg,\ba,\beta,\bB)$ is equal to the MSE for linear estimation of $\tilde{X}_{\text{eff}}=\sum_{\ell=1}^L a_\ell X_{\text{eff},\ell}$ from $Y=\sum_{\ell=1}^L g_\ell X_{\text{eff},\ell}+Z$ where $\{X_{\text{eff},\ell}\}_{\ell=1}^L$ are statistically independent random variables with zero mean and variances $b^2_{\text{eff},\ell} \Tsnr$ respectively and $Z$ is statistically independent of $\{X_{\text{eff},\ell}\}_{\ell=1}^L$ with zero mean and unit variance.
Hence, the minimizing value of $\beta$ is the linear MMSE estimation coefficient of $\tilde{X}$ from $Y$.
A straightforward calculation shows that the minimizing value of $\beta$ is
\begin{align}
\beta=\frac{\mathbb{E}(\tilde{X}_{\text{eff}} Y)}{\Var(Y)}=\frac{\Tsnr \ \bg^T\bB\ba}{1+\Tsnr\  \bg^T\bB\bg}\nonumber
\end{align}
and the MSE it achieves is
\begin{align}
\sigma^2_{\text{eff}}(\bg,\ba,\bB)&\triangleq \min_{\beta\in\RR}\sigma^2_{\text{eff}}(\bg,\ba,\beta,\bB)\nonumber\\
&=\Tsnr\ \ba^T \left(\bB-\frac{\Tsnr\ \bB\bg\bg^T\bB}{1+\Tsnr\cdot\bg^T\bB\bg}\right) \ba\label{preWoodbury2}\\
&=\ba^T\left(\Tsnr^{-1}\bB^{-1}+\bg\bg^T\right)^{-1}\ba\label{woodbury2}\\
&=\left\|\left(\Tsnr^{-1}\bB^{-1}+\bg\bg^T\right)^{-1/2}\ba\right\|^2,\nonumber
\end{align}
where again~\eqref{woodbury2} can be verified using Woodbury's matrix identity~\cite[Thm 18.2.8]{harville}. Accordingly, we define
\begin{align}
R_{\text{comp}}(\bg,\ba,\bB)\triangleq \frac{1}{2}\log\left(\frac{\Tsnr}{\sigma^2_{\text{eff}}(\bg,\ba,\bB)}\right).
\label{effCompRate}
\end{align}

\vspace{1mm}

As in Section~\ref{s:cf}, we define the set of optimal $L$ coefficient vectors for the equivalent channel~\eqref{effectiveMAC} as the $L$ linearly independent vectors $\{\ba_1,\ldots,\ba_L\}$ that yield the highest computation rates $R_{\text{comp},1}=R_{\text{comp}}(\bg,\ba_1,\bB)\geq\cdots\geq R_{\text{comp},L}=R_{\text{comp}}(\bg,\ba_L,\bB)$ (see Definition~\ref{def:optEqs}). The compute-and-forward transform of the effective $L$-user MAC is
\begin{align}
{\bS}
&=\left(\begin{array}{c}
            \left[\beta_1\by+\sum_{\ell=1}^L a_{1\ell}\sum_{i\in\mathcal{K}_\ell}b_{i}\bd_{i}\right]\bmod\Lambda \\
            \vdots \\
            \left[\beta_L\by+\sum_{\ell=1}^L a_{L\ell}\sum_{i\in\mathcal{K}_\ell}b_{i}\bd_{i}\right]\bmod\Lambda
          \end{array}
\right)\nonumber\\
&=\left[\bA \left(
                       \begin{array}{c}
                         \bt_{\text{eff},1} \\
                         \vdots \\
                         \bt_{\text{eff},L} \\
                       \end{array}
                     \right)+{\bZ}_{\text{eff}} \right]\bmod\Lambda,\label{CoFTransformEff}
\end{align}
where $\bA=[\ba_1 \cdots \ba_L]^T$ and ${\bZ}_{\text{eff}}=[{\bz}^{T}_{\text{eff},1} \cdots {\bz}^{T}_{\text{eff},L}]^T$.

\vspace{1mm}

The next two theorems are simple extensions of Theorems~\ref{thm:SumRate} and~\ref{thm:MAC}. Their proofs are given in Appendix~\ref{app:effproofs}.

\vspace{2mm}

\begin{theorem}
The sum of optimal computation rates for the effective $L$-user MAC~\eqref{effectiveMAC} is lower bounded by
\begin{align}
\sum_{\ell=1}^L R_{\text{comp},\ell}\geq\frac{1}{2}\log\left(\frac{1+\Tsnr\sum_{\ell=1}^L g_\ell^2 b^2_{\text{eff},\ell}}{\det(\mathbf{B})}\right)-\frac{L}{2}\log(L).\nonumber
\end{align}
\label{thm:EffSumRate}
\end{theorem}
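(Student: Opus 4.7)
The plan is to mirror the proof of Theorem~\ref{thm:SumRate}, with the role of the matrix $\bF$ played by the analogous matrix that arises from the effective-MAC variance expression~\eqref{woodbury2}. Specifically, I would define
\begin{align*}
\bF' \triangleq \left(\Tsnr^{-1}\bB^{-1}+\bg\bg^T\right)^{-1/2}
\end{align*}
and the associated $L$-dimensional lattice $\Lambda(\bF')=\{\bF'\ba : \ba \in \ZZ^L\}$, so that $\sigma^2_{\text{eff}}(\bg,\ba,\bB)=\|\bF'\ba\|^2$. By the defining property of the optimal coefficient vectors (Definition~\ref{def:optEqs} applied in the effective setting) we have $\|\bF'\ba_m\|=\lambda_m(\bF')$, the $m$th successive minimum of $\Lambda(\bF')$.

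Next I would rewrite the sum of the optimal computation rates using~\eqref{effCompRate} as
\begin{align*}
\sum_{\ell=1}^L R_{\text{comp},\ell} = \frac{L}{2}\log(\Tsnr) - \frac{1}{2}\log\left(\prod_{m=1}^L \lambda_m^2(\bF')\right),
\end{align*}
and then invoke Minkowski's theorem (Theorem~\ref{thm:Minkowski}) to upper bound $\prod_{m=1}^L \lambda_m^2(\bF') \leq L^L |\det(\bF')|^2$, which yields
\begin{align*}
\sum_{\ell=1}^L R_{\text{comp},\ell} \geq \frac{L}{2}\log(\Tsnr) - \frac{L}{2}\log(L) - \frac{1}{2}\log\!\big(|\det(\bF')|^2\big).
\end{align*}

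The remaining step is to evaluate $|\det(\bF')|^2=\det\!\big((\Tsnr^{-1}\bB^{-1}+\bg\bg^T)^{-1}\big)$. Factoring out $\Tsnr^{-1}\bB^{-1}$ and applying Sylvester's determinant identity gives
\begin{align*}
\det(\Tsnr^{-1}\bB^{-1}+\bg\bg^T) &= \det(\Tsnr^{-1}\bB^{-1})\det(\bI_{L\times L}+\Tsnr\bB\bg\bg^T)\\
&=\frac{1+\Tsnr\,\bg^T\bB\bg}{\Tsnr^L\det(\bB)},
\end{align*}
so that $|\det(\bF')|^2 = \Tsnr^L \det(\bB)/(1+\Tsnr\,\bg^T\bB\bg)$. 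Noting that $\bg^T\bB\bg=\sum_{\ell=1}^L g_\ell^2 b^2_{\text{eff},\ell}$ by~\eqref{Bdef} and substituting back cancels the $\frac{L}{2}\log(\Tsnr)$ terms and produces the claimed bound.

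The argument is essentially routine once the effective matrix $\bF'$ is identified; the only thing requiring a little care is the determinant manipulation, where the weight matrix $\bB$ must be kept track of through the Sylvester step. I do not anticipate a genuine obstacle, since the derivation reduces to plugging the effective channel parameters into the framework already developed for Theorem~\ref{thm:SumRate}.
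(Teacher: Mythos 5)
Your proposal is correct and follows exactly the approach the paper takes: the paper's Appendix~\ref{app:effproofs} proof is the one-line remark that the argument is ``identical to that of Theorem~\ref{thm:SumRate} with $\bF=\left(\Tsnr^{-1}\bB^{-1}+\bg\bg^T\right)^{-1/2}$.'' Your determinant computation via Sylvester's identity is the right way to fill in that step and gives the stated bound.
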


\vspace{1mm}

\begin{theorem}
Consider the effective $L$-user MAC~\eqref{effectiveMAC}, induced from the $K$-user MAC~\eqref{integerRatiosMAC}, characterized by the effective channel vector $\bg$ and the effective weight matrix $\bB$.
For any $\epsilon>0$ and $n$ large enough there exists a chain of $n$-dimensional nested lattices $\Lambda\subseteq\Lambda_L\subseteq\cdots\subseteq\Lambda_1$ forming the set of codebooks $\mathcal{L}_1,\ldots,\mathcal{L}_L$ with rates $R_1,\ldots,R_L$ such that for all $\bg\in\RR^L$ and $\bB$, if:
\begin{enumerate}
\item each user $i\in\mathcal{K}_\ell$ encodes its message using the codebook $\mathcal{L}_\ell$ or a codebook nested in $\mathcal{L}_\ell$,
\item the integer-valued matrix from the compute-and-forward transform of the effective MAC~\eqref{effectiveMAC} can be pseudo-triangularized with the permutation vector $\mathbf{\pi}$, and the optimal computation rates are $R_{\text{comp},1}\geq\cdots\geq R_{\text{comp},L}$,
\item all rates $R_1,\ldots,R_L$ satisfy
\begin{align}
R_\ell<R_{\text{comp},\pi^{-1}(\ell)}, \ \text{for } \ell=1,\ldots,L\label{rateAllocationEff}
\end{align}
where $\mathbf{\pi}^{-1}$ is the inverse permutation vector of $\mathbf{\pi}$,
\end{enumerate}
then all effective lattice points $\bt_{\text{eff},\ell}$ can be decoded with error probability smaller than $\epsilon$.
\label{thm:MACeff}
\end{theorem}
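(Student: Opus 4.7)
The plan is to mirror the proof of Theorem~\ref{thm:MAC} with the $L$ effective users and effective lattice points $\bt_{\text{eff},\ell}$ playing the role of the individual users and their lattice points. The starting point is the compute-and-forward transform~\eqref{CoFTransformEff} of the effective MAC, whose $m$th output is $\bs_m=\left[\bv_m+\bz_{\text{eff}}(\bg,\ba_m,\beta_m,\{b_i\})\right]\Mod$ with effective noise variance given by~\eqref{ZeffVarTag} under the optimal MMSE choice $\beta_m$. Since the dithers of the underlying physical users are independent and uniform on $\CV$, each $\bx_{\text{eff},\ell}$ has second moment $b^2_{\text{eff},\ell}\Tsnr$, which is precisely what is encoded in $\bB$, so the effective noise analysis carries over unchanged. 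Moreover, because $\Lambda\subseteq\Lambda_L\subseteq\cdots\subseteq\Lambda_1$ is nested, each effective lattice point $\bt_{\text{eff},\ell}$ lies in $\Lambda_{\map_{\text{eff}}(\ell)}$ and all of the sum operations needed below are closed within the chain.

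Next, I would invoke the algebraic successive cancellation idea: by hypothesis the integer matrix $\bA=[\ba_1\ \cdots\ \ba_L]^T$ admits a pseudo-triangularization $\bL\bA=\mathbf{\tilde{A}}$ whose $m$th row has zero entries in columns $\pi(1),\ldots,\pi(m-1)$. Proceeding inductively, at step $m$ assume $\hat{\bv}_1=\bv_1,\ldots,\hat{\bv}_{m-1}=\bv_{m-1}$ have been decoded. Using Theorem~\ref{thm:CoF}\eqref{thm:CoF3} and the mod-$p$ inverse trick illustrated in~\eqref{exampleEq}, I would construct integer coefficients $r_{m1},\ldots,r_{m,m-1}\in\ZZ$ such that $\left[a_{m,\pi(j)}+\sum_{i<m}r_{mi}\,a_{i,\pi(j)}\right]\bmod p=0$ for every $j<m$; this is always possible because the corresponding real coefficients exist by pseudo-triangularization, and Lemma~\ref{lem:ModuloTriangularization} turns them into valid integer multipliers modulo $p$. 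Adding $\sum_{i<m}r_{mi}\hat{\bv}_i$ to $\bs_m$ and reducing mod $\Lambda$ therefore kills the contribution of $\bt_{\text{eff},\pi(1)},\ldots,\bt_{\text{eff},\pi(m-1)}$, producing $\mathbf{\tilde{v}}_m=\bigl[\sum_{j\geq m}\tilde a_{m,\pi(j)}\bt_{\text{eff},\pi(j)}\bigr]\Mod$ corrupted by the same effective noise $\bz_{\text{eff}}(\bg,\ba_m,\beta_m,\{b_i\})$.

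At this point Theorem~\ref{thm:CoF}\eqref{thm:CoF1}, applied to the effective MAC and noting that only effective users with indices in $\{\pi(m),\ldots,\pi(L)\}$ participate in $\mathbf{\tilde{v}}_m$, allows reliable decoding of $\mathbf{\tilde{v}}_m$ provided each participating effective user's rate is strictly less than $R_{\text{comp},m}$. The rate hypothesis~\eqref{rateAllocationEff} gives $R_{\pi(j)}<R_{\text{comp},\pi^{-1}(\pi(j))}=R_{\text{comp},j}\leq R_{\text{comp},m}$ for $j\geq m$, exactly as needed. From $\mathbf{\tilde{v}}_m$ the receiver recovers $\bv_m$ by subtracting $\sum_{i<m}r_{mi}\bv_i$ mod $\Lambda$, closing the induction. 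After obtaining $\bv_1,\ldots,\bv_L$, Corollary~\ref{cor:invertibility} and the linear independence of $\{\ba_1,\ldots,\ba_L\}$ over $\Zp$ let the receiver solve for $\bt_{\text{eff},1},\ldots,\bt_{\text{eff},L}$. A union bound over the $L$ decoding steps keeps the total error probability below $\epsilon$ for $n$ large enough.

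The main obstacle, as in the proof of Theorem~\ref{thm:MAC}, is the construction of the integer cancellation coefficients $r_{mi}$: the triangularizing matrix $\bL$ need not have integer entries, so one must carefully lift rational operations to $\Zp$ while preserving the identity $[p\,\bt_{\text{eff},\ell}]\Mod=\mathbf{0}$. This is handled by Lemma~\ref{lem:ModuloTriangularization}, which applies to the effective lattice points without modification because they live in the same nested chain as ordinary lattice codewords. The remaining adaptations, including the use of $\bB$ in the variance and the rank check modulo $p$, are direct rewrites of the corresponding passages in the proof of Theorem~\ref{thm:MAC}.
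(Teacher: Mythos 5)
Your proof is correct and takes essentially the same route as the paper, which simply states that the argument is obtained by "repeating the proof of Theorem~\ref{thm:MAC}" for the effective lattice points, together with the observation that $\bt_{\text{eff},\ell}\in\mathcal{L}_\ell$ whenever each user $i\in\mathcal{K}_\ell$ draws its codeword from $\mathcal{L}_\ell$ (or a nested sub-code), since a lattice is closed under integer combinations and mod-$\Lambda$ reduction. You spell out the algebraic successive cancellation and rate-matching steps more explicitly than the paper does, but the mechanism (pseudo-triangularization via Lemma~\ref{lem:ModuloTriangularization}, Theorem~\ref{thm:CoF}, inversion over $\Zp$) is identical.
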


\vspace{1mm}

\begin{corollary}[Achievable symmetric rate]
\label{cor:minCompRateAchievable}
Consider the effective $L$-user MAC~\eqref{effectiveMAC}, induced from the $K$-user MAC~\eqref{integerRatiosMAC}, characterized by channel coefficients $\bg$ and the effective weight matrix $\bB$.
There exists a pair of $n$-dimensional nested lattices $\Lambda\subseteq\Lambda_1$ forming the codebook $\mathcal{L}$ of rate $R$ such that for all $\bg\in\RR^L$ and $\bB$, if
\begin{enumerate}
\item all users encode their messages using $\mathcal{L}$ (or codebooks nested in $\mathcal{L}$),
\item The $L$th optimal computation rate in the compute-and-forward transform of~\eqref{effectiveMAC} is $R_{\text{comp},L}$,
\item $R<R_{\text{comp},L}$,
\end{enumerate}
then, for $n$ large enough, all effective lattice points $\bt_{\text{eff},\ell}$ can be decoded with an arbitrarily small error probability.
\end{corollary}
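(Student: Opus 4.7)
The plan is to derive this corollary as a direct specialization of Theorem~\ref{thm:MACeff} to the symmetric-rate case, noting that the constraints there simplify dramatically when all effective users operate at the same rate. First, I would invoke the compute-and-forward transform of the effective $L$-user MAC~\eqref{effectiveMAC} to obtain the full-rank integer matrix $\bA$ whose rows are optimal coefficient vectors $\ba_1,\ldots,\ba_L$ with associated computation rates $R_{\text{comp},1}\geq\cdots\geq R_{\text{comp},L}$.

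Next I would verify each of the three hypotheses of Theorem~\ref{thm:MACeff} for this setting. Since $\bA$ is full rank by Definition~\ref{def:optEqs}, the remark following Example~\ref{ex:pseudoTriEx} guarantees that there exists at least one permutation vector $\mathbf{\pi}$ under which $\bA$ admits a pseudo-triangularization. Fix any such $\mathbf{\pi}$. Setting $R_\ell = R$ for every $\ell$ and using $R<R_{\text{comp},L}=\min_m R_{\text{comp},m}$, the rate allocation condition~\eqref{rateAllocationEff} becomes
\begin{align}
R < R_{\text{comp},L} \leq R_{\text{comp},\pi^{-1}(\ell)}, \qquad \ell=1,\ldots,L,\nonumber
\end{align}
which holds automatically irrespective of the particular $\mathbf{\pi}$ chosen. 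Thus condition (3) is met.

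For condition (1), I would observe that because all users employ the codebook $\mathcal{L}$ (or codebooks nested in it), we do not need to construct a chain of $L+1$ distinct nested lattices as in Theorem~\ref{thm:MACeff}; it suffices to take the chain $\Lambda\subseteq\Lambda_1=\Lambda_2=\cdots=\Lambda_L$ collapsing to a single pair of nested lattices, with $\mathcal{L}=\Lambda_1\cap\CV$. Any user assigned a codebook nested in $\mathcal{L}$ simply corresponds to a subset of the lattice points in $\mathcal{L}$. Then Theorem~\ref{thm:MACeff} applies verbatim and yields that, for $n$ sufficiently large, all effective lattice points $\bt_{\text{eff},\ell}$ can be recovered with error probability at most $\epsilon$.

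There is no genuine obstacle here beyond carefully checking that the more general hypotheses of Theorem~\ref{thm:MACeff} indeed reduce to the single condition $R<R_{\text{comp},L}$ in the symmetric-rate, common-codebook regime; the key observation is simply that the smallest computation rate $R_{\text{comp},L}$ is a uniform upper bound that bypasses any dependence on the specific pseudo-triangularization order.
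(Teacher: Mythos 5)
Your proof is correct and mirrors the paper's own reduction: the remark following Corollary~\ref{cor:minCompRateAchievable} states that it is ``easily obtained from Theorem~\ref{thm:MACeff},'' and you make that precise by noting that when every user's rate equals $R<R_{\text{comp},L}$, the allocation condition~\eqref{rateAllocationEff} is satisfied for any permutation $\pi$, so the pseudo-triangularization order is immaterial and the chain of fine lattices may collapse to a single lattice $\Lambda_1$. The same remark also sketches an even lighter alternative that bypasses the compute-and-forward transform entirely --- decode each of the $L$ optimal linear combinations directly via Theorem~\ref{thm:CoF}, which applies because the common rate $R$ lies below every $R_{\text{comp},m}$, and then solve the resulting full-rank system for the effective lattice points --- but your route through Theorem~\ref{thm:MACeff} is the one the paper names first and is equally valid.
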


\vspace{1mm}

\begin{remark}
Corollary~\ref{cor:minCompRateAchievable} is easily obtained from Theorem~\ref{thm:MACeff}. However, it can also be established without incorporating the compute-and-forward transform machinery. Indeed, if all users transmit from the same lattice codebook with rate smaller than $R_{\text{comp},L}$, by Theorem~\ref{thm:CoF}, each of the $L$ linear combinations with optimal coefficient vectors can be decoded (without using algebraic successive decoding as in the compute-and-forward transform approach). Then, the decoded linear combinations can be solved for the effective lattice points.

In Section~\ref{s:lower}, we introduce two achievable schemes for the $K$-user Gaussian interference channel. One of them is a simple transmission scheme where all users transmit from the same nested lattice code. The result of Corollary~\ref{cor:minCompRateAchievable} suffices to establish the rates achieved by this scheme. In the second achievable scheme, which mimics the Han-Kobabyshi scheme for the two-user interference channel, each user transmits a superposition of codewords taken from two nested lattice codebooks. In this case Corollary~\ref{cor:minCompRateAchievable} does not suffice and Theorem~\ref{thm:MACeff}, which uses the compute-and-forward transform machinery, is needed.
\end{remark}

\vspace{1mm}

\vspace{1mm}

In Section \ref{s:lower}, we leverage these achievability results to lower bound the capacity of the symmetric Gaussian $K$-user interference channel.

\section{Symmetric Capacity Upper Bounds} \label{s:upper}

In this section, we state an upper bound on the symmetric capacity of the symmetric $K$-user Gaussian interference channel. We follow the same arguments given in~\cite{jv10} for showing that the symmetric capacity of the symmetric $K$-user interference channel is upper bounded by that of the symmetric two-user interference channel. Namely, eliminating all but two users, say users $1$ and $2$, the symmetric capacity is upper bounded by the results of~\cite{etw08}. This is simply because removing interferers cannot decrease the symmetric rates for users $1$ and $2$. Thus, the upper bounds from~\cite{etw08} hold for the symmetric rates of user $1$ and $2$ in the $K$-user symmetric interference channel. Repeating the same argument for each pair of users we see that the upper bounds on $\csym$ developed in~\cite{etw08} for $K=2$ continue to hold for all $K>2$ as well. Therefore, the symmetric capacity of the symmetric $K$-user Gaussian interference channel is upper bounded as~\cite{etw08}
\begin{align}
\csym&\leq
\begin{cases}
                  \frac{1}{2}\log\left(1+\Tinr+\frac{\Tsnr}{1+\Tinr}\right) & 0\leq\alpha < \frac{2}{3} \\
                  \frac{1}{4}\log\left(1+\Tsnr\right)+\frac{1}{4}\log\left(1+\frac{\Tsnr}{1+\Tinr}\right) & \frac{2}{3}\leq\alpha < 1 \\
                  \frac{1}{4}\log\left(1+\Tsnr+\Tinr\right) & 1\leq\alpha < 2 \\
                  \frac{1}{2}\log\left(1+\Tsnr\right) & 2\leq \alpha.
\end{cases}
\label{upperBoundsEtkin}
\end{align}
Since we are only after an approximate capacity characterization, we further upper bound $\csym$ as 
\begin{align}
\csym&\leq\begin{cases}
                  \frac{1}{2}\log\left(1+\frac{\Tsnr}{1+\Tinr}\right)+1 & 0\leq\alpha < \frac{1}{2} \\
                  \frac{1}{2}\log^+\left(\Tinr\right)+1 & \frac{1}{2}\leq\alpha < \frac{2}{3} \\
                  \frac{1}{2}\log^+\left(\frac{\Tsnr}{\sqrt{\Tinr}}\right)+1 & \frac{2}{3}\leq\alpha < 1 \\
                  \frac{1}{4}\log^+\left(\Tinr\right)+1 & 1\leq\alpha < 2 \\
                  \frac{1}{2}\log\left(1+\Tsnr\right) & 2\leq \alpha.
                \end{cases}.
\label{CsymUpperBound}
\end{align}
for all values of $\Tsnr$.

\section{Achievable Schemes} \label{s:lower}

This section introduces two simple achievable schemes for reliable communication over the symmetric $K$-user interference channel that are based on nested lattice codes. These schemes are then shown to approximately achieve $\csym$, the symmetric capacity of the channel, for all channel gains $g$, except for an outage set of bounded measure. This outage set is explicitly characterized.

We begin by describing the two schemes and deriving their achievable symmetric rates. These rates are given in terms of the optimal computation rates corresponding to a certain effective multiple access channel, i.e., the rates are given as a solution to an optimization problem. This optimization problem, which amounts to finding the optimal coefficient vectors, can be efficiently solved numerically, as described in Section~\ref{s:numerical}. Figure~\ref{f:symICrates} shows our achievable rates for the three-user symmetric interference channel as a function of the interference level $g$, for several values of SNR. It is evident that the obtained rates significantly improve over time-sharing even for moderate values of SNR.

In order to establish the approximate optimality of these schemes, we derive explicit lower bounds on the rates they achieve which depend only on the $\Tsnr$ and $\Tinr$. As in the two-user case, the symmetric capacity exhibits a different behavior for different regimes of interference strength, characterized by the parameter $\alpha$.

We now present the two achievable schemes. The first achieves the approximate symmetric capacity in the noisy, strong, and very strong interference regimes, while the second achieves the approximate symmetric capacity in the weak and moderately weak interference regimes.

\noindent \underline{\emph{First scheme - A single-layer lattice code}:}
A pair of nested lattices $\Lambda\subseteq\Lambda_1$ is utilized to construct the codebook $\mathcal{L}=\Lambda_1\cap\Lambda$ of rate $\rsym$. All users encode their messages using this codebook.
Since all interferers arrive at the $k$th receiver with the same gain, they will be aligned into one effective lattice point.
Thus, the $K$-user MAC seen by the $k$th receiver becomes an effective two-user MAC of the form defined in Section~\ref{sub:effmac} (see Example~\ref{ex:symIC})
\begin{align}
\by_k=\bx_{\text{eff},k1}+g\bx_{\text{eff},k2}+\bz_k,\label{eff2userMAC}
\end{align}
where $\bx_{\text{eff},k1}=\bx_k$, $\bx_{\text{eff},k2}=\sum_{i\neq k}\bx_i$ are the effective users, $b^2_{\text{eff},1}=1$, $b^2_{\text{eff},2}=K-1$ are the effective users' weights and $\bg=[1 \ g]^T$ is the vector of channel gains.

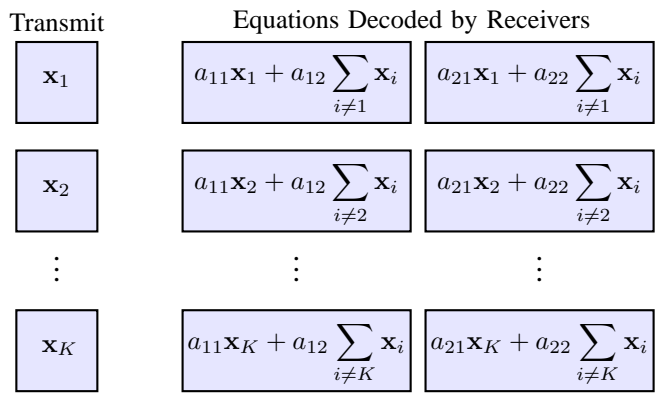
\begin{figure}[h]
\psset{unit=.85mm}
\begin{center}
\begin{pspicture}(0,0)(100,60)
\rput(6.5,58){Transmit}

\rput(62,58){Equations Decoded by Receivers}

\rput(0,42){
\psframe[fillstyle=solid,fillcolor=BoxBlue](0,0)(13,13) \rput(6.5,7){$\mathbf{x}_1$}
\psframe[fillstyle=solid,fillcolor=BoxBlue](26,0)(62,13)  \rput(44,6){${\displaystyle a_{11} \mathbf{x}_{1} + a_{12} \sum_{i \neq 1} \mathbf{x}_{i}}$}
\psframe[fillstyle=solid,fillcolor=BoxBlue](64,0)(100,13)  \rput(82,6){${\displaystyle a_{21} \mathbf{x}_{1} + a_{22} \sum_{i \neq 1} \mathbf{x}_{i}}$}
}

\rput(0,25){
\psframe[fillstyle=solid,fillcolor=BoxBlue](0,0)(13,13) \rput(6.5,7){$\mathbf{x}_2$}
\psframe[fillstyle=solid,fillcolor=BoxBlue](26,0)(62,13)  \rput(44,6){${\displaystyle a_{11} \mathbf{x}_{2} + a_{12} \sum_{i \neq 2} \mathbf{x}_{i}}$}
\psframe[fillstyle=solid,fillcolor=BoxBlue](64,0)(100,13)  \rput(82,6){${\displaystyle a_{21} \mathbf{x}_{2} + a_{22} \sum_{i \neq 2} \mathbf{x}_{i}}$}
}

\rput(6.5,20.5){\large{$\vdots$}}
\rput(44,20.5){\large{$\vdots$}}
\rput(82,20.5){\large{$\vdots$}}

\psframe[fillstyle=solid,fillcolor=BoxBlue](0,0)(13,13) \rput(7,7){$\mathbf{x}_K$}
\psframe[fillstyle=solid,fillcolor=BoxBlue](26,0)(62,13)  \rput(44,6){${\displaystyle a_{11} \mathbf{x}_{K} + a_{12} \sum_{i \neq K} \mathbf{x}_{i}}$}
\psframe[fillstyle=solid,fillcolor=BoxBlue](64,0)(100,13)  \rput(82,6){${\displaystyle a_{21} \mathbf{x}_{K} + a_{22} \sum_{i \neq K} \mathbf{x}_{i}}$}

\end{pspicture}
\end{center}
\caption{Illustration of the single-layer lattice scheme. Each transmitter sends a codeword drawn from a common lattice. Each receiver decodes two equations of the codewords, which it can then solve for its desired message.} \label{f:singlelayer}
\end{figure}

The next theorem gives an achievable rate region for the $K$-user interference channel when each receiver jointly decodes both the effective user $\bx_{\text{eff},k1}$ which carries the desired information, and the effective user $\bx_{\text{eff},k2}$ which carries the sum of interfering codewords. The theorem relies on decoding two independent linear combinations of the effective lattice points. See Figure \ref{f:singlelayer} for an illustration. This is in contrast to the successive decoding technique used in~\cite{sjvj08}, where first the interference is decoded and removed, and only then the desired lattice point is decoded.

\vspace{1mm}

\begin{theorem}
\label{thm:SymICnoLayeres}
Let $R_{\text{comp},1}\geq R_{\text{comp},2}$ be the optimal computation rates for the effective MAC~\eqref{eff2userMAC} induced by the symmetric $K$-user interference channel \eqref{symICeq}.
Any symmetric rate $\rsym<R_{\text{comp},2}$ is achievable for the symmetric $K$-user interference channel~\eqref{symICeq}.
\end{theorem}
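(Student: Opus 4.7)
The plan is to reduce the problem to the effective multiple-access channel developed in Section~\ref{sub:effmac} and invoke Corollary~\ref{cor:minCompRateAchievable} directly. All $K$ users share a single nested lattice codebook $\mathcal{L} = \Lambda_1 \cap \mathcal{V}$ of rate $\rsym$, constructed as in Theorem~\ref{thm:CoF}. Each user $k$ maps its message $w_k$ to a lattice point $\bt_k \in \mathcal{L}$, dithers it with an independent random dither $\bd_k$ uniformly distributed over $\mathcal{V}$, and transmits $\bx_k = [\bt_k - \bd_k] \Mod$. Since all transmitters use the same coarse and fine lattices, each $\bx_k$ satisfies the power constraint $\|\bx_k\|^2 \leq n\,\Tsnr$.

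Next, I would observe (as in Example~\ref{ex:symIC}) that from the vantage point of receiver $k$, the $K$-user channel collapses into the effective two-user MAC in \eqref{eff2userMAC}, with effective users $\bx_{\text{eff},k1} = \bx_k$ and $\bx_{\text{eff},k2} = \sum_{i \neq k} \bx_i$, effective weights $b^2_{\text{eff},1} = 1$ and $b^2_{\text{eff},2} = K-1$, and channel vector $\bg = [1 \ g]^T$. The corresponding effective lattice points are $\bt_{\text{eff},k1} = \bt_k$ and $\bt_{\text{eff},k2} = \bigl[\sum_{i\neq k} \bt_i\bigr] \Mod$; both belong to $\Lambda_1$ by closure of the nested lattice chain under integer combinations.

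With this reduction in place, I would apply Corollary~\ref{cor:minCompRateAchievable} with $L = 2$. The hypotheses are satisfied: every transmitter draws from the common codebook $\mathcal{L}$, the $L$th (i.e., smallest) optimal computation rate for the effective MAC is $R_{\text{comp},2}$, and by assumption $\rsym < R_{\text{comp},2}$. The corollary therefore guarantees that, for $n$ large enough, receiver $k$ can recover both effective lattice points $\bt_{\text{eff},k1}$ and $\bt_{\text{eff},k2}$ with arbitrarily small error probability. In particular, $\bt_k = \bt_{\text{eff},k1}$ is recovered, from which $\hat{w}_k = w_k$ is obtained by inverting the encoder $\mathcal{E}_k$. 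A union bound over the $K$ receivers then drives the overall error probability below any prescribed $\epsilon > 0$, establishing achievability of $\rsym$.

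There is no real obstacle here beyond checking that the hypotheses of Corollary~\ref{cor:minCompRateAchievable} are literally met; the only subtlety worth highlighting in the writeup is that the computation rates $R_{\text{comp},1}, R_{\text{comp},2}$ are the optimal computation rates of the \emph{effective} two-user MAC, so they implicitly depend on the effective weight matrix $\bB = \diag(1, K-1)$ through the formula \eqref{effCompRate}. Once this is noted, the proof is essentially a one-line invocation of the multiple-access machinery built up in Section~\ref{sub:effmac}.
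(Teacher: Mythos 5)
Your argument is correct and follows essentially the same route as the paper's: reduce to the effective two-user MAC of \eqref{eff2userMAC} via the single-layer lattice scheme, invoke Corollary~\ref{cor:minCompRateAchievable} with $L=2$ to decode both effective lattice points whenever $\rsym < R_{\text{comp},2}$, and observe that $\bt_{\text{eff},k1}=\bt_k$ carries the desired message. The paper states this more compactly, but your added detail (explicit power-constraint check, explicit identification of the effective lattice points, the union bound over receivers) is all consistent with what the paper leaves implicit.
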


\vspace{1mm}

\begin{proof}
Corollary~\ref{cor:minCompRateAchievable} implies that for any symmetric rate $\rsym<R_{\text{comp},2}$ there exists a pair of nested lattices $\Lambda\subseteq\Lambda_1$ such that both effective lattice points can be decoded at each receiver. Since the first effective user $\bx_{\text{eff},k1}$ carries all the desired information for the $k$th receiver, it follows that any $\rsym<R_{\text{comp},2}$ is achievable.
\end{proof}
\vspace{1mm}

The next theorem gives an achievable rate region for the $K$-user interference channel when each receiver decodes only its desired codeword, while treating all other interfering codewords as noise. This theorem can be trivially proved using i.i.d. Gaussian codebooks. Nevertheless, we prove the theorem using nested lattice codebooks for completeness.

\begin{theorem}
\label{thm:noisyInterference}
Any symmetric rate satisfying
\begin{align}
\rsym<\frac{1}{2}\log\left(1+\frac{\Tsnr}{1+(K-1)g^2\Tsnr}\right)\nonumber
\end{align}
is achievable for the symmetric $K$-user interference channel~\eqref{symICeq}.
\end{theorem}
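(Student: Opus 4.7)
The plan is to instantiate the effective multiple-access machinery of Section~\ref{sub:effmac} with the trivial coefficient vector that isolates the intended codeword. As in the first scheme of Section~\ref{s:lower}, all $K$ transmitters share a single pair of nested lattices $\Lambda\subseteq\Lambda_1$ and encode their messages into the common codebook $\mathcal{L}=\Lambda_1\cap\CV$. By Example~\ref{ex:symIC}, the channel seen by receiver $k$ collapses into the effective two-user MAC
\begin{align*}
\by_k=\bx_{\text{eff},k1}+g\,\bx_{\text{eff},k2}+\bz_k,
\end{align*}
with effective channel vector $\bg=[1\ g]^T$ and effective weight matrix $\bB=\diag(1,K-1)$, in which $\bx_{\text{eff},k1}=\bx_k$ carries the desired message and $\bx_{\text{eff},k2}=\sum_{i\neq k}\bx_i$ absorbs all interference.

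Rather than decode both effective users, receiver $k$ decodes a single integer combination of the effective lattice points with coefficient vector $\ba=[1\ 0]^T$, which directly recovers $\bt_{\text{eff},k1}=\bt_k$ while treating the aligned interferer as noise. By Theorem~\ref{thm:CoF} applied to the effective MAC (as in~\eqref{CoFestimateEff}--\eqref{effCompRate}), this is reliable for any rate strictly below $R_{\text{comp}}(\bg,\ba,\bB)=\tfrac{1}{2}\log(\Tsnr/\sigma_{\text{eff}}^2(\bg,\ba,\bB))$.

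The remaining step is the explicit evaluation of $\sigma^2_{\text{eff}}(\bg,\ba,\bB)$ for this $\ba$. Using \eqref{woodbury2},
\begin{align*}
\Tsnr^{-1}\bB^{-1}+\bg\bg^T=\begin{pmatrix}1+\Tsnr^{-1} & g\\ g & g^2+((K-1)\Tsnr)^{-1}\end{pmatrix},
\end{align*}
and a direct cofactor computation gives
\begin{align*}
\sigma^2_{\text{eff}}(\bg,\ba,\bB)=\ba^T(\Tsnr^{-1}\bB^{-1}+\bg\bg^T)^{-1}\ba=\frac{\Tsnr\bigl(1+(K-1)g^2\Tsnr\bigr)}{1+\Tsnr+(K-1)g^2\Tsnr}.
\end{align*}
Substituting into $R_{\text{comp}}(\bg,\ba,\bB)$ yields exactly $\tfrac{1}{2}\log\!\bigl(1+\tfrac{\Tsnr}{1+(K-1)g^2\Tsnr}\bigr)$, and by the symmetry of the channel the same argument applies at every receiver, so the claimed symmetric rate is achievable.

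No step here is a genuine obstacle; the only thing to watch is bookkeeping in the $2\times 2$ Woodbury inversion, and one should verify that choosing $\ba=[1\ 0]^T$ (a single nonzero entry) is compatible with the effective-MAC variant of Theorem~\ref{thm:CoF}, which it is since reliable decoding of a linear combination with coefficients supported on index $\ell=1$ only requires the rate of codebooks mapped into that effective user to lie below the computation rate.
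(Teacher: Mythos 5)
Your proposal is correct and follows essentially the same route as the paper: all users share one nested lattice code, the channel at receiver $k$ is folded into the effective two-user MAC of Example~\ref{ex:symIC}, and the receiver decodes the trivial linear combination $\ba=[1\ 0]^T$ (treating the aligned interference as noise), whose computation rate is exactly the claimed bound. The only cosmetic difference is that you evaluate $\sigma^2_{\text{eff}}$ via the inverted form \eqref{woodbury2} while the paper uses the pre-inversion form \eqref{preWoodbury2}; both yield $\Tsnr(1+(K-1)g^2\Tsnr)/(1+\Tsnr+(K-1)g^2\Tsnr)$, and the subsequent substitution into \eqref{effCompRate} gives the stated rate.
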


\vspace{1mm}

\begin{proof}
Decoding $\bx_k$ at the $k$th receiver of the symmetric $K$-user interference channel~\eqref{symICeq}, while treating all other users as noise, is equivalent to decoding the linear combination with coefficient vector $\ba=[1 \ 0]^T$ in the effective two-user MAC~\eqref{eff2userMAC}. Therefore, any symmetric rate satisfying $\rsym<R_{\text{comp}}(\bg,[1 \ 0]^T,\bB)$ is achievable. The effective noise variance for decoding this linear combination is found using~\eqref{preWoodbury2} to be
\begin{align}
\sigma^2_{\text{eff}}(\bg,[1 \ 0]^T,\bB)=\Tsnr\left(1+\frac{\Tsnr}{1+(K-1)g^2\Tsnr}\right)^{-1},\nonumber
\end{align}
which, using~\eqref{effCompRate}, implies that
\begin{align}
R_{\text{comp}}(\bg,[1 \ 0]^T,\bB)=\frac{1}{2}\log\left(1+\frac{\Tsnr}{1+(K-1)g^2\Tsnr}\right).\nonumber
\end{align}
\end{proof}

\vspace{1mm}

For the two-user case, it is known that in the weak and moderately weak interference regimes each receiver should decode only part of the message transmitted by the other user \cite{etw08}. A natural extension of this Han-Kobayashi \cite{hk81} approach to the $K$-user case is for each receiver to decode linear combinations that only include parts of the interfering messages. This is enabled by using a superposition of two lattice codewords at each transmitter, as we describe next. See Figure \ref{f:twolayer} for an illustration.

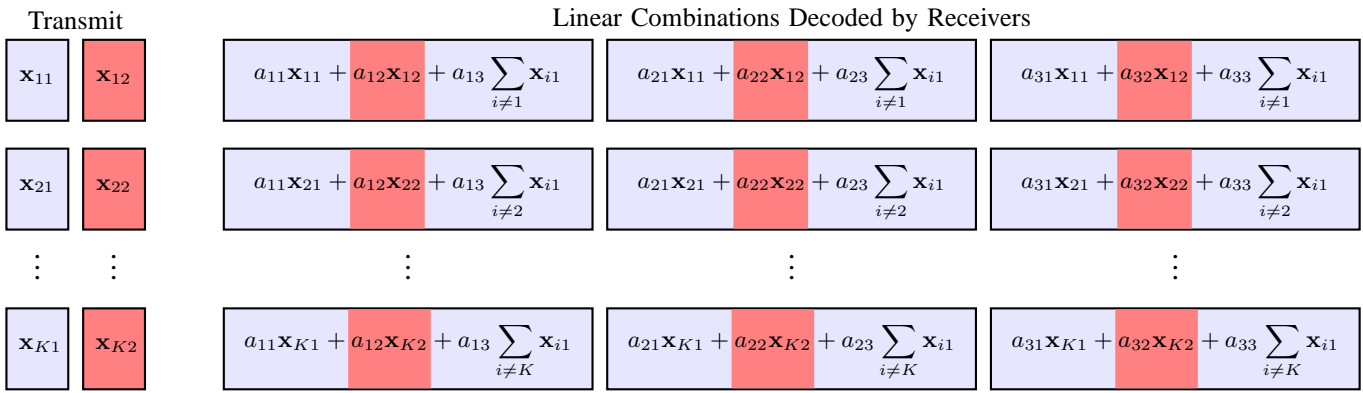
\begin{figure*}[!t]
\psset{unit=.85mm}
\begin{center}
\begin{pspicture}(0,0)(212,60)
\rput(11,58){Transmit}

\rput(123,58){Linear Combinations Decoded by Receivers}

\small

\rput(0,42){
\psframe[fillstyle=solid,fillcolor=BoxBlue](0,0)(10,13) \rput(5,7){$\mathbf{x}_{11}$}
\psframe[fillstyle=solid,fillcolor=BoxRed](12,0)(22,13) \rput(17,7){$\mathbf{x}_{12}$}
\psframe[fillstyle=solid,fillcolor=BoxBlue](34,0)(92,13) \psframe[fillstyle=solid,fillcolor=BoxRed,linestyle=none](53.75,0.15)(65.75,12.85)
\rput(63,6){${\displaystyle a_{11} \mathbf{x}_{11} + a_{12} \mathbf{x}_{12} +a_{13} \sum_{i \neq 1} \mathbf{x}_{i 1}}$}
\rput(60,0){\psframe[fillstyle=solid,fillcolor=BoxBlue](34,0)(92,13) \psframe[fillstyle=solid,fillcolor=BoxRed,linestyle=none](53.75,0.15)(65.75,12.85)
\rput(63,6){${\displaystyle a_{21} \mathbf{x}_{11} + a_{22} \mathbf{x}_{12} +a_{23} \sum_{i \neq 1} \mathbf{x}_{i 1}}$}
}
\rput(120,0){\psframe[fillstyle=solid,fillcolor=BoxBlue](34,0)(92,13) \psframe[fillstyle=solid,fillcolor=BoxRed,linestyle=none](53.75,0.15)(65.75,12.85)
\rput(63,6){${\displaystyle a_{31} \mathbf{x}_{11} + a_{32} \mathbf{x}_{12} +a_{33} \sum_{i \neq 1} \mathbf{x}_{i 1}}$}
}
}

\rput(0,25){
\psframe[fillstyle=solid,fillcolor=BoxBlue](0,0)(10,13) \rput(5,7){$\mathbf{x}_{21}$}
\psframe[fillstyle=solid,fillcolor=BoxRed](12,0)(22,13) \rput(17,7){$\mathbf{x}_{22}$}
\psframe[fillstyle=solid,fillcolor=BoxBlue](34,0)(92,13) \psframe[fillstyle=solid,fillcolor=BoxRed,linestyle=none](53.75,0.15)(65.75,12.85)
\rput(63,6){${\displaystyle a_{11} \mathbf{x}_{21} + a_{12} \mathbf{x}_{22} +a_{13} \sum_{i \neq 2} \mathbf{x}_{i 1}}$}
\rput(60,0){\psframe[fillstyle=solid,fillcolor=BoxBlue](34,0)(92,13) \psframe[fillstyle=solid,fillcolor=BoxRed,linestyle=none](53.75,0.15)(65.75,12.85)
\rput(63,6){${\displaystyle a_{21} \mathbf{x}_{21} + a_{22} \mathbf{x}_{22} +a_{23} \sum_{i \neq 2} \mathbf{x}_{i 1}}$}
}
\rput(120,0){\psframe[fillstyle=solid,fillcolor=BoxBlue](34,0)(92,13) \psframe[fillstyle=solid,fillcolor=BoxRed,linestyle=none](53.75,0.15)(65.75,12.85)
\rput(63,6){${\displaystyle a_{31} \mathbf{x}_{21} + a_{32} \mathbf{x}_{22} +a_{33} \sum_{i \neq 2} \mathbf{x}_{i 1}}$}
}
}

\rput(5,20.5){\large{$\vdots$}}
\rput(17,20.5){\large{$\vdots$}}
\rput(63,20.5){\large{$\vdots$}}
\rput(123,20.5){\large{$\vdots$}}
\rput(183,20.5){\large{$\vdots$}}

\psframe[fillstyle=solid,fillcolor=BoxBlue](0,0)(10,13) \rput(5.5,7){$\mathbf{x}_{K1}$}
\psframe[fillstyle=solid,fillcolor=BoxRed](12,0)(22,13) \rput(17.5,7){$\mathbf{x}_{K2}$}
\psframe[fillstyle=solid,fillcolor=BoxBlue](34,0)(92,13) \psframe[fillstyle=solid,fillcolor=BoxRed,linestyle=none](53.45,0.15)(66.7,12.85)
\rput(63,6){${\displaystyle a_{11} \mathbf{x}_{K1} + a_{12} \mathbf{x}_{K2} +a_{13} \sum_{i \neq K} \mathbf{x}_{i 1}}$}
\rput(60,0){\psframe[fillstyle=solid,fillcolor=BoxBlue](34,0)(92,13) \psframe[fillstyle=solid,fillcolor=BoxRed,linestyle=none](53.45,0.15)(66.7,12.85)
\rput(63,6){${\displaystyle a_{21} \mathbf{x}_{K1} + a_{22} \mathbf{x}_{K2} +a_{23} \sum_{i \neq K} \mathbf{x}_{i 1}}$}
}
\rput(120,0){\psframe[fillstyle=solid,fillcolor=BoxBlue](34,0)(92,13) \psframe[fillstyle=solid,fillcolor=BoxRed,linestyle=none](53.45,0.15)(66.7,12.85)
\rput(63,6){${\displaystyle a_{31} \mathbf{x}_{K1} + a_{32} \mathbf{x}_{K2} +a_{33} \sum_{i \neq K} \mathbf{x}_{i 1}}$}
}

\end{pspicture}
\end{center}
\caption{Illustration of the lattice Han-Kobayashi scheme. Each transmitter sends a public (blue) and a private (red) lattice codeword. Each receiver decodes three linear combinations of the public codewords as well as its desired private codeword while treating the other private codewords as noise. From these linear combinations, the receivers can infer their desired public and private messages.} \label{f:twolayer}
\end{figure*}

\noindent \underline{\emph{Second scheme - Lattice Han-Kobayashi}:}
This scheme employs a chain of nested lattices $\Lambda\subseteq\Lambda_2\subseteq\Lambda_1$ to construct two codebooks $\mathcal{L}_1$ and $\mathcal{L}_2$ with rates $R_1$ and $R_2$, respectively. Each user $k$ splits its message $w_k$ into two messages, a public message $w_{k1}$ that is mapped into a codeword $\bx_{k1}$ from $\mathcal{L}_1$ and a private message $w_{k2}$ that is mapped into a codeword $\bx_{k2}$ from $\mathcal{L}_2$. It is convenient to treat each user $k$ as two virtual users with codewords $\bx_{k1}$ and $\bx_{k2}$ that carry messages $w_{k1}$ and $w_{k2}$, respectively. User $k$ transmits a superposition of its virtual  users' codewords,
\begin{align}
\bx_k=\sqrt{1-\gamma^2}\bx_{k1}+\gamma\bx_{k2},\nonumber
\end{align}
for $\gamma\in[0,1)$. The signal seen by the $k$th receiver is
\begin{align}
\by_k&=\sqrt{1-\gamma^2}\bx_{k1}+\gamma\bx_{k2}\nonumber\\
&~~+g\sqrt{1-\gamma^2}\sum_{i\neq k}\bx_{i1}+g\gamma\sum_{i\neq k}\bx_{i2}+\bz_k,\label{layeredSymIC}
\end{align}
which induces the effective four-user MAC
\begin{align}
\by_k&=\sqrt{1-\gamma^2}\bx_{\text{eff},k1}+\gamma\bx_{\text{eff},k2}\nonumber\\
&~~+g\sqrt{1-\gamma^2}\bx_{\text{eff},k3}+g\gamma\bx_{\text{eff},k4}+\bz_k,\label{layeredEffSymIC}
\end{align}
with effective users $\bx_{\text{eff},k1}=\bx_{k1}$, $\bx_{\text{eff},k2}=\bx_{k2}$, $\bx_{\text{eff},k3}=\sum_{i\neq k}\bx_{i1}$ and $\bx_{\text{eff},k4}=\sum_{i\neq k}\bx_{i2}$. The effective users' weights are $b^2_{\text{eff},1}=1$, $b^2_{\text{eff},2}=1$, $b^2_{\text{eff},3}=K-1$ and $b^2_{\text{eff},4}=K-1$, and
\begin{align*}
\bg=\left[\sqrt{1-\gamma^2} \ \ \gamma \ \ g\sqrt{1-\gamma^2} \ \ g\gamma\right]^T \
\end{align*} is the vector of effective channel gains.

The receiver aims to decode the effective codewords $\bx_{\text{eff},k1}$, $\bx_{\text{eff},k2}$ and $\bx_{\text{eff},k3}$ while treating the fourth effective codeword $\bx_{\text{eff},k4}$ as noise. The next lemma will be useful for the derivation of rates achieved by this scheme. Its proof is given in Appendix~\ref{app:CsymProofs}.

\vspace{1mm}

\begin{lemma}
Consider the effective $L$-user MAC~\eqref{effectiveMAC}, where the decoder is only interested in the first $L-1$ effective lattice points $\bt_{\text{eff},1},\ldots,\bt_{\text{eff},L-1}$ and let $\kappa=1/\sqrt{1+\Tsnr g^2_L b^2_{\text{eff},L}}$. Any rate tuple achievable for decoding $\bt_{\text{eff},1},\ldots,\bt_{\text{eff},L-1}$ over the effective $(L-1)$-user MAC
\begin{align}
\sum_{\ell=1}^{L-1}\kappa g_\ell\bx_{\text{eff},\ell}+\bz\label{zeroEffMAC}
\end{align}
is also achievable for decoding the desired $L-1$ lattice points over~\eqref{effectiveMAC}.
\label{lem:equiEffChannel}
\end{lemma}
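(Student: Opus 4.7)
My plan is to reduce the $L$-user MAC problem (where the $L$-th effective user is treated as noise) to the $(L-1)$-user MAC problem of~\eqref{zeroEffMAC} by showing that the two problems produce identical compute-and-forward effective-noise variances after an appropriate rescaling of the scaling factor $\beta$. Since the achievability results in Theorems~\ref{thm:CoF} and~\ref{thm:MACeff} depend on the channel only through these variances (and through the pseudo-triangularization pattern of the integer coefficient matrix, which is a purely combinatorial property unaffected by rescaling), any rate tuple achievable for~\eqref{zeroEffMAC} transfers directly.

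Concretely, I will fix any scheme that achieves $(R_1,\ldots,R_{L-1})$ over~\eqref{zeroEffMAC}, specified by integer coefficient vectors $\ba_m \in \mathbb{Z}^{L-1}$, scaling factors $\beta'_m$, and a valid pseudo-triangularization order. I will then apply the same lattice codebooks and the same integer coefficients, extended by a zero in the $L$-th entry, to the original $L$-user MAC, using the scaling factors $\beta_m = \kappa\beta'_m$. A direct substitution into~\eqref{ZeffVarTag}, together with the identity $\kappa^2(1+\Tsnr g_L^2 b^2_{\text{eff},L}) = 1$, shows that the $L$-th-user contribution combines with the AWGN term as $\beta_m^2(\Tsnr g_L^2 b^2_{\text{eff},L}+1)=(\beta'_m)^2$, while the remaining sum becomes $\Tsnr\sum_{\ell=1}^{L-1}(\beta'_m(\kappa g_\ell)-a_{m,\ell})^2 b^2_{\text{eff},\ell}$. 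This is exactly the effective-noise variance for the $(L-1)$-user channel with gain vector $\kappa\tilde{\bg}$ and scaling $\beta'_m$, so every computation rate matches line-by-line and the algebraic successive cancellation architecture of Theorem~\ref{thm:MACeff} proceeds identically in the two problems.

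The one point that requires attention is justifying the compute-and-forward achievability when a portion of the effective noise is not AWGN but the dithered lattice codeword $\beta_m g_L \bx_{\text{eff},L}$. I do not expect this to be a real obstacle: dithering renders $\bx_{\text{eff},L}$ uniform over $\CV$ and independent of all other transmit signals and of the Gaussian noise, so the effective noise has the claimed variance and inherits the semi norm-ergodic property that underpins the nested lattice achievability proof of~\cite{ng11IT}. With this observation in place, decoding proceeds exactly as in the proofs of Theorems~\ref{thm:CoF} and~\ref{thm:MACeff}, so the desired $L-1$ effective lattice points are recovered at the same rates as in the $(L-1)$-user MAC, establishing the lemma.
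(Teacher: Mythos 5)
Your proof is correct and follows essentially the same route as the paper: both arguments hinge on the change of variables $\bar\beta = \kappa\beta$ (equivalently $\beta_m=\kappa\beta'_m$), under which the effective-noise variance~\eqref{ZeffVarTag} for a coefficient vector with zero $L$-th entry in the $L$-user channel becomes identical to the variance for the truncated coefficient vector over~\eqref{zeroEffMAC}, so every computation rate and hence the achievable rate region transfers unchanged. Your extra remark on the non-Gaussian effective noise is a useful clarification but is already covered by Theorem~\ref{thm:CoF}, whose effective noise~\eqref{Zeff} always contains dithered lattice-codeword terms.
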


\vspace{1mm}

The next theorem gives the achievable rate region for the lattice Han-Kobayashi scheme.

\vspace{1mm}

\begin{theorem}
\label{thm:SymICHK}
Let $\kappa(\gamma)=1/\sqrt{1+\Tsnr g^2\gamma^2(K-1)}$ and consider the effective MAC
\begin{align}
\by_k&=\kappa(\gamma)\sqrt{1-\gamma^2}\bx_{\text{eff},k1}+\kappa(\gamma)\gamma\bx_{\text{eff},k2}\nonumber\\
&~~+\kappa(\gamma)g\sqrt{1-\gamma^2}\bx_{\text{eff},k3}+\bz_k,\label{effSymHK}
\end{align}
with effective channel vector
\begin{align*}
\bg=\left[\kappa(\gamma)\sqrt{1-\gamma^2} \ \  \kappa(\gamma)\gamma \ \ \kappa(\gamma)g\sqrt{1-\gamma^2}\right]^T \ ,
\end{align*} and effective users' weights $b^2_{\text{eff},1}=1$, $b^2_{\text{eff},2}=1$, and $b^2_{\text{eff},3}=K-1$.
Let $\{\ba_1(\gamma),\ba_2(\gamma),\ba_3(\gamma)\}$ and $R_{\text{comp},1}(\gamma)\geq R_{\text{comp},2}(\gamma)\geq R_{\text{comp},3}(\gamma)$ be the optimal coefficient vectors and computation rates, respectively.
Any symmetric rate satisfying
\begin{align*}
\rsym<\max_{\gamma\in[0,1)}R_{\text{comp},2}(\gamma)+R_{\text{comp},3}(\gamma)
\end{align*} is achievable for the symmetric $K$-user interference channel~\eqref{symICeq}.
\end{theorem}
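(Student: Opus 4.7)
The plan is to invoke Theorem~\ref{thm:MACeff} on an appropriately reduced effective MAC, then exploit the freedom in the nested codebook assignment to align with the pseudo-triangularization of the coefficient matrix.

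First, I would observe that the superposition encoding~\eqref{layeredSymIC} induces the four-user effective MAC~\eqref{layeredEffSymIC} at each receiver, in which the fourth effective user $\sum_{i\neq k}\bx_{i2}$ is to be treated as noise. Applying Lemma~\ref{lem:equiEffChannel} with $L=4$ absorbs this noise via the scalar $\kappa(\gamma)=1/\sqrt{1+(K-1)g^2\gamma^2\Tsnr}$, yielding the three-user MAC~\eqref{effSymHK} on which the receiver decodes three effective lattice points. Two of these effective users (the desired public and the aligned interferers' public) share the ``public'' codebook of rate $R_P$, and the third (the desired private) uses the ``private'' codebook of rate $R_Q$, the two codebooks being nested. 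Let $\bA=[\ba_1(\gamma),\ba_2(\gamma),\ba_3(\gamma)]^T$ collect the optimal coefficient vectors, with computation rates $R_{\text{comp},1}(\gamma)\geq R_{\text{comp},2}(\gamma)\geq R_{\text{comp},3}(\gamma)$. Since $\bA$ is full rank, at least one valid pseudo-triangularization permutation $\pi$ exists; let $m_3=\pi^{-1}(c_{\text{priv}})$ denote the pivot position assigned to the private column.

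A short case analysis on $m_3$ then yields the bound. If $m_3=3$, I would place the public codebook on the finer lattice $\Lambda_1$, so $R_P\geq R_Q$; Theorem~\ref{thm:MACeff} gives $R_P<R_{\text{comp},2}(\gamma)$ (the minimum over the two public pivot positions $1,2$) and $R_Q<R_{\text{comp},3}(\gamma)$, and the allocation $R_P\to R_{\text{comp},2}(\gamma)$, $R_Q\to R_{\text{comp},3}(\gamma)$ respects $R_P\geq R_Q$ because $R_{\text{comp},2}\geq R_{\text{comp},3}$, yielding $\rsym=R_P+R_Q\to R_{\text{comp},2}(\gamma)+R_{\text{comp},3}(\gamma)$. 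If instead $m_3\in\{1,2\}$, I would place the private codebook on the finer lattice ($R_Q\geq R_P$); under the column permutation induced by this swap, $\pi$ translates to a new valid pseudo-triangularization that places the private at pivot $m_3$ and the two publics at the remaining two positions, giving $R_Q<R_{\text{comp},m_3}(\gamma)$ and $R_P<R_{\text{comp},3}(\gamma)$. The nesting constraint $R_Q\geq R_P$ is automatic since $R_{\text{comp},m_3}\geq R_{\text{comp},3}$, so $\rsym\to R_{\text{comp},m_3}(\gamma)+R_{\text{comp},3}(\gamma)\geq R_{\text{comp},2}(\gamma)+R_{\text{comp},3}(\gamma)$. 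Taking the supremum over $\gamma\in[0,1)$ completes the proof.

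The main obstacle is the case $m_3\in\{1,2\}$, where I must check that swapping the nesting direction (which amounts to a fixed column permutation of $\bA$) yields a new valid pseudo-triangularization placing the private column at the prescribed pivot position. I would verify this by writing the transformation law between the original $\pi$ and its image under the column swap, and confirming that the induced rate constraints from Theorem~\ref{thm:MACeff} match the ones stated above.
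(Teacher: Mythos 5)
Your proposal is correct and follows essentially the same route as the paper: reduce the four-user effective MAC to a three-user one via Lemma~\ref{lem:equiEffChannel}, invoke Theorem~\ref{thm:MACeff} via a pseudo-triangularization of the compute-and-forward coefficient matrix, and branch on the pivot position $m_3 = \pi^{-1}(2)$ assigned to the private column, exactly as the paper splits on whether $\pi(3) = 2$ or not. Your explicit tracking of the nesting direction of the public and private codebooks in the case $m_3 \in \{1,2\}$ is, if anything, slightly more careful than the paper's exposition (which nominally fixes $\Lambda_2 \subseteq \Lambda_1$ with public on the finer lattice but then writes a rate constraint with $R_1 < R_{\text{comp},3} \leq R_{\text{comp},2}$, implicitly allowing the swap). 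However, the obstacle you flag at the end --- needing to derive a ``transformation law'' for $\pi$ under a column swap and verify a new pseudo-triangularization --- is a phantom. The coefficient matrix $\bA(\gamma)$ has columns indexed by the effective users $1,2,3$, which is a fixed labeling independent of which lattice is denser; the permutation $\pi$ therefore does not move. Theorem~\ref{thm:MACeff} already assigns effective user $\ell$ the lattice $\Lambda_{\pi^{-1}(\ell)}$ from the chain $\Lambda \subseteq \Lambda_3 \subseteq \Lambda_2 \subseteq \Lambda_1$, so the public codebook must simply be nested in $\Lambda_{\max(\pi^{-1}(1),\pi^{-1}(3))}$ and the private in $\Lambda_{\pi^{-1}(2)}$; these two are automatically nested with each other because they are distinct members of a chain, and the resulting rates $(R_P, R_Q) \to (R_{\text{comp},\max(\pi^{-1}(1),\pi^{-1}(3))}, R_{\text{comp},\pi^{-1}(2)})$ always sum to at least $R_{\text{comp},2}+R_{\text{comp},3}$. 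No reparametrization of $\pi$ is required.
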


\begin{proof}
The receiver is only interested in the effective lattice points $\bt_{\text{eff},k1}$, $\bt_{\text{eff},k2}$. Nevertheless, we require that it decodes the three effective lattice points $\bt_{\text{eff},k1}$, $\bt_{\text{eff},k2}$ and $\bt_{\text{eff},k3}$.
Due to Lemma~\ref{lem:equiEffChannel}, any rate tuple that is achievable over the effective channel~\eqref{effSymHK} is also achievable for decoding $\bt_{\text{eff},k1}$, $\bt_{\text{eff},k2}$ and $\bt_{\text{eff},k3}$ from the original effective channel~\eqref{layeredEffSymIC} induced by the lattice Han-Kobayashi scheme.

Note that $\bt_{\text{eff},k1}$ and $\bt_{\text{eff},k3}$ are points from the same codebook $\mathcal{L}_1$ with rate $R_1$, and $\bt_{\text{eff},k2}$ is a codeword from $\mathcal{L}_2$ with rate $R_2$.

Consider a compute-and-forward transform coefficient matrix $\bA(\gamma)=[\ba_1(\gamma) \ \ba_2(\gamma) \ \ba_3(\gamma)]^T$ for \eqref{effSymHK}. For any full-rank matrix there exists at least one order of pseudo-triangularization. Therefore, there exists a pseudo-triangularization of $\bA(\gamma)$ with at least one permutation vector $\mathbf{\pi}$.

Consider first the case where $\pi(3)=2$, i.e., the effective lattice point $\bt_{\text{eff},2}$ is the last to be removed in the algebraic successive cancellation decoding procedure of the compute-and-forward transform. According to Theorem~\ref{thm:MACeff}, for any $R_1<R_{\text{comp},2}(\gamma)$ and $R_2<R_{\text{comp},3}(\gamma)$ there exists a chain $\Lambda\subseteq\Lambda_2\subseteq\Lambda_1$ such that $\bt_{\text{eff},k1}$, $\bt_{\text{eff},k2}$ and $\bt_{\text{eff},k3}$ can be decoded from the effective channel~\eqref{effSymHK} via the compute-and-forward transform.

Otherwise, $\pi(1)=2$ or $\pi(2)=2$, which means that the effective lattice point $\bt_{\text{eff},2}$ is either removed first or second from the proceeding linear combinations in the algebraic successive cancellation decoding procedure of the compute-and-forward transform. According to Theorem~\ref{thm:MACeff} for any $R_1<R_{\text{comp},3}(\gamma)$ and $R_2<R_{\text{comp},2}(\gamma)$ there exists a chain $\Lambda\subseteq\Lambda_2\subseteq\Lambda_1$ such that $\bt_{\text{eff},k1}$, $\bt_{\text{eff},k2}$ and $\bt_{\text{eff},k3}$ can be decoded from the effective channel~\eqref{effSymHK} via the compute-and-forward transform.

Since $\rsym=R_1+R_2$, and $\gamma$ can be chosen such as to maximize $\rsym$, the theorem is proved.
\end{proof}

\vspace{1mm}

The problem of optimizing the power allocation $\gamma$ between the private and public codewords, played a major role in the approximation of the two-user interference channel capacity \cite{etw08}. Here, we follow the approach of~\cite{etw08} and choose $\gamma$ such that, at each unintended receiver, the received power of each private codeword is equal to that of the additive noise. Specifically, in the sequel we set $\gamma^2=1/(g^2\Tsnr)$. While this choice of $\gamma$ may be sub-optimal, it suffices to develop our capacity approximations in closed form. The achievable symmetric rate for $\gamma^2=1/(g^2\Tsnr)$ is given in the following corollary to Theorem~\ref{thm:SymICHK}.

\vspace{1mm}

\begin{corollary}
\label{cor:SymICHK}
Assume $g^2\Tsnr>1$ and consider the effective MAC
\begin{align}
\by_k&=\sqrt{\frac{g^2\Tsnr-1}{K\cdot g^2\Tsnr}}\bx_{\text{eff},k1}+\sqrt{\frac{1}{K\cdot g^2\Tsnr}}\bx_{\text{eff},k2}\nonumber\\
&+g\sqrt{\frac{g^2\Tsnr-1}{K\cdot g^2\Tsnr}}\bx_{\text{eff},k3}+\bz_k,\label{effSymHK1}
\end{align}
with effective channel vector
\begin{align}
\bg=\left[\sqrt{\frac{g^2\Tsnr-1}{K\cdot g^2\Tsnr}} \ \ \sqrt{\frac{1}{K\cdot g^2\Tsnr}} \ \ g\sqrt{\frac{g^2\Tsnr-1}{K\cdot g^2\Tsnr}} \ \right]^T,\label{effSymHK1gains}
\end{align}
and effective users' weights $b^2_{\text{eff},1}=1$, $b^2_{\text{eff},2}=1$, and $b^2_{\text{eff},3}=K-1$.
Let $\{\ba^{\text{HK}}_1,\ba^{\text{HK}}_2,\ba^{\text{HK}}_3\}$ and $R^{\text{HK}}_{\text{comp},1}\geq R^{\text{HK}}_{\text{comp},2}\geq R^{\text{HK}}_{\text{comp},3}$ be the optimal coefficient vectors and computation rates for this effective MAC.
Any symmetric rate
\begin{align*}
\rsym<R^{\text{HK}}_{\text{comp},2}+R^{\text{HK}}_{\text{comp},3}
\end{align*} is achievable for the symmetric $K$-user interference channel~\eqref{symICeq}.
\end{corollary}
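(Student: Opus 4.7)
The plan is to directly specialize Theorem~\ref{thm:SymICHK} to the particular choice of power-splitting parameter $\gamma^2 = 1/(g^2\Tsnr)$, and check that the resulting effective MAC coincides with~\eqref{effSymHK1}. First, I would verify that this choice of $\gamma$ is admissible, i.e., that $\gamma\in[0,1)$; this is precisely the assumption $g^2\Tsnr>1$ stated in the corollary, since then $\gamma^2=1/(g^2\Tsnr)<1$.

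Next, I would substitute $\gamma^2=1/(g^2\Tsnr)$ into $\kappa(\gamma)=1/\sqrt{1+\Tsnr g^2\gamma^2(K-1)}$ and simplify. The product $\Tsnr g^2 \gamma^2$ collapses to $1$, so $\kappa(\gamma)=1/\sqrt{K}$. A direct computation then gives the three nonzero entries of $\bg$ in~\eqref{effSymHK}:
\begin{align*}
\kappa(\gamma)\sqrt{1-\gamma^2} &= \tfrac{1}{\sqrt{K}}\sqrt{\tfrac{g^2\Tsnr-1}{g^2\Tsnr}}
= \sqrt{\tfrac{g^2\Tsnr-1}{K\cdot g^2\Tsnr}},\\
\kappa(\gamma)\gamma &= \tfrac{1}{\sqrt{K}}\cdot\tfrac{1}{\sqrt{g^2\Tsnr}} = \sqrt{\tfrac{1}{K\cdot g^2\Tsnr}},\\
\kappa(\gamma)g\sqrt{1-\gamma^2} &= g\sqrt{\tfrac{g^2\Tsnr-1}{K\cdot g^2\Tsnr}},
\end{align*}
which exactly matches the effective channel vector in~\eqref{effSymHK1gains}. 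The effective users' weights $b^2_{\text{eff},1}=b^2_{\text{eff},2}=1$ and $b^2_{\text{eff},3}=K-1$ are inherited verbatim from Theorem~\ref{thm:SymICHK} since they do not depend on $\gamma$.

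Finally, I would invoke Theorem~\ref{thm:SymICHK}: because $\gamma^2=1/(g^2\Tsnr)$ is a valid choice in $[0,1)$, the symmetric rate $\rsym < R_{\text{comp},2}(\gamma)+R_{\text{comp},3}(\gamma)$ is achievable for this particular $\gamma$, hence certainly achievable by the supremum over $\gamma$. For this choice of $\gamma$, the optimal coefficient vectors and computation rates of~\eqref{effSymHK} are by definition those of~\eqref{effSymHK1}, namely $\{\ba^{\text{HK}}_1,\ba^{\text{HK}}_2,\ba^{\text{HK}}_3\}$ and $R^{\text{HK}}_{\text{comp},1}\geq R^{\text{HK}}_{\text{comp},2}\geq R^{\text{HK}}_{\text{comp},3}$, which yields the claim. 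There is essentially no obstacle here; the corollary is purely a bookkeeping specialization of Theorem~\ref{thm:SymICHK}, and the only non-trivial check is the algebraic simplification of $\kappa(\gamma)$ together with the admissibility condition $g^2\Tsnr>1$.
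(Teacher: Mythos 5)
Your proof is correct and matches the paper's approach exactly: the paper also obtains Corollary~\ref{cor:SymICHK} by specializing Theorem~\ref{thm:SymICHK} to $\gamma^2 = 1/(g^2\Tsnr)$ (a choice made so private codewords arrive at noise level at unintended receivers), and the algebraic simplifications you carry out — including $\kappa(\gamma)=1/\sqrt{K}$ and the matching of the effective channel vector with~\eqref{effSymHK1gains} — are precisely the bookkeeping the paper leaves implicit.
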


\vspace{1mm}

Computing the achievable rates given by Theorem~\ref{thm:SymICnoLayeres} and Corollary~\ref{cor:SymICHK} requires finding the optimal computation rates for the effective MACs~\eqref{eff2userMAC} and~\eqref{effSymHK1}, which involves solving an integer least-squares optimization problem (see Section \ref{s:numerical}). In the remainder of this section, we derive lower bounds on these achievable rates that depend only on the values of $\Tsnr$ and $\Tinr$ and can therefore be directly compared to the upper bounds~\eqref{CsymUpperBound}. To simplify the exposition, we assume $g>0$ in the sequel, although all results easily follow for $g<0$ as well.

\subsection{Very Strong Interference Regime} \label{s:vsregime}

The very strong interference regime corresponds to \mbox{$g^2\geq\Tsnr$}. The sum capacity for
\begin{align}
g^2\geq \frac{(\Tsnr+1)^2}{\Tsnr},
\end{align}
which covers almost all of this regime was characterized exactly by Sridharan \textit{et al.}~\cite{sjvj08} using a lattice encoding scheme very similar to the one used in Theorem~\ref{thm:SymICnoLayeres}. The key difference is that in~\cite{sjvj08} each receiver decodes successively: it first decodes the sum of interfering codewords and then subtracts it in order to get a clean view of the desired signal. Recall that in our scheme, each receiver decodes two linear combinations of its signal and the interference.

A slight modification of the scheme given in~\cite{sjvj08} suffices to achieve the interference-free capacity to within a (small) constant gap for all $g^2>\Tsnr$.\footnote{Namely, if $\Tsnr\leq g^2<\Tsnr+2+1/\Tsnr$ all transmitters can reduce their transmission power by a small factor such that the very strong interference condition from~\cite{sjvj08} is satisfied. This power reduction results in a constant rate-loss.} Nevertheless, rather than using the results of~\cite{sjvj08}, we now proceed to lower bound the achievable rate of Theorem~\ref{thm:SymICnoLayeres} for the case $\alpha\geq 2$, i.e., $g^2\geq\Tsnr$. We do this in order to show that our lattice encoding and decoding framework suffices to achieve the approximate capacity in all regimes.

Using the single-layer scheme presented above, the channel seen by each receiver is converted to an effective two-user MAC~\eqref{eff2userMAC}. Let $R_{\text{comp},1}\geq R_{\text{comp},2}$ be the optimal computation rates for this effective channel. Theorem~\ref{thm:SymICnoLayeres} implies that any $\rsym<R_{\text{comp},2}$ is achievable, and hence, it suffices to lower bound $R_{\text{comp},2}$. We have
\begin{align}
R_{\text{comp},2}=R_{\text{comp},1}+R_{\text{comp},2}-R_{\text{comp},1}.\nonumber
\end{align}
Applying Theorem~\ref{thm:EffSumRate} to the effective MAC~\eqref{eff2userMAC}, we find that the sum of the optimal computation rates is lower bounded by
\begin{align}
R_{\text{comp},1}+R_{\text{comp},2}\geq \frac{1}{2}\log\left(\frac{1+\Tsnr(1+g^2(K-1))}{K-1}\right)-1.\nonumber
\end{align}
Therefore
\begin{align}
R_{\text{comp},2}\geq \frac{1}{2}\log\left(\frac{1+\Tsnr(1+g^2(K-1))}{K-1}\right)-1-R_{\text{comp},1},
\label{schem1sumRate}
\end{align}
and it suffices to upper bound $R_{\text{comp},1}$.

Let $R_{\text{comp}}(\bg,[0 \ 1]^T,\bB)$ be the computation rate for decoding the linear combination with coefficient vector $\ba=[0 \ 1]^T$ over the effective MAC~\eqref{eff2userMAC} with $\bg=[1 \ g]^T$ and $\bB=\diag(1,K-1)$. The effective noise variance for the coefficient vector $\ba=[0 \ 1]^T$, which is calculated using~\eqref{preWoodbury2}, is given in~\eqref{effVarInt} at the top of the next page.
\begin{figure*}[!t]
 \begin{align}
\sigma^2_{\text{eff}}(\bg,[0 \ 1]^T,\bB)&=\Tsnr\cdot\left[
                                                   \begin{array}{cc}
                                                     0 & 1 \\
                                                   \end{array}
                                                 \right]\left(\left[
                                                                \begin{array}{cc}
                                                                  1 & 0 \\
                                                                  0 & K-1 \\
                                                                \end{array}
                                                              \right]-\frac{\Tsnr\left[
                                                                                   \begin{array}{cc}
                                                                                     1 & 0 \\
                                                                                     0 & K-1 \\
                                                                                   \end{array}
                                                                                 \right]\left[
                                                                                          \begin{array}{c}
                                                                                            1 \\
                                                                                            g \\
                                                                                          \end{array}
                                                                                        \right]\left[
                                                                                                 \begin{array}{cc}
                                                                                                   1 & g \\
                                                                                                 \end{array}
                                                                                               \right]\left[
                                                                                                        \begin{array}{cc}
                                                                                                          1 & 0 \\
                                                                                                          0 & K-1 \\
                                                                                                        \end{array}
                                                                                                      \right]
                                                              }{1+\Tsnr \left[
                                                                          \begin{array}{cc}
                                                                            1 & g \\
                                                                          \end{array}
                                                                        \right]\left[
                                                                                 \begin{array}{cc}
                                                                                   1 & 0 \\
                                                                                   0 & K-1 \\
                                                                                 \end{array}
                                                                               \right]\left[
                                                                                        \begin{array}{c}
                                                                                          1 \\
                                                                                          g \\
                                                                                        \end{array}
                                                                                      \right]
                                                              }
                                                  \right)\left[
                                                           \begin{array}{c}
                                                             0 \\
                                                             1 \\
                                                           \end{array}
                                                         \right]\nonumber\\
&=\Tsnr\cdot\frac{(K-1)(1+\Tsnr)}{1+\Tsnr+(K-1)g^2\Tsnr}\label{effVarInt}
\end{align}
  \hrulefill
\vspace{-\baselineskip}
\end{figure*}
Substituting $\sigma^2_{\text{eff}}(\bg,[0 \ 1]^T,\bB)$ into~\eqref{CompRatesEff} gives
\begin{align}
R_{\text{comp}}(\bg,[0 \ 1]^T,\bB)&=\frac{1}{2}\log\left(\frac{1+\Tsnr\left(1+g^2(K-1)\right)}{(K-1)(1+\Tsnr)}\right).
\label{RcompInterference}
\end{align}
The coefficient vector $\ba=[0 \ 1]^T$ either gives the highest computation rate or not.
If it does, i.e., if $R_{\text{comp},1}=R_{\text{comp}}(\bg,[0 \ 1]^T,\bB)$, substituting~\eqref{RcompInterference} into~\eqref{schem1sumRate} gives
\begin{align}
R_{\text{comp},2}\geq \frac{1}{2}\log(1+\Tsnr)-1.\label{VSIbound1}
\end{align}

It follows from~\eqref{compRate2} and~\eqref{effvarlattice} that $[0~1]^T$ yields the highest computation rate among all integer coefficient vectors that are linearly dependent with it. Thus, if $R_{\text{comp},1} \neq R_{\text{comp}}(\bg,[0 \ 1]^T,\bB)$, any coefficient vector that attains $R_{\text{comp},1}$ must be linearly independent of $[0~1]^T$. It follows that
\begin{align}
R_{\text{comp},2}&\geq R_{\text{comp}}(\bg,[0 \ 1]^T,\bB)\nonumber\\
&>\frac{1}{2}\log\left(g^2\frac{\Tsnr}{1+\Tsnr}\right).\label{VSIbound2}
\end{align}
Taking the minimum of the two bounds~\eqref{VSIbound1} and~\eqref{VSIbound2}, and using the fact that $g^2\geq\Tsnr$ we obtain
\begin{align}
R_{\text{comp},2}&\geq\min\bigg
(\frac{1}{2}\log(1+\Tsnr)-1,
\frac{1}{2}\log\left(\frac{\Tsnr^2}{1+\Tsnr}\right)\bigg)^+\nonumber\\
&\geq\frac{1}{2}\log(1+\Tsnr)-1.\nonumber
\end{align}
Thus, in the very strong regime, any symmetric rate satisfying
\begin{align}
\rsym<\frac{1}{2}\log(1+\Tsnr)-1
\end{align}
is achievable, which is within $1$ bit of the outer bound~\eqref{CsymUpperBound}.

\subsection{Strong Interference Regime} \label{s:strongregime}

The strong interference regime corresponds to $1\leq\alpha<2$, or equivalently $1\leq g^2<\Tsnr$. As in the previous subsection, we lower bound $R_{\text{comp},2}$ in order to obtain a closed-form expression for the achievable symmetric rate. In contrast to the very strong interference regime, where the lower bound on $R_{\text{comp},2}$ is valid for any $g^2\geq\Tsnr$, here we must exclude certain channel gains in order to get a constant gap from the outer bound~\eqref{CsymUpperBound}. That is, the lower bounds we derive for the strong interference regime are only valid for a predefined subset of the interval $g^2\in[1,\Tsnr)$. As we increase the measure of this subset, our approximation gap worsens. This somewhat strange behavior is to be expected from the existing literature. The results of~\cite{mgmk09} and~\cite{eo09} show that for the $K$-user interference channel the DoF are discontinuous at the rationals. The notion of DoF corresponds to $\alpha\approx 1$. Since the strong interference regime contains values of $\alpha$ near $1$, we cannot expect to achieve rates which are a constant gap from the upper bounds of~\cite{etw08} for all values of $g$. Instead, we show that these upper bounds can be approached up to a constant gap for all $1\leq g^2<\Tsnr$ except for some outage set whose measure can be controlled at the price of increasing the gap. We will see a similar phenomenon when we analyze the moderately weak interference regime.

From \eqref{schem1sumRate}, we have
\begin{align}
R_{\text{comp},2}&\geq\frac{1}{2}\log\left(\frac{1+\Tsnr(1+g^2(K-1))}{K-1}\right)-1-R_{\text{comp},1}\nonumber\\
&>\frac{1}{2}\log\left(g^2\Tsnr\right)-1-R_{\text{comp},1}\label{SIbound1}
\end{align}
The optimal computation rate for the effective MAC~\eqref{eff2userMAC} can be written, by substituting $\bg=[1 \ g]^T$ and $\bB=\diag(1,K-1)$ into~\eqref{ZeffVarTag}, as
\begin{align}
&R_{\text{comp},1}=\frac{1}{2}\log(\Tsnr)-\frac{1}{2}\log(\sigma^2_g)\label{RcompVeryStrong}\\
&\sigma_g^2=\min_{\beta,a_1,a_2}\bigg(\Big((\beta-a_1)^2+(\beta g-a_2)^2(K-1)\Big)\Tsnr+\beta^2 \bigg), \nonumber
\end{align} where $\sigma^2_g$ is the effective noise variance and the minimization is over $\beta\in\mathbb{R}$, and $\ba=[a_1 \ a_2]^T\in\mathbb{Z}^2\setminus\mathbf{0}$.
Substituting~\eqref{RcompVeryStrong} into~\eqref{SIbound1} and applying Theorem~\ref{thm:SymICnoLayeres}, we see that any symmetric rate satisfying
\begin{align}
\rsym<\frac{1}{2}\log\left(g^2\right)+\frac{1}{2}\log\left(\sigma_g^2\right)-1,\label{strongIntGenRateExp}
\end{align}
is achievable over the $K$-user interference channel.
Thus, in order to obtain a lower bound on $\csym$ it suffices to lower bound $\sigma_g^2$.

\begin{remark}
It may at first seem counterintuitive that the symmetric rate expression in~\eqref{strongIntGenRateExp} is an increasing function of the effective noise variance $\sigma_g^2$ for the highest computation rate $R_{\text{comp},1}$. However, as discussed in Section~\ref{s:overview}, when $\sigma_g^2$ is small, the desired signal and the interference are aligned. From another perspective, if the channel vector $\bg$ is very close  to the integer vector $\ba$ (after scaling by $\beta$), then it must be far from the integer coefficient vector that determines $R_{\text{comp},2}$, which in turn determines $R_{\text{SYM}}$. Thus, the best performance is attained when the channel vector is hard to approximate with an integer vector. Building on this idea, the lower bound derived below connects our problem to a Diophantine approximation problem\footnote{Diophantine approximation refers to the branch of number theory that studies how well real numbers can be approximated by rational numbers.} and characterizes the outage set in terms of channel gains that are well-approximated by rationals.
\end{remark}

The effective noise $\sigma_g^2$ can be bounded as
\begin{align}
\sigma_g^2\geq\min_{\beta,a_1,a_2}\bigg(\Big((\beta-a_1)^2+(\beta g-a_2)^2\Big)\Tsnr+\beta^2 \bigg).\label{sigmag}
\end{align}
We first hold $\beta$ constant and minimize over $a_1,a_2$. If $|\beta|\geq 1/(2g)$, the optimal choices for the integers $a_1,a_2$ are
\begin{align}
a_1=\lfloor \beta \rceil, \ a_2=\lfloor \beta g \rceil.\label{integersVeryStrong1}
\end{align}
If $|\beta | < 1/(2g)$, rounding the gains will set both $a_1$ and $a_2$ to zero, which is not allowed. Since $g \geq 1$, the optimal choice is
\begin{align}
a_1=0, \ a_2=\sign(\beta).\label{integersVeryStrong2}
\end{align}
Now, we are left with the problem of minimizing~\eqref{sigmag} over $\beta$. Rather than explicitly solving this minimization problem, we give a lower bound on its solution. We do this by splitting the real line into three intervals, and lower bounding $\sigma^2_g$ for all values of $\beta$ within each one. Then, we take the minimum over these three bounds.

\vspace{1mm}

\underline{Interval $1$ : $0<|{\beta}|\leq 1/(2g)$}

In this interval it is optimal to set $a_2=\sign(\beta)$. Moreover, $|\beta g|\leq 1/2$, and therefore $|\beta g-a_2|>1/2$. Combining this with~\eqref{sigmag} gives
\begin{align}
\sigma_g^2\geq\frac{\Tsnr}{4}.\label{StrongBound1}
\end{align}

\vspace{1mm}

\underline{Interval $2$ : $1/(2g)<|{\beta}|\leq 1/2$}

Here, it is optimal to set $a_1=\lfloor\beta\rceil=0$. Substituting $a_1=0$ in~\eqref{sigmag} gives
\begin{align}
\sigma_g^2\geq \beta^2\Tsnr>\frac{\Tsnr}{4g^2}>\frac{\Tsnr^{1/2}}{4\sqrt{g^2}},\label{StrongBound2}
\end{align}
where the last inequality follows since $g^2<\Tsnr$ in the strong interference regime.

\underline{Interval $3$ : $1/2<|{\beta}|$}

Since $|\beta|>1/2$, we can write $\beta = q + \varphi$ where $q$ is a nonzero integer and $\varphi\in[-1/2,1/2)$. Substituting into~\eqref{sigmag}, we get
\begin{align}
\sigma_g^2&\geq\min_{\varphi,q,a_1,a_2}\bigg((\varphi+q-a_1)^2\Tsnr\nonumber\\
& \ \ \ \ \ \ \ \ \ +(q g-a_2+\varphi g)^2\Tsnr+(\varphi+q)^2 \bigg)\nonumber\\
&\geq\min_{\varphi,q,a_2}\bigg(\Big(\varphi^2+(q g-a_2+\varphi g)^2\Big)\Tsnr+\frac{q^2}{4} \bigg).\label{sigmaBoundphi}
\end{align}
The minimization of~\eqref{sigmaBoundphi} with respect to $\varphi$ (where the constraint $\varphi\in[-1/2,1/2)$ is ignored) can be obtained by differentiation. The minimizing value of $\varphi$ is
\begin{align}
\varphi^*=-\frac{g}{1+g^2}(qg-a_2).\nonumber
\end{align}
Substituting $\varphi^*$ into~\eqref{sigmaBoundphi} gives
\begin{align}
\sigma_g^2\geq\min_{q,a_2}\bigg(\frac{1}{1+g^2}(q g-a_2)^2\Tsnr+\frac{q^2}{4} \bigg),
\end{align}
which, using the fact that $g^2\geq 1$, can be further bounded by
\begin{align}
\sigma_g^2\geq\frac{1}{4}\min_{q,a_2}\max\bigg(\frac{1}{g^2}(q g-a_2)^2\Tsnr,q^2 \bigg).\label{sigmaBoundphiA}
\end{align}
We would like to obtain a lower bound on $\sigma_g^2$ that is valid for all $g\notin \mathcal{S}$, where $\mathcal{S}$ is an outage set with bounded measure. Consider first the interval $[b,b+1)$ for some integer $1\leq b<\sqrt{\Tsnr}$.
Define
\begin{align}
q_{\text{max},b}\triangleq\frac{1}{\sqrt{b+1/2}}\Tsnr^{1/4-\delta/2},\label{qDef}
\end{align}
for some $\delta>0$ to be specified later, and note that $q_{\text{max},b}$ is not necessarily an integer. Also, define
\begin{align}
\Phi_b\triangleq \sqrt{b+1/2} \ \Tsnr^{-1/4-\delta/2}\label{PhiDef}
\end{align} and let $\mathcal{S}_b$ be the set of all values of $g\in[b,b+1)$ such that the inequality
\begin{align}
|q g-a_2|<\Phi_b \label{diophCondition}
\end{align}
has at least one solution with integers $q$ and $a_2$, where $q$ is in the range $0<q\leq q_{\text{max},b}$. Let $\bar{\mathcal{S}}_b=[b,b+1)\setminus \mathcal{S}_b$.
By~\eqref{sigmaBoundphiA},~\eqref{qDef}, and~\eqref{PhiDef}, we have that for all $g\in\bar{\mathcal{S}}_b$
\begin{align}
\sigma_g^2&\geq \frac{1}{4}\min\bigg(\min_{0<q\leq\lfloor q_{\text{max},b}\rfloor,a_2}\max\left(\frac{1}{g^2}(qg-a_2)^2\Tsnr,q^2\right),\nonumber\\
&~~~~~~~~~~~~~~~~\min_{\lceil q_{\text{max},b}\rceil \leq  q,a_2}\max\left(\frac{1}{g^2}(qg-a_2)^2\Tsnr,q^2\right) \bigg)\nonumber\\
&\geq\frac{1}{4}\min\bigg(\frac{1}{g^2}\Phi_b^2\Tsnr,q_{\text{max},b}^2 \bigg)\nonumber\\
&= \frac{1}{4}\min\bigg(\frac{b+1/2}{g^2}\Tsnr^{1/2-\delta},\frac{1}{b+1/2}\Tsnr^{1/2-\delta} \bigg).\label{sigmaStrongIntTmpA}
\end{align}
Since $b\geq 1$, we have that
\begin{align}
\frac{g}{2}< b+\frac{1}{2}<2g.\nonumber
\end{align}
Thus,~\eqref{sigmaStrongIntTmpA} can be further bounded by
\begin{align}
\sigma_g^2\geq \frac{1}{8|g|}\Tsnr^{1/2-\delta}.\label{sigmaStrongInt}
\end{align}
We now turn to upper bound the Lebesgue measure of the set $\mathcal{S}_b$. Our derivation is quite similar to the proof of the convergent part of Khinchine's Theorem~\cite{Diophantine}. Let $\mathcal{I}=[-1,1)$ and define the set
\begin{align}
\mathcal{T}_b(q)=\left[\left\{b,b+\frac{1}{q},\cdots,b+\frac{q-1}{q}\right\}+\frac{\Phi_b}{q}\mathcal{I}\right]\bmod[b,b+1),
\nonumber
\end{align}
where the sum of the two sets is a Minkowski sum. Writing the Diophantine approximation problem~\eqref{diophCondition} as
\begin{align}
\left|g-\frac{a_2}{q}\right|<\frac{\Phi_b}{q},
\end{align}
we see that for a given $q$ and $g\in [b,b+1)$ the inequality admits a solution if and only if $g\in\mathcal{T}_b(q)$. It follows that
\begin{align}
\mathcal{S}_b=\bigcup_{q=1}^{\lfloor q_{\text{max},b} \rfloor}T_b(q).\label{SbDef}
\end{align}
See Figure~\ref{fig:outageSet} for an illustration of the sets $\mathcal{T}_b(q)$ and $\mathcal{S}_b$.
\begin{figure}[]
\begin{center}
\hspace{-0.2in}
\psfrag{b}{ $b$}
\psfrag{b1}{ $b+1$}
\psfrag{b2}{ $b+\frac{1}{2}$}
\psfrag{b3}{ $b+\frac{1}{3}$}
\psfrag{b4}{ $b+\frac{2}{3}$}
\psfrag{a1}{ $\mathcal{T}_b(1)$}
\psfrag{a2}{ $\mathcal{T}_b(2)$}
\psfrag{a3}{ $\mathcal{T}_b(3)$}
\psfrag{a4}{ $\mathcal{S}_b$}
\includegraphics[width=1 \columnwidth]{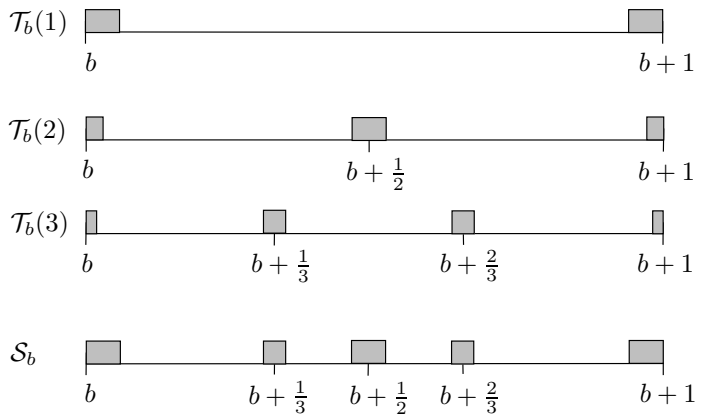}
\caption{An illustration of the sets $\mathcal{T}_b(1)$, $\mathcal{T}_b(2)$, $\mathcal{T}_b(3)$ and their union $\mathcal{S}_b$. In this illustration, $q_{\text{max},b}=3$ and $\Phi_b=\nicefrac{1}{16}$.}
\label{fig:outageSet}
\end{center}
\end{figure}
Thus, the Lebesgue measure of $\mathcal{S}_b$ can be upper bounded by
\begin{align}
\mu(\mathcal{S}_b)&=\Vol\left(\mathcal{S}_b\right)\nonumber\\
&\leq\sum_{q=1}^{\lfloor q_{\text{max},b} \rfloor}\Vol\left(\mathcal{T}_b(q)\right)\nonumber\\
&\leq\sum_{q=1}^{\lfloor q_{\text{max},b} \rfloor}q\cdot2\frac{\Phi_b}{q}\nonumber\\
&\leq 2 q_{\text{max},b}\Phi_b\nonumber\\
&=2\Tsnr^{-\delta}.
\label{outageMeasure}
\end{align}
Setting $\delta=(c+1)/\log(\Tsnr)$ and substituting into~\eqref{sigmaStrongInt} and~\eqref{outageMeasure} gives
\begin{align}
\sigma_g^2\geq \frac{2^{-c}}{16\sqrt{g^2}}\Tsnr^{1/2}\label{sigmaStrongIntTmp}
\end{align}
for all $g\in[b,b+1)$ up to an outage set $\mathcal{S}_b$ of measure not greater than $2^{-c}$.

\vspace{2mm}

Combining the three bounds~\eqref{StrongBound1},~\eqref{StrongBound2}, and~\eqref{sigmaStrongIntTmp} yields
\begin{align}
\sigma_g^2&\geq\min\left(\frac{\Tsnr}{4},\frac{1}{4\sqrt{g^2}}\Tsnr^{1/2},\frac{2^{-c}}{16\sqrt{g^2}}\Tsnr^{1/2} \right)\nonumber\\
&\geq\frac{2^{-c}}{16\sqrt{g^2}}\Tsnr^{1/2}\label{sigmaStrongInt2}
\end{align}
for all $g\in[b,b+1)$ up to an outage set $\mathcal{S}_b$ with measure at most $2^{-c}$.

Combining~\eqref{strongIntGenRateExp} and~\eqref{sigmaStrongInt2} we see that, for all $g\in[b,b+1)$ up to an outage set $\mathcal{S}_b$ of measure not greater than $2^{-c}$, any symmetric rate satisfying
\begin{align}
\rsym&<\frac{1}{4}\log(g^2\Tsnr)-\frac{c}{2}-3\nonumber\\
&=\frac{1}{4}\log(\Tinr)-\frac{c}{2}-3 \label{rateStrongInt}
\end{align}
is achievable. We conclude that the symmetric rate~\eqref{rateStrongInt} is achievable for all channel gains in the strong interference regime except for an outage set whose measure is a fraction of $2^{-c}$ of the interval $1\leq|g|<\sqrt{\Tsnr}$, for any $c > 0$.

\vspace{2mm}

\begin{remark}
In the high $\Tsnr$ limit, the total DoF of the symmetric $K$-user IC drops below $K/2$ when the channel gain $g$ is rational \cite{eo09,mgmk09}. At finite $\Tsnr$, we observe that the channel gains in the interval $g\in[b,b+1)$ that fall within the outage set are the ones close to rational numbers with denominator smaller than $q_{\text{max},b}$. Since $q_{\text{max},b}$ scales as $\Tsnr^{1/4}/\sqrt{|g|}$, only channel gains close to rational numbers with denominator smaller than $\Tsnr^{1/4}$ may result in outage. Moreover, the sensitivity of the achievable rate to the ``rationality'' of $g$ decreases as $g$ increases.
\end{remark}

\begin{remark}
We note that for any $c>0$ the set of channel coefficients that fall in the outage set
\begin{align}
\mathcal{S}=\bigcup_{b=1}^{\lfloor\sqrt{\Tsnr} \rfloor}\mathcal{S}_b,\nonumber
\end{align}
can be easily determined by setting $\delta=(c+1)/\log(\Tsnr)$ in~\eqref{qDef},~\eqref{PhiDef}, and applying~\eqref{SbDef}.
\end{remark}

\subsection{Moderately Weak Interference Regime} \label{s:moderateregime}

The moderately weak interference regime is characterized by $2/3\leq \alpha<1$, or equivalently, $\Tsnr^{-1/3}\leq g^2<1$. As in the strong interference regime, we show the achievability of symmetric rates which are a constant gap from the upper bound for a certain fraction of the channel gains. As opposed to the very strong and strong interference regimes, where a single-layered lattice scheme suffices to achieve the approximate capacity, here we will need the second scheme, which employs two layers of lattice codes at each transmitter.

We will set the power of the private lattice codewords so that they are perceived at noise level at the unintended receivers. Let $g_1$, $g_2$ and $g_3$ be the channel gains in the effective three-user MAC~\eqref{effSymHK1} from Corollary~\ref{cor:SymICHK}, and recall that, for this effective channel, the effective weight matrix is $\bB=\diag(1,1,K-1)$. Let $R^{\text{HK}}_{\text{comp},1}\geq R^{\text{HK}}_{\text{comp},2}\geq R^{\text{HK}}_{\text{comp},3}$ be the three optimal computation rates for the effective channel~\eqref{effSymHK1}. Corollary~\ref{cor:SymICHK} states that any symmetric rate satisfying $\rsym<R^{\text{HK}}_{\text{comp},2}+R^{\text{HK}}_{\text{comp},3}$ is achievable, and we now turn to lower bounding this achievable rate in closed form. First, note that
\begin{align}
R^{\text{HK}}_{\text{comp},2}+R^{\text{HK}}_{\text{comp},3}=\sum_{m=1}^3 R^{\text{HK}}_{\text{comp},m}-R^{\text{HK}}_{\text{comp},1}.\label{MWIrate}
\end{align}
By applying Theorem~\ref{thm:EffSumRate} to the effective channel~\eqref{effSymHK1}, we obtain the following lower bound on the sum of optimal computation rates,
\begin{align}
&\sum_{m=1}^3 R^{\text{HK}}_{\text{comp},m}\nonumber\\
&\geq \frac{1}{2}\log\left(\frac{1+\Tsnr(g_1^2+g_2^2+(K-1)g_3^2)}{K-1}\right)-\frac{3}{2}\log(3)\nonumber\\
&> \frac{1}{2}\log\left(\Tsnr(g_1^2+g_2^2)\right)-\frac{1}{2}\log\left(3^3(K-1)\right)\nonumber\\
&=\frac{1}{2}\log\left(\Tsnr\right)-\frac{1}{2}\log\left(27K(K-1)\right),\label{sumEffHK}
\end{align}
where we have used the fact that $g_1^2+g_2^2=1/K$ in the last equality.
The highest computation rate can be written as
\begin{align}
&R^{\text{HK}}_{\text{comp},1}=\frac{1}{2}\log(\Tsnr)-\frac{1}{2}\log(\sigma^2_{\text{HK}})\label{RHK1}
\end{align}
where $\sigma^2_{\text{HK}}$ is given in~\eqref{sigma2HK} at the top of the next page.
\begin{figure*}[!t]
\begin{align}
\sigma^2_{\text{HK}}=\min_{\beta,\mathbf{a}_1^{\text{HK}}}\left(\beta^2+\left(\Bigg(\beta \sqrt{\frac{g^2\Tsnr-1}{K\cdot g^2\Tsnr}}-a_1\Bigg)^2+\Bigg(\beta \sqrt{\frac{1}{K\cdot g^2\Tsnr}}-a_2\Bigg)^2+(K-1)\Bigg(\beta g\sqrt{\frac{g^2\Tsnr-1}{K\cdot g^2\Tsnr}}-a_3\Bigg)^2\right)\Tsnr \right)\label{sigma2HK}
\end{align}
  \hrulefill
\vspace{-\baselineskip}
\end{figure*}
The minimization in~\eqref{sigma2HK} is performed over all $\beta\in\RR$ and $\ba^{\text{HK}}_1=[a_1 \ a_2 \ a_3]^T\in\mathbb{Z}^3\setminus\mathbf{0}$. Combining~\eqref{MWIrate},~\eqref{RHK1}, and~\eqref{sigma2HK} and applying Corollary~\ref{cor:SymICHK}, we see that any symmetric rate satisfying
\begin{align}
\rsym<\frac{1}{2}\log(\sigma^2_{\text{HK}})-\frac{1}{2}\log\left(27K(K-1)\right)\label{symRateHK}
\end{align}
is achievable for the $K$-user interference channel. Therefore, it suffices to lower bound the effective noise variance $\sigma^2_{\text{HK}}$.
Substituting
$$\beta=\sqrt{\frac{K g^2\Tsnr}{g^2\Tsnr-1}}\tilde{\beta}$$ in~\eqref{sigma2HK}, which is allowed since $\beta$ can take any value in $\RR$, gives
\begin{align}
&\sigma^2_{\text{HK}}=\min_{\tilde{\beta},a_1,a_2,a_3}\Bigg(\tilde{\beta}^2\cdot\frac{Kg^2\Tsnr}{g^2\Tsnr-1}+(\tilde{\beta} -a_1)^2\Tsnr\nonumber\\
&+\left(\frac{\tilde{\beta}}{\sqrt{g^2\Tsnr-1}}-a_2\right)^2\Tsnr+(K-1)(\tilde{\beta} g-a_3)^2\Tsnr \Bigg).\label{sigmaHK}
\end{align}
In the sequel, we assume\footnote{This assumption is valid, since for $\Tsnr\leq 4$ the symmetric capacity is upper bounded by $1/2\log(1+4)=1.161$bits. Our capacity approximations in this subsection, and also in the next subsection, exhibit a constant gap greater than $7/2$ bits, and therefore hold for $\Tsnr<4$.} $\Tsnr>4$. With this assumption, $\sqrt{g^2\Tsnr-1}>1$ for all $g^2\geq \Tsnr^{-1/3}$, i.e., for all values of $g$ in the moderately weak interference regime. We will also use the fact that the inequality $\sqrt{g^2\Tsnr-1}>1$ continues to hold for all $g^2\geq \Tsnr^{-1/2}$, i.e., for all values of $g$ in the weak interference regime.
This implies that \mbox{$g^2\Tsnr/(g^2\Tsnr-1)>1$} and hence~\eqref{sigmaHK} can be lower bounded as
\begin{align}
&\sigma^2_{\text{HK}} \geq \min_{\tilde{\beta},a_1,a_2,a_3}\Bigg(K\tilde{\beta}^2+(\tilde{\beta} -a_1)^2\Tsnr\nonumber\\
&+\left(\frac{\tilde{\beta}}{\sqrt{g^2\Tsnr-1}}-a_2\right)^2\Tsnr+(K-1)(\tilde{\beta} g-a_3)^2\Tsnr \Bigg).\label{sigmaHK2}
\end{align}
We first hold $\tilde{\beta}$ constant, and minimize over $a_1,a_2,a_3$. If $|\tilde{\beta}|\geq1/2$, the optimal choices for the integers $a_1,a_2,a_3$ are
\begin{align}
a_1=\lfloor\tilde{\beta}\rceil, \ a_2=\left\lfloor\tilde{\beta}/\sqrt{g^2\Tsnr-1}\right\rceil, \ a_3=\lfloor\tilde{\beta} g\rceil. \label{integerChoices}
\end{align}
If $|\tilde{\beta}|<1/2$, all three integers $a_1,a_2,a_3$ from~\eqref{integerChoices} are zero, which is not permitted. Therefore, for these values of $\tilde{\beta}$ one of the integers must take the value $1$ or $-1$. Since for $\Tsnr>4$ and $\Tsnr^{-1/2}\leq g^2< 1$ we have
\begin{align}
\max\bigg(|\tilde{\beta}|,\Big|\tilde{\beta}/\sqrt{g^2\Tsnr-1}\Big|,|\tilde{\beta} g| \bigg)=|\tilde{\beta}|,\nonumber
\end{align}
the optimal choices of $a_1,a_2,a_3$ for values of $|\tilde{\beta}|<1/2$ are
\begin{align}
a_1=\sign(\tilde{\beta}), \ a_2=0, \ a_3=0 \ . \label{integerChoices2}
\end{align}
Now, the problem of lower bounding $\sigma^2_{\text{HK}}$ reduces to minimizing~\eqref{sigmaHK2} over $\tilde{\beta}$. Rather than solving this cumbersome minimization problem, we split the real line into four intervals, and lower bound $\sigma^2_{\text{HK}}$ for all values of $\tilde{\beta}$ within each one. Then, we take the minimum over these four lower bounds. In a similar manner to the previous subsection, we define $\delta=(2c+8)/\log(\Tsnr)$, where $c>0$ is some constant. The lower bounds below are derived in Appendix~\ref{app:boundsModeratelyWeak}.

\vspace{3mm}

\underline{Interval $1$ : $0<|\tilde{\beta}|\leq 1/2$}

\begin{align}
\sigma^2_{\text{HK}}\geq\frac{\Tsnr}{4}.\label{sigmaBoundInterval1}
\end{align}

\vspace{1mm}

\underline{Interval 2 : $1/2<|\tilde{\beta}|\leq \sqrt{|g|}\Tsnr^{1/4-\delta/2}/2$}  \footnote{If $\sqrt{|g|}\Tsnr^{1/4-\delta/2}/2<1/2$ this interval is empty, and we skip to interval $3$.}

For all values of $\Tsnr^{-1/3}<|g|\leq 1$ except for an outage set with measure not greater than $2^{-c}$ we have
\begin{align}
\sigma^2_{\text{HK}}>\frac{2^{-2c}}{4\cdot2^{8}}\frac{\Tsnr^{1/2}}{\sqrt{g^2}}.\label{sigmaBoundInterval2}
\end{align}

\vspace{1mm}

\underline{Interval $3$ : $\sqrt{|g|}\Tsnr^{1/4-\delta/2}/2<|\tilde{\beta}|\leq \Tsnr^{1/4}/\sqrt{8|g|}$}
\begin{align}
\sigma^2_{\text{HK}}\geq \frac{2^{-2c}}{4\cdot 2^8}\frac{\Tsnr^{1/2}}{\sqrt{g^2}}.\label{sigmaBoundInterval3}
\end{align}

\underline{Interval $4$ : $\Tsnr^{1/4}/\sqrt{8|g|}<|\tilde{\beta}|$}
\begin{align}
\sigma^2_{\text{HK}}\geq \frac{1}{8}\frac{\Tsnr^{1/2}}{\sqrt{g^2}}.\label{sigmaBoundInterval4}
\end{align}

Combining the four lower bounds~\eqref{sigmaBoundInterval1},~\eqref{sigmaBoundInterval2},~\eqref{sigmaBoundInterval3}, and~\eqref{sigmaBoundInterval4}, we have
\begin{align}
\sigma^2_{\text{HK}}&\geq\min\Bigg(\frac{1}{4}\Tsnr,\frac{2^{-2c}}{2^{10}}\frac{\Tsnr^{1/2}}{\sqrt{g^2}}
,\frac{1}{8}\frac{\Tsnr^{1/2}}{\sqrt{g^2}} \ \Bigg)\nonumber\\
&=\frac{2^{-2c}}{2^{10}}\frac{\Tsnr^{1/2}}{\sqrt{g^2}}\nonumber
\end{align}
for all $\Tsnr^{-1/3}\leq g^2< 1$ up to an outage set of measure not greater than $2^{-c}$. Thus, substituting our lower bound for $\sigma_{\text{HK}}^2$ into \eqref{symRateHK}, we find that any symmetric rate satisfying
\begin{align}
\rsym&<\frac{1}{2}\log\left(\frac{\Tsnr^{1/2}}{\sqrt{g^2}}\right)-c-5-\frac{1}{2}\log(27)-\frac{1}{2}\log(K^2)\nonumber
\end{align}
is achievable over the symmetric $K$-user interference channel for all $\Tsnr^{-1/3}\leq g^2< 1$ up to an outage set of measure not greater than $2^{-c}$. Since
\begin{align}
\frac{1}{2}\log&\left(\frac{\Tsnr^{1/2}}{\sqrt{g^2}}\right)-c-5-\frac{1}{2}\log(27)-\frac{1}{2}\log(K^2)\nonumber\\
&>\frac{1}{2}\log\left(\frac{\Tsnr^{1/2}}{\sqrt{g^2}}\right)-c-8-\log(K)\nonumber\\
&=\frac{1}{2}\log\left(\frac{\Tsnr}{\sqrt{\Tinr}}\right)-c-8-\log(K), \nonumber
\end{align}
any symmetric rate satisfying
\begin{align}
\rsym&<\frac{1}{2}\log\left(\frac{\Tsnr}{\sqrt{\Tinr}}\right)-c-8-\log(K)
\label{RweakInter}
\end{align}
is achievable.

\vspace{1mm}

\begin{remark}
It follows from the derivation in Appendix~\ref{app:boundsModeratelyWeak} that, as in the strong interference regime, the channel gains that fall within the outage set are the ones close to rational numbers with denominator smaller than $\Tsnr^{1/4}$. Here, the sensitivity of the achievable rate to the ``rationality'' of $g$ increases as $g$ approaches $1$.
\end{remark}

\subsection{Weak Interference Regime} \label{s:weakregime}

This regime is characterized by $1/2\leq \alpha<2/3$, or equivalently, $\Tsnr^{-1/2}\leq g^2<\Tsnr^{-1/3}$. As in the moderately weak interference regime, we develop a closed-form lower bound on the achievable symmetric rate of Corollary~\ref{cor:SymICHK}. A key difference is that the bound derived here is valid for all channel gains, rather than up to an outage set.

We first note that equations~\eqref{symRateHK} and~\eqref{sigmaHK2} continue to hold in this regime as in the moderately weak interference regime, and the optimal choices of $a_1,a_2,a_3$ are also as in~\eqref{integerChoices} and~\eqref{integerChoices2}. As before, we divide the real line into four intervals, give lower bounds on $\sigma_{\text{HK}}^2$ which hold for all values of $\tilde{\beta}$ in each one, and conclude that $\sigma_{\text{HK}}^2$ is lower bounded by the minimum of these four bounds. The lower bounds below are derived in Appendix~\ref{app:boundsWeak}

\vspace{1mm}

\underline{Interval $1$ : $0<|\tilde{\beta}|\leq 1/2$}
\begin{align}
\sigma^2_{\text{HK}}\geq\frac{\Tsnr}{4}.\label{sigmaBoundInterval1b}
\end{align}

\vspace{1mm}

\underline{Interval $2$ : $1/2<|\tilde{\beta}|\leq 1/(2|g|)$}
\begin{align}
\sigma^2_{\text{HK}}\geq \frac{g^2\Tsnr}{4}.\label{sigmaBoundInterval2b}
\end{align}

\vspace{1mm}

\underline{Interval $3$ : $1/(2|g|)<|\tilde{\beta}|\leq \sqrt{g^2\Tsnr/8}$}
\begin{align}
\sigma^2_{\text{HK}}\geq \frac{1}{4 g^4}.\label{sigmaBoundInterval3b}
\end{align}

\vspace{1mm}

\underline{Interval $4$ : $\sqrt{g^2\Tsnr/8}<|\tilde{\beta}|$}

\begin{align}
\sigma^2_{\text{HK}}>\frac{g^2\Tsnr}{4}.\label{sigmaBoundInterval4b}
\end{align}

\vspace{2mm}

Combining the four lower bounds~\eqref{sigmaBoundInterval1b},~\eqref{sigmaBoundInterval2b},~\eqref{sigmaBoundInterval3b}, and~\eqref{sigmaBoundInterval4b}, we have
\begin{align}
\sigma^2_{\text{HK}}&\geq\frac{1}{4}\min\bigg(\Tsnr,g^2\Tsnr,g^{-4}\bigg)\nonumber\\
&=\frac{g^2\Tsnr}{4},\label{sigmaWeakInt}
\end{align}
where~\eqref{sigmaWeakInt} is true since $\Tsnr^{-1/2}\leq g^2<\Tsnr^{-1/3}$. It follows by substituting~\eqref{sigmaWeakInt} into~\eqref{symRateHK} that any symmetric rate satisfying
\begin{align}
\rsym&<\frac{1}{2}\log\left(\frac{g^2\Tsnr}{4}\right)-\frac{1}{2}\log(27)-\frac{1}{2}\log(K^2)\nonumber
\end{align}
is achievable for the symmetric $K$-user interference channel with $\Tsnr^{-1/2}\leq g^2<\Tsnr^{-1/3}$.
Since
\begin{align}
\frac{1}{2}\log&\left(\frac{g^2\Tsnr}{4}\right)-\frac{1}{2}\log(27)-\frac{1}{2}\log(K^2)\nonumber\\
&>\frac{1}{2}\log\left(g^2\Tsnr\right)-\frac{7}{2}-\log(K)\nonumber\\
&=\frac{1}{2}\log\left(\Tinr\right)-\frac{7}{2}-\log(K)\nonumber
\end{align}
any symmetric rate satisfying
\begin{align}
\rsym<\frac{1}{2}\log\left(\Tinr\right)-\frac{7}{2}-\log(K)\label{RweakInter}
\end{align}
is achievable.

\subsection{Noisy Interference Regime} \label{s:noisyinterferenceregime}

The noisy interference regime is characterized by $\alpha<1/2$, or equivalently $g^2<\Tsnr^{-1/2}$. In this regime, each receiver decodes its desired codeword while treating all interfering codewords as noise. Lattice codes are not necessary in this regime in order to approximate the symmetric capacity: random i.i.d. Gaussian codebooks suffice. Nevertheless, the same performance can be achieved with lattice codes as shown in Theorem~\ref{thm:noisyInterference} which states that any symmetric rate
\begin{align}
\rsym<\frac{1}{2}\log\left(1+\frac{\Tsnr}{1+(K-1)g^2\Tsnr}\right)\nonumber
\end{align}
is achievable. It follows that any symmetric rate satisfying
\begin{align}
\rsym&<\frac{1}{2}\log\left(1+\frac{\Tsnr}{1+g^2\Tsnr}\right)-\frac{1}{2}\log(K-1)\nonumber\\
&=\frac{1}{2}\log\left(1+\frac{\Tsnr}{1+\Tinr}\right)-\frac{1}{2}\log(K-1)\label{Rnoisy}
\end{align}
is achievable.

\section{Degrees-of-Freedom} \label{s:dof}
In the previous section, we have shown that the compute-and-forward transform can approximate the capacity of the symmetric $K$-user interference channel up to a constant gap for all channel gains outside a small outage set. Ideally, we would like to use a similar approach to approximate the capacity of the general (non-symmetric) interference channel. In contrast to the symmetric case, where all interferers are automatically aligned (if they all use the same lattice codebook), in a general interference channel the interferers will be observed through different channel gains. A linear combination of lattice codewords is always a codeword only if all of the coefficients are integers. Thus, in order to induce alignment, all of the interfering gains should be steered towards integers, which is an overconstrained problem.

The compute-and-forward transform proposed in this paper is quite general, in that its performance can be evaluated for any Gaussian interference network, and it can be combined with precoding schemes that induce alignment. For instance, consider the class of \emph{real interference alignment} precoding schemes that transform the channel seen by each receiver in a non-symmetric interference channel to an effective MAC where some of the interfering users are aligned. A remarkable example of such a scheme is that of~\cite{mgmk09}, which is used to prove that the DoF offered by almost every Gaussian $K$-user interference channel is $K/2$. To date, essentially all real interference alignment schemes utilized a scalar lattice constellation (e.g., $p$-ary pulse amplitude modulation), concatenated with a random i.i.d.~outer code. Potentially, replacing this construction with AWGN capacity achieving $n$-dimensional lattice codes can improve the performance of such schemes and may eventually lead to achievable rate regions that outperform TDMA at reasonable values of SNR. Here, we take a first step and verify that the compute-and-forward transform can attain the same high SNR asymptotics.

Specifically, we show that for almost every $K$-user MAC, each user can achieve $1/K$ DoF using the compute-and-forward transform. In~\cite{mgmk09}, it is shown that the same is true using a scalar lattice concatenated with a random i.i.d.~outer code and maximum likelihood decoding. This result is then used as a building block for the interference alignment scheme. Since real interference alignment schemes often induce effective multiple-access channels whose coefficients are dependent \cite{mgmk09}, our analysis assumes that the channel coefficients belong to a \emph{manifold}, and our results apply for a set of full Lebesgue measure with respect to the considered manifold.

Theorem~\ref{thm:SumRate} in Section~\ref{s:mac} guarantees that the sum of the optimal computation rates is close to the sum capacity of the MAC. However, the theorem does not tell us how the sum rate is divided between the $K$ rates. We now show that,
in a DoF sense, the sum is equally split between all $K$ rates for almost every channel realization. Recall the definition for DoF:
\begin{align}
d_{\text{comp},k}=\lim_{\Tsnr\rightarrow\infty}\frac{R_{\text{comp},k}(\Tsnr)}{\frac{1}{2}\log(1+\Tsnr)} \ .
\end{align}
First, we upper bound $d_{\text{comp},1}$, the DoF provided by the highest computation rate.

\vspace{1mm}

\begin{theorem}
\label{thm:DoF}
Let $f_1,\ldots,f_K$ be functions from $\RR^m$ to $\RR$ satisfying
\begin{enumerate}
\item $f_k$ for $k=1,\ldots,K$ is analytic in $\RR^m$,
\item $1,f_1,\ldots,f_K$ are linearly independent over $\RR$,
\end{enumerate}
and define the manifold
\begin{align}
\mathcal{M}=\left\{\big[f_1(\mathbf{\tilde{h}}) \ \cdots \ f_K(\mathbf{\tilde{h}})\big] \ : \ \mathbf{\tilde{h}}\in\RR^m\right\}.\label{manifoldDef}
\end{align}
For almost every $\bh\in\mathcal{M}$, the DoF offered by the highest computation rate is upper bounded by
\begin{align}
d_{\text{comp},1}\leq\frac{1}{K} \ .
\end{align}
\end{theorem}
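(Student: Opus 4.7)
The plan is a Borel--Cantelli argument on a discrete sequence of SNRs, combined with a quantitative Diophantine-type estimate on the manifold $\mathcal{M}$. Fix $\epsilon>0$ and set $\Tsnr_n = 2^n$. Define the bad set
\begin{align*}
\mathcal{B}_n(\epsilon)=\Bigl\{\tilde{\bh}\in U :\,\exists\,\ba\in\ZZ^K\setminus\{\mathbf{0}\},\;\sigma^2_{\text{eff}}\bigl(\bh(\tilde{\bh}),\ba\bigr)<\Tsnr_n^{1-1/K-\epsilon}\Bigr\},
\end{align*}
where $U\subset\RR^m$ is an arbitrary bounded set (the conclusion is local, so it suffices to work on an exhaustion by compacts). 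If $\sum_n \mu(\mathcal{B}_n(\epsilon))<\infty$, then by Borel--Cantelli almost every $\tilde{\bh}$ lies in only finitely many $\mathcal{B}_n(\epsilon)$; since $R_{\text{comp},1}$ changes by at most a constant when $\Tsnr$ doubles, this yields $d_{\text{comp},1}\leq 1/K+\epsilon$ a.e., and letting $\epsilon\downarrow 0$ along a countable sequence gives the theorem.

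To bound $\mu(\mathcal{B}_n(\epsilon))$, I would exploit the decomposition (derived from \eqref{woodbury})
\begin{align*}
\sigma^2_{\text{eff}}(\bh,\ba)=\Tsnr\,\|\ba_{\perp}\|^2+\frac{\Tsnr(\bh^T\ba)^2}{1+\Tsnr\|\bh\|^2},
\end{align*}
where $\ba_{\perp}$ is the projection of $\ba$ onto $\bh^{\perp}$. If $\tilde{\bh}\in\mathcal{B}_n(\epsilon)$ with witness $\ba$, then both summands are $<\Tsnr_n^{1-1/K-\epsilon}$, which, on the compact $\bh(U)$, (i) forces $\|\ba\|\lesssim \Tsnr_n^{(1-1/K-\epsilon)/2}$ (so the union over $\ba$ is finite), and (ii) forces the angle $\theta=\angle(\bh,\ba)$ to satisfy
\begin{align*}
\|\ba\|^2\sin^2\theta<\Tsnr_n^{-1/K-\epsilon},
\end{align*}
i.e., $\bh(\tilde{\bh})$ is almost colinear with $\ba$. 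Hence $\mathcal{B}_n(\epsilon)\subset\bigcup_{\ba}S_{\ba,n}$, where $S_{\ba,n}$ is the ``slab'' corresponding to this angular condition.

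For each $\ba$, the slab is the sub-level set of the real-analytic function
\begin{align*}
G_{\ba}(\tilde{\bh})=\|\ba\|^2\|\bh(\tilde{\bh})\|^2-\bigl(\bh(\tilde{\bh})^T\ba\bigr)^2,
\end{align*}
which is \emph{not} identically zero: if it were, expanding would yield a nontrivial quadratic $\RR$-relation among the $f_k$'s and $1$, contradicting the hypothesis that $1,f_1,\dots,f_K$ are linearly independent over $\RR$ (more precisely, one first observes that $G_\ba\equiv 0$ forces $\bh$ to lie on the line $\RR\ba$, which contradicts the linear independence assumption applied to $\ba$ together with $K-1$ vectors orthogonal to $\ba$). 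A quantitative \L{}ojasiewicz inequality on the compact $U$ then gives $\mu\{G_{\ba}<\delta\}\leq C\,\delta^{\alpha}$ with $\alpha>0$ depending only on the local analytic structure of $\mathcal{M}$, uniformly for $\ba$ in any sphere $\|\ba\|\approx A$ (this uniformity is obtained by rescaling $G_{\ba}$ by $\|\ba\|^2$ so that its coefficients lie in a compact subset of the space of quadratic forms, and by a standard covering of $U$ into finitely many ``cells'' on which the \L{}ojasiewicz exponent can be taken uniform, exactly as in Bernik--Kleinbock--Margulis style estimates for Diophantine approximation on nondegenerate analytic manifolds).

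Combining the per-$\ba$ estimate with the cardinality bound $\#\{\ba\in\ZZ^K\!:\,\|\ba\|\approx A\}\lesssim A^{K-1}$ and summing over dyadic scales $A\in\{1,2,\dots,\Tsnr_n^{1/2}\}$ gives $\mu(\mathcal{B}_n(\epsilon))\leq C\,\Tsnr_n^{-\eta(\epsilon)}$ for some $\eta(\epsilon)>0$, which is geometrically summable in $n$. The main obstacle is the third step: extracting an estimate for $\mu\{G_{\ba}<\delta\}$ that is polynomial in $\delta$ with a constant that grows at most polynomially in $\|\ba\|$. Without the analytic/linear-independence hypotheses one could have the manifold concentrate near lines $\RR\ba$ pathologically; these assumptions are exactly what is needed to invoke a uniform \L{}ojasiewicz bound (or, equivalently, the Kleinbock--Margulis ``$(C,\alpha)$-good'' property for coordinate functions on nondegenerate analytic manifolds), and this is what prevents the bad set from having positive measure.
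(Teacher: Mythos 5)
Your overall architecture (dyadic SNRs, Borel--Cantelli, reducing to the measure of a sublevel set of $G_{\ba}$) is reasonable, and the geometric decomposition of $\sigma^2_{\text{eff}}$ into a tangential and orthogonal part, together with the observation that $G_\ba \not\equiv 0$ forces $\bh$ to lie on a line and thereby violates the linear-independence hypothesis, are sound. However, the step you yourself identify as ``the main obstacle'' is where the argument genuinely breaks: a uniform \L{}ojasiewicz estimate $\mu\{G_{\ba}<\delta\}\le C\delta^{\alpha}$, summed \emph{naively} over all $\ba\in\ZZ^K$ with $\|\ba\|\lesssim\Tsnr_n^{1/2}$, does not give a decaying bound $\mu(\mathcal{B}_n)\lesssim\Tsnr_n^{-\eta}$ when $\dim\mathcal{M}<K$.

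Concretely, after rescaling to $\tilde G_{\ba}=G_{\ba}/\|\ba\|^2$, your slab condition becomes $\tilde G_{\ba}<\Tsnr_n^{-1/K-\epsilon}/\|\ba\|^2$, so with $\|\ba\|\approx A$ the per-$\ba$ measure bound reads $C(\Tsnr_n^{-1/K-\epsilon}/A^2)^{\alpha}$. Multiplying by the count $A^{K-1}$ and summing dyadically over $A\le\Tsnr_n^{(1-1/K-\epsilon)/2}$ yields a decaying total only if $\alpha>(K-1)^2/(2K)$. But for a nondegenerate analytic curve ($m=1$) bounded away from the origin, the best uniform \L{}ojasiewicz exponent for $\tilde G_{\ba}$ is $\alpha=1/2$ (the tangency $\bh(t_0)\parallel\hat\ba$ gives $\tilde G_{\ba}(t)\asymp(t-t_0)^2$ with a nonvanishing quadratic coefficient, and nondegeneracy rules out higher-order tangency once $\|\bh\|$ is bounded below). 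Since $(K-1)^2/(2K)>1/2$ already for $K=3$, the union bound diverges: e.g.\ for a space curve in $\RR^3$ your total bad measure grows like $\Tsnr_n^{1/6}$, and Borel--Cantelli cannot be invoked. The estimate only closes for the full-dimensional case $\mathcal{M}=\RR^K$ (where $\alpha=(K-1)/2$ hits the boundary case, recovering a Khinchine--Groshev-type convergence argument). The theorem, however, must hold for lower-dimensional manifolds, which is precisely what the paper's application (effective MACs induced by real interference alignment, whose coefficients are algebraically dependent) requires.

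What is missing is a much finer count: for most $\ba$ at scale $A$ the slab $S_{\ba,n}$ does not meet the manifold at all, because the manifold's direction image is a thin subset of $S^{K-1}$. Making this rigorous and uniform (and doing so under only the analytic/linear-independence hypotheses, not pointwise nondegeneracy) is precisely the content of the Kleinbock--Margulis quantitative nondivergence machinery, which combines the $(C,\alpha)$-good property with the Dani correspondence on the space of unimodular lattices. The $(C,\alpha)$-good property is an \emph{input} to that argument, not a substitute for it, so describing them as ``equivalent'' understates the gap. The paper's proof avoids this entirely: it normalizes $\bar h_1=1$, reduces to simultaneous Diophantine approximation of the $(K-1)$-tuple $(\bar h_2,\dots,\bar h_K)$, and invokes Kleinbock--Margulis Theorem A (extremality of nondegenerate analytic manifolds) as a black box; the DoF bound then falls out from optimizing $|q|=\Tsnr^{\gamma}$. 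If you want a self-contained argument, you would need to reproduce the nondivergence estimate rather than only the sublevel-set bound; alternatively, cite the extremality theorem directly as the paper does. (A minor secondary slip: the second summand in your MMSE decomposition should be $\Tsnr(\bh^T\ba)^2/\bigl(\|\bh\|^2(1+\Tsnr\|\bh\|^2)\bigr)$, with an extra $\|\bh\|^2$ in the denominator; this is only a bounded factor on a compact and does not affect the DoF, but it should be corrected.)
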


\vspace{1mm}

The proof is given in Appendix~\ref{app:DoFproof}, and is based on showing that restricting the scaling coefficient $\beta$ from~\eqref{CoFestimate} to the form $\beta=q/h_1$ for $q\in\ZZ$ (almost surely) incurs no loss from a DoF point of view. This way, the first coefficient of $\beta\bh$ is an integer. Then, a result from the field of Diophantine approximation which is due to Kleinbock and Margulis~\cite{km98} is used in order to lower bound the error in approximating the remaining $K-1$ channel gains with integers.

\vspace{1mm}

As a special case of Theorem~\ref{thm:DoF} we may choose the manifold $\mathcal{M}$ as $\RR^K$ which implies the following corollary.

\vspace{1mm}

\begin{corollary}
For almost every $\bh\in\RR^K$ the DoF offered by the highest computation rate is upper bounded by
\begin{align}
d_{\text{comp},1}\leq \frac{1}{K}.\nonumber
\end{align}
\label{cor:DoFoverR}
\end{corollary}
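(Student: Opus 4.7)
The plan is to obtain Corollary~\ref{cor:DoFoverR} as a direct specialization of Theorem~\ref{thm:DoF} by exhibiting a concrete choice of the dimension $m$ and the functions $f_1, \ldots, f_K$ for which the manifold $\mathcal{M}$ defined in~\eqref{manifoldDef} coincides with $\RR^K$ as a set, so that ``almost every $\bh \in \mathcal{M}$'' becomes ``almost every $\bh \in \RR^K$'' with respect to the standard Lebesgue measure.

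Concretely, I would take $m = K$ and choose $f_k : \RR^K \to \RR$ to be the $k$th coordinate projection, $f_k(\mathbf{\tilde{h}}) = \tilde{h}_k$ for $k=1,\ldots,K$. I then need to verify that these functions meet the two hypotheses of Theorem~\ref{thm:DoF}. First, each coordinate projection is linear and therefore analytic on $\RR^K$, so condition~(1) holds. Second, the functions $1, f_1, \ldots, f_K$ must be linearly independent over $\RR$ as functions on $\RR^K$; but any nontrivial linear relation $c_0 + \sum_{k=1}^K c_k \tilde{h}_k \equiv 0$ forces $c_0 = 0$ (evaluate at $\mathbf{\tilde{h}} = \mathbf{0}$) and then $c_k = 0$ for each $k$ (evaluate at the standard basis vectors), so condition~(2) holds as well.

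With this choice, the manifold becomes
\begin{align*}
\mathcal{M} = \left\{[\tilde{h}_1 \ \cdots \ \tilde{h}_K] : \mathbf{\tilde{h}} \in \RR^K\right\} = \RR^K \ ,
\end{align*}
and the parametrization $\mathbf{\tilde{h}} \mapsto [f_1(\mathbf{\tilde{h}}) \ \cdots \ f_K(\mathbf{\tilde{h}})]$ is the identity map, which pushes Lebesgue measure on the parameter space to Lebesgue measure on $\RR^K$. Consequently, the ``almost every'' statement in Theorem~\ref{thm:DoF} translates directly into ``almost every $\bh \in \RR^K$'' in the usual Lebesgue sense, and applying the theorem yields $d_{\text{comp},1} \leq 1/K$ for almost every $\bh \in \RR^K$.

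There is essentially no obstacle here beyond checking the two hypotheses, since the corollary is precisely the flat-manifold specialization of Theorem~\ref{thm:DoF}; the only conceptual point worth flagging in the write-up is that the measure on $\mathcal{M}$ implicit in Theorem~\ref{thm:DoF} agrees with Lebesgue measure on $\RR^K$ under this parametrization, so no additional Jacobian or measure-theoretic argument is required.
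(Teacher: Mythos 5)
Your proposal is correct and is exactly the route the paper takes: the paper states just before the corollary that one may choose the manifold $\mathcal{M}$ in Theorem~\ref{thm:DoF} to be $\RR^K$, which is your choice $m=K$ with $f_k$ the coordinate projections. Your write-up merely makes explicit the routine verification of the analyticity and linear-independence hypotheses that the paper leaves implicit.
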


\vspace{1mm}

\begin{remark}
Niesen and Whiting~\cite{nw11} studied the DoF offered by the highest computation rate and showed that
\begin{align}
d_{\text{comp},1}\leq\bigg\{\begin{array}{cc}
                              1/2 & K=2 \\
                              2/(K+1) & K>2
                            \end{array}
\end{align}
for almost every $\bh\in\RR^K$. Our bound therefore agrees with that of~\cite{nw11} for $K=2$ and improves it for $K>2$.
\end{remark}

\vspace{1mm}

The next corollary shows that all $K$ optimal computation rates offer $1/K$ DoF for almost every $\bh$ satisfying mild conditions.
\begin{corollary}
Let $\mathcal{M}$ be a manifold satisfying the conditions of Theorem~\ref{thm:DoF}.
For almost every $\bh\in\mathcal{M}$ the DoF provided by each of the $K$ optimal computation rates is
$d_{\text{comp},k}=1/K$.
\label{thm:symmetricDoF}
\end{corollary}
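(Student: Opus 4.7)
The plan is to prove the corollary by a simple sandwich argument combining the sum-rate lower bound from Theorem~\ref{thm:SumRate} with the upper bound on $d_{\text{comp},1}$ from Theorem~\ref{thm:DoF}, using the fact that the computation rates are ordered.

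First I would translate Theorem~\ref{thm:SumRate} into an asymptotic statement about degrees of freedom. Dividing both sides of
\begin{align*}
\sum_{m=1}^K R_{\text{comp},m}(\Tsnr) \geq \tfrac{1}{2}\log(1+\|\bh\|^2\Tsnr) - \tfrac{K}{2}\log K
\end{align*}
by $\tfrac{1}{2}\log(1+\Tsnr)$ and letting $\Tsnr\to\infty$, the additive constant $\tfrac{K}{2}\log K$ vanishes and the $\|\bh\|^2$ inside the logarithm is absorbed as a negligible offset (for any fixed $\bh\neq\mathbf{0}$, which excludes at most a measure-zero subset of $\mathcal{M}$). This yields
\begin{align*}
\sum_{m=1}^K d_{\text{comp},m} \geq 1.
\end{align*}

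Next I would combine this with Theorem~\ref{thm:DoF}, which asserts that $d_{\text{comp},1}\leq 1/K$ for almost every $\bh\in\mathcal{M}$. Because the computation rates are ordered by definition as $R_{\text{comp},1}\geq R_{\text{comp},2}\geq\cdots\geq R_{\text{comp},K}$, the same ordering passes to the DoF by taking limits, so $d_{\text{comp},m}\leq d_{\text{comp},1}\leq 1/K$ for every $m=1,\ldots,K$. Summing these $K$ inequalities gives $\sum_{m=1}^K d_{\text{comp},m}\leq 1$, which together with the lower bound forces equality $\sum_{m=1}^K d_{\text{comp},m}=1$.

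Finally I would conclude that the only way $K$ nonnegative numbers, each bounded above by $1/K$, can sum to $1$ is if they all equal $1/K$. Hence $d_{\text{comp},m}=1/K$ for every $m=1,\ldots,K$, on the set of full Lebesgue measure on $\mathcal{M}$ where Theorem~\ref{thm:DoF} applies (intersected with the measure-zero-removed set $\{\bh:\bh\neq\mathbf{0}\}$, which remains of full measure). No obstacle is anticipated: the entire argument is a two-line deduction from the two earlier results and the monotonicity of the optimal computation rates, so the only care needed is verifying that the additive $O(1)$ and $O(\log\|\bh\|)$ terms in Theorem~\ref{thm:SumRate} do not affect the DoF, which is immediate.
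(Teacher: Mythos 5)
Your proof is correct and takes essentially the same approach as the paper: the paper also combines Theorem~\ref{thm:SumRate} (giving $\sum_k d_{\text{comp},k}\geq 1$), the monotonicity of the computation rates, and the bound $d_{\text{comp},1}\leq 1/K$ from Theorem~\ref{thm:DoF}, and then observes that the corollary follows; you have simply spelled out the final pigeonhole/sandwich step and the asymptotic bookkeeping that the paper leaves implicit.
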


\vspace{2mm}

\begin{proof}
Theorem~\ref{thm:SumRate} implies that $\sum_{k=1}^K d_{\text{comp},k}\geq 1$. Using the fact that $d_{\text{comp},k}$ is monotonically decreasing in $k$ and that $d_{\text{comp},1}\leq 1/K$ for almost every $\bh\in\mathcal{M}$, the corollary follows.
\end{proof}

The corollary above implies that, in the limit of very high SNR, not only is the sum of computation rates close to the sum capacity of the MAC, but each computation rate scales like the symmetric capacity of the MAC for almost all channel gains.
Note that our analysis (as well as that of \cite{nw11}) is within the context of the achievable computation rates stemming from Theorem \ref{thm:CoF}.

\vspace{1mm}

The next corollary follows from Corollary~\ref{thm:symmetricDoF} and Theorem~\ref{thm:MAC}.

\vspace{1mm}

\begin{corollary}
\label{cor:DoFMAC}
Let $\mathcal{M}$ be a manifold satisfying the conditions of Theorem~\ref{thm:DoF}.
The DoF attained by each user in the $K$-user MAC under the compute-and-forward transform is $1/K$ for almost every $\bh\in\mathcal{M}$. In particular, the DoF attained by each user in the $K$-user MAC under the compute-and-forward transform is $1/K$ for almost every $\bh\in\RR^{K}$
\end{corollary}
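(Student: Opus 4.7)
The plan is to combine Corollary~\ref{thm:symmetricDoF} with Theorem~\ref{thm:MAC} in a nearly mechanical way. First I would invoke Corollary~\ref{thm:symmetricDoF} to conclude that, for almost every $\bh\in\mathcal{M}$, every one of the $K$ optimal computation rates satisfies $d_{\text{comp},k} = 1/K$, i.e.,
\begin{align*}
R_{\text{comp},k}(\Tsnr) = \frac{1}{K}\cdot\frac{1}{2}\log(1+\Tsnr) - o(\log\Tsnr),\quad k=1,\ldots,K.
\end{align*}
Fix such an $\bh$ and let $\bA$ be the (full-rank by Definition~\ref{def:optEqs}) integer-valued coefficient matrix from the compute-and-forward transform. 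Because any full-rank matrix admits a pseudo-triangularization with at least one permutation vector $\mathbf{\pi}$, Theorem~\ref{thm:MAC} guarantees the achievability of the rate tuple
\begin{align*}
R_k = R_{\text{comp},\pi^{-1}(k)}(\Tsnr) - \delta,\quad k=1,\ldots,K,
\end{align*}
for any $\delta>0$.

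Next I would argue that each user attains $1/K$ DoF. Since $\pi^{-1}$ is just a permutation of $\{1,\ldots,K\}$, user $k$ is assigned one of the $K$ optimal computation rates, each of which satisfies $d_{\text{comp},j}=1/K$. Dividing $R_k$ by $\tfrac{1}{2}\log(1+\Tsnr)$ and sending $\Tsnr\to\infty$ (with $\delta$ chosen to vanish appropriately) yields $d_k = 1/K$ for every $k$. This completes the first claim. The in-particular statement about $\bh\in\RR^K$ follows by taking $\mathcal{M}=\RR^K$ with $f_k(\tilde{\bh}) = \tilde{h}_k$; these coordinate functions are clearly analytic and, together with the constant $1$, linearly independent over $\RR$, so the hypotheses of Theorem~\ref{thm:DoF} (and hence Corollary~\ref{thm:symmetricDoF}) are satisfied.

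There is no real obstacle here: the entire content lies in Corollary~\ref{thm:symmetricDoF} (which in turn rests on the Kleinbock--Margulis Diophantine bound used to prove Theorem~\ref{thm:DoF}) and in the algebraic successive cancellation mechanism guaranteed by Theorem~\ref{thm:MAC}. The one point worth highlighting is that, because the DoF of every computation rate equals $1/K$, the particular permutation $\mathbf{\pi}$ produced by pseudo-triangularization is immaterial for the DoF conclusion; no user is disadvantaged by being assigned the ``last'' position in the cancellation order. Thus the proof is essentially a two-line composition of the preceding results.
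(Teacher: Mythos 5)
Your proposal is correct and matches the paper's intended (and unspelled-out) argument: the paper simply states that the corollary ``follows from Corollary~\ref{thm:symmetricDoF} and Theorem~\ref{thm:MAC},'' and you have filled in exactly that composition, including the key observation that the permutation $\mathbf{\pi}$ from pseudo-triangularization is immaterial because all $K$ optimal computation rates share the same DoF. One minor simplification: since $\delta$ is a fixed constant not growing with $\Tsnr$, it already contributes nothing to the DoF limit, so there is no need to send $\delta\to 0$ jointly with $\Tsnr\to\infty$.
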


\vspace{1mm}

The next theorem shows that for almost every effective $L$-user multiple access channel of the form introduced in Section~\ref{sub:effmac} each of the effective users achieves $1/L$ degree of freedom. The proof is given in Appendix~\ref{app:DoFeffProof}.

\vspace{1mm}

\begin{theorem}
\label{thm:DoFeff}
Let $f_1,\ldots,f_L$ be functions from $\RR^m$ to $\RR$ satisfying
\begin{enumerate}
\item $f_\ell$ for $\ell=1,\ldots,L$ is analytic in $\RR^m$,
\item $1,f_1,\ldots,f_L$ are linearly independent over $\RR$,
\end{enumerate}
and define the manifold
\begin{align}
\mathcal{M}=\Big\{\big[f_1(\tilde{\bg}) \ \cdots \ f_L(\tilde{\bg})\big] \ : \ \tilde{\bg}\in\RR^m\Big\}.\nonumber
\end{align}
For almost every $\bg\in\mathcal{M}$ the DoF offered by each of the $L$ optimal computation rates for the effective MAC~\eqref{effectiveMAC} is
\begin{align}
d_{\text{comp},\ell}=\lim_{\Tsnr\rightarrow\infty}\frac{R_{\text{comp},\ell}(\Tsnr)}{\frac{1}{2}\log(1+\Tsnr)}=\frac{1}{L}.
\end{align}
\end{theorem}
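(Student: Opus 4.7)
The plan is to mirror the proof of Corollary~\ref{thm:symmetricDoF}, namely to combine an upper bound $d_{\text{comp},1}\leq 1/L$ that holds almost everywhere on $\mathcal{M}$ with a sum DoF lower bound derived from Theorem~\ref{thm:EffSumRate}. The only genuinely new ingredient is transferring Theorem~\ref{thm:DoF} from the standard $L$-user MAC to the effective $L$-user MAC defined in Section~\ref{sub:effmac}, which shares the channel vector $\bg$ but differs through the weight matrix $\bB$.

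First I would observe that since the $b_{\text{eff},\ell}$ are fixed positive constants independent of $\Tsnr$, the expressions~\eqref{ZeffVar} and~\eqref{ZeffVarTag} satisfy
$$c_{\min}\,\sigma^2_{\text{eff}}(\bg,\ba,\beta) \;\leq\; \sigma^2_{\text{eff}}(\bg,\ba,\beta,\bB) \;\leq\; c_{\max}\,\sigma^2_{\text{eff}}(\bg,\ba,\beta),$$
with $c_{\min}=\min(1,b_{\min}^2)$ and $c_{\max}=\max(1,b_{\max}^2)$. Minimizing over $(\beta,\ba)$ preserves this sandwich, so for any fixed $\bg$ the best computation rate of the effective MAC and that of the standard $L$-user MAC with channel vector $\bg$ differ by at most a $\Tsnr$-independent additive constant, and in particular they share the same highest-rate DoF $d_{\text{comp},1}$. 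Applying Theorem~\ref{thm:DoF} (with $K$ replaced by $L$) to the standard MAC on the same manifold $\mathcal{M}$ then yields $d_{\text{comp},1}\leq 1/L$ for almost every $\bg\in\mathcal{M}$.

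Second, I would apply Theorem~\ref{thm:EffSumRate}, which gives
$$\sum_{\ell=1}^L R_{\text{comp},\ell} \;\geq\; \frac{1}{2}\log\!\left(\frac{1+\Tsnr\sum_{\ell}g_\ell^2 b_{\text{eff},\ell}^2}{\det(\bB)}\right)-\frac{L}{2}\log(L).$$
Dividing by $\tfrac{1}{2}\log(1+\Tsnr)$ and letting $\Tsnr\to\infty$ yields $\sum_{\ell}d_{\text{comp},\ell}\geq 1$ on the full-measure subset of $\mathcal{M}$ on which $\bg\neq\mathbf{0}$; this subset is indeed of full measure because the linear independence of $1,f_1,\ldots,f_L$ rules out $\bg\equiv\mathbf{0}$ on any positive-measure portion of $\mathcal{M}$.

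Finally, combining the two bounds concludes the proof. Since $d_{\text{comp},1}\geq d_{\text{comp},2}\geq\cdots\geq d_{\text{comp},L}$ and $d_{\text{comp},1}\leq 1/L$, we have $d_{\text{comp},\ell}\leq 1/L$ for every $\ell$; together with $\sum_{\ell}d_{\text{comp},\ell}\geq 1$, equality $d_{\text{comp},\ell}=1/L$ must hold for all $\ell$. The main obstacle is the first step: one must verify that introducing the diagonal weight matrix $\bB$ does not perturb the Kleinbock--Margulis-type Diophantine argument that underlies Theorem~\ref{thm:DoF}. Because $\bB$ is SNR-independent with positive diagonal entries, the sandwich above suffices, and the manifold hypotheses (analyticity and linear independence of $1,f_1,\ldots,f_L$) are untouched by the passage from the effective MAC to the corresponding standard MAC, since in both cases it is the same vector $\bg$ that plays the role of the channel.
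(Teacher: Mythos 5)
Your proof is correct and follows essentially the same route as the paper: compare the effective MAC to a reference $L$-user MAC with the same channel vector $\bg$, use the multiplicative sandwich on the effective noise variances (hence a $\Tsnr$-independent additive gap in computation rates, so identical DoF), and combine the Kleinbock--Margulis-based upper bound with the sum-rate lower bound. The only cosmetic difference is that the paper applies the finished Corollary~\ref{thm:symmetricDoF} to the reference MAC and transfers all $L$ degrees of freedom at once via the sandwich, whereas you transfer only $d_{\text{comp},1}\leq 1/L$ and then rerun the combination step directly on the effective MAC using Theorem~\ref{thm:EffSumRate}; both are valid and essentially equivalent.
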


\vspace{1mm}

\begin{remark}
The manifold $\mathcal{M}$ is the same manifold used in~\cite{mgmk09}. This manifold was general enough to allow the derivation of the DoF characterization of the $K$-user interference channel in~\cite{mgmk09}. Thus, the DoF results from~\cite{mgmk09} can be re-derived using the same real interference alignment scheme from~\cite{mgmk09} with $n$-dimensional lattice codes instead of 1-D integer constellations concatenated with outer codes.
\end{remark}

\section{Discussion}\label{s:dicussion}
In this paper, we have developed a new decoding framework for lattice-based interference alignment.  We used this framework as a building block for two lattice-based interference alignment schemes for the symmetric real Gaussian $K$-user interference channel. These schemes perform well starting from the moderate SNR regime, and are within a constant gap from the upper bounds on the capacity for all channel gains outside of some outage set whose measure can be controlled.

A natural question for future research is how to extend the results above to the general Gaussian $K$-user interference channel. The main problem is that, in the general case, the interfering lattice codewords are not naturally aligned, as their gains are not integer-valued. Therefore, in order to successfully apply lattice interference alignment, some form of precoding, aimed towards forcing the cross channel gains to be integers, is required. Unfortunately, simple power-backoff strategies do not suffice, even in the three-user case.

One option for overcoming this problem is to use many layers at each transmitter, as in~\cite{mgmk09}, and create partial alignment between interfering layers. While this achieves the optimal DoF, it performs poorly at reasonable values of SNR, as there will be a rate loss for each additional layer. As a result the rate region obtained by combining the compute-and-forward transform with the precoding scheme of~\cite{mgmk09} is inferior to that obtained by time-sharing, for values of SNR of practical interest. Another option is to precode not only using power-backoff, but also over time, which may partially compensate for the lack of sufficient free parameters. An example for such a precoding scheme is the power-time code introduced in~\cite{oe13}.

A positive feature of the compute-and-forward framework is that it does not require perfect alignment of the lattice points participating in the integer linear combinations. Namely, the effect of not perfectly equalizing the channel gains to integers is an enhanced effective noise. For the general interference channel, this suggests that it may suffice to find precoding schemes that only approximately force the cross-channel gains to integers.
%

\begin{appendices}

\section{Proof of Theorem~\ref{thm:MAC}}
\label{app:MACproof}
We begin with two lemmas which will be useful for the proof of Theorem~\ref{thm:MAC}.

\vspace{2mm}

\begin{lemma}
\label{lem:ModuloTriangularization}
Let $\bA$ be a $K\times K$ matrix with integer entries of magnitudes bounded from above by some constant $a_{\text{max}}$. If there exists a real-valued $K \times K$ lower triangular matrix $\bL$ with unit diagonal such that $\mathbf{\tilde{A}}=\bL\bA$ is upper triangular up to column permutation $\mathbf{\pi}$, then for any prime $p>K (K!)^2(K a_{\text{max}})^{2K}a_{\text{max}}$ there also exists a lower triangular matrix $\bL^{(p)}$ with elements from $\{0,1,\ldots,p-1\}$ and unit diagonal such that $\mathbf{\tilde{A}}^{(p)}=\left[\bL^{(p)}\bA\right]\bmod{p}$ is upper triangular up to column permutation $\mathbf{\pi}$.
\end{lemma}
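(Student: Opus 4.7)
The plan is to exploit the fact that the entries of $\bL$ are rationals whose numerators and denominators are explicit minors of $\bA$, so that for any sufficiently large prime $p$ these denominators are invertible in $\Zp$ and the construction can be transported modulo $p$ entry-by-entry. First I would make the structure of $\bL$ explicit. Since $\bL$ is lower triangular with unit diagonal and $\mathbf{\tilde{A}}=\bL\bA$ is upper triangular up to the column permutation $\mathbf{\pi}$, the matrix $\bL$ is produced by a Gaussian-elimination-type recursion that uses only ``add a scalar multiple of an earlier row to a later row'' operations, where the scalars are chosen in the order dictated by $\mathbf{\pi}$ to kill the prescribed off-diagonal entries. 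By Cramer's rule applied at each stage, every entry $L_{ij}$ with $i>j$ is a rational $N_{ij}/D_{ij}$ whose numerator and denominator are integer minors of submatrices of $\bA$. Using Hadamard's inequality together with the entrywise bound $a_{\text{max}}$, one obtains $|N_{ij}|,|D_{ij}|\le K!\,(Ka_{\text{max}})^{K}$, and each $D_{ij}$ is nonzero because $\mathbf{\tilde{A}}$ is full rank.

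Next I would construct $\bL^{(p)}$ by setting $L^{(p)}_{ii}=1$ and, for $i>j$, $L^{(p)}_{ij}=[N_{ij}\cdot D_{ij}^{-1}]\bmod p$, where $D_{ij}^{-1}$ denotes the multiplicative inverse of $D_{ij}$ in $\Zp$. This inverse exists because $|D_{ij}|<p$ by hypothesis, so $p\nmid D_{ij}$. By construction $\bL^{(p)}$ is lower triangular with unit diagonal and has entries in $\{0,1,\ldots,p-1\}$. To verify that $[\bL^{(p)}\bA]\bmod p$ inherits the zero pattern of $\mathbf{\tilde{A}}$, I would observe that each required vanishing $[\bL\bA]_{ij}=0$ corresponds, after clearing the denominators that appear in $L_{ik}$ for $k=1,\ldots,K$, to an integer identity of the form $\sum_k N_{ik}A_{kj}\prod_{m\ne k}D_{im}=0$; this identity reduces trivially modulo $p$. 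Multiplying through by the mod-$p$ inverse of the common denominator $\prod_m D_{im}$---invertible because $p$ exceeds the product bound $(K!)^{2}(Ka_{\text{max}})^{2K}$---yields $\sum_k L^{(p)}_{ik}A_{kj}\equiv 0\pmod{p}$, as required.

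The remaining factor $K\,a_{\text{max}}$ in the stated bound on $p$ is used to guarantee that the pivotal entries $\tilde{A}_{i,\pi(i)}$---which are nonzero rationals whose numerators are bounded by at most $K\,a_{\text{max}}$ times the above denominator product---do not vanish modulo $p$, so that $[\bL^{(p)}\bA]\bmod p$ retains full rank over $\Zp$, consistent with the role this lemma plays in the decoding analysis of Theorem~\ref{thm:MAC}. The main obstacle will be the combinatorial bookkeeping that justifies the bound $K(K!)^{2}(Ka_{\text{max}})^{2K}a_{\text{max}}$ in detail: one must write down the explicit expressions for $N_{ij}$, $D_{ij}$, and the pivotal numerators as minors (or small combinations of minors) of $\bA$, track how common denominators accumulate across the $K$ columns, and apply Hadamard's inequality in each case. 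Once these bounds are in place, the modular construction of $\bL^{(p)}$ and the verification of the pseudo-triangular shape modulo $p$ proceed mechanically.
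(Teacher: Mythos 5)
Your overall strategy—recognize that $\bL$ has rational entries whose numerators and denominators are bounded minors, clear denominators, reduce modulo $p$, and multiply back by modular inverses of the denominators—is essentially the paper's proof. However, the "combinatorial bookkeeping" you defer hides two non-mechanical steps that the paper's argument resolves explicitly.

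First, you invoke Cramer's rule to express each $L_{ij}$ as $N_{ij}/D_{ij}$, but the linear system $\bA^{(i,\pi)}\bm{\ell}^{(i)} = \ba^{(i,\pi)}$ that determines row $i$ of $\bL$ need not be full rank, so Cramer does not apply directly. The paper handles this by restricting to a maximal linearly independent subset of columns, setting the remaining entries of $\bm{\ell}^{(i)}$ to zero, and then applying Cramer to the (square, full-rank) normal equations $(\bA_{\mathcal{U}}^{(i,\pi)})^T\bA_{\mathcal{U}}^{(i,\pi)}\bm{\ell}_{\mathcal{U}}^{(i)} = (\bA_{\mathcal{U}}^{(i,\pi)})^T\ba^{(i,\pi)}$. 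This step is also why the bound involves $a_{\text{max}}^2$ inside the base: the normal-equations matrix has entries of size $\sim K a_{\text{max}}^2$, not $a_{\text{max}}$.

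Second, and more importantly for the stated bound on $p$: all entries in row $i$ of $\bL$ can be written over a \emph{single} common denominator $q_i = |\det(\bA'^{(i,\pi)})|$, precisely because they all come from the same Cramer determinant. Your sketch treats the denominators as potentially distinct ($D_{i1},\ldots,D_{i,i-1}$), and when you clear them you get a factor $\prod_{m\ne k}D_{im}$ that could be as large as $(K!(Ka_{\text{max}}^2)^K)^{K-1}$—far beyond the stated threshold for $p$. The invertibility of the product modulo $p$ is indeed fine (it only needs each $D_{im}$ to be less than $p$), but the argument that the \emph{pivot} entries do not vanish modulo $p$ requires the corresponding integer numerator to be strictly between $0$ and $p$ in magnitude. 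With a single denominator $q_i$ per row, the paper forms the integer matrix $\mathbf{\tilde{L}} = \diag(q_1,\ldots,q_K)\bL$, observes that the entries of $\mathbf{\tilde{L}}\bA = \diag(q_1,\ldots,q_K)\mathbf{\tilde{A}}$ are integers bounded by $K(K!)^2(Ka_{\text{max}})^{2K}a_{\text{max}} < p$, and concludes that the reduction mod $p$ preserves both the zero pattern and the nonvanishing of the pivots. Without the single-denominator observation, the quoted bound on $p$ would have to grow like a $K$-th power of itself, so this is not a detail one can defer—it is the load-bearing simplification. Once those two points are in place, your construction of $\bL^{(p)}$ by entry-wise modular inversion of the denominators coincides exactly with the paper's $\bL^{(p)} = [\diag((q_1)^{-1},\ldots,(q_K)^{-1})\mathbf{\tilde{L}}^{(p)}]\bmod p$, and the rest of your argument goes through.
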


\vspace{2mm}

\begin{proof}
Assume that there exists a lower triangular matrix $\bL$ with unit diagonal such that $\mathbf{\tilde{A}}=\bL\bA$ is upper triangular up to column permutation $\mathbf{\pi}$.
We begin by showing that all elements in the $i$th ($i>1$) row of $\bL$ can be written as rational numbers with the same denominator $1\leq q_i\leq K!(K a^2_{\text{max}})^{K}$. To see this note that if $\mathbf{\tilde{A}}$ is triangular up to column permutation vector $\mathbf{\pi}$, then its $i$th row contains at least $i-1$ zeros, namely $\tilde{a}_{ij}=0$ for $j=\pi(1),\ldots,\pi(i-1)$. Since $\bL$ is lower triangular, the following equations must hold
\begin{align}
\tilde{a}_{ij}=\sum_{m=1}^{i}\ell_{im}a_{mj}=0, \ \text{for } j=\pi(1),\ldots,\pi(i-1).\label{triangEq1}
\end{align}
By definition $\ell_{ii}=1$, therefore~\eqref{triangEq1} can be written as
\begin{align}
\sum_{m=1}^{i-1}\ell_{im}a_{mj}=-a_{ij}, \ \text{for } j=\pi(1),\ldots,\pi(i-1).\label{triangEq2}
\end{align}
Define the vectors \mbox{$\bm{\ell}^{(i)}=[\ell_{i1} \ \cdots \ \ell_{i,i-1}]^T$}, \mbox{$\ba^{(i\mathbf{\pi})}=-[a_{i,\pi(1)} \ \cdots \ a_{i\pi(i-1)}]^T$} and the matrix
\begin{align}
\bA^{(i,\mathbf{\pi})}=\left(
  \begin{array}{ccc}
    a_{1\pi(1)} & \hdots & a_{i-1\pi(1)} \\
    \vdots & \ddots & \vdots \\
    a_{1\pi(i-1)} & \hdots & a_{i-1\pi(i-1)} \\
  \end{array}
\right).\nonumber
\end{align}
We have,
\begin{align}
\bA^{(i,\mathbf{\pi})}\bm{\ell}^{(i)}=\ba^{(i,\mathbf{\pi})}.\label{triangEq3}
\end{align}
From the fact that $\bA$ can be pseudo-triangularized with permutation vector $\mathbf{\pi}$, we know that the system of equations~\eqref{triangEq3} has a solution.
Assume that
\begin{align}
\rank\left(\bA^{(i,\mathbf{\pi})}\right)=u\leq i-1.\nonumber
\end{align}
It follows that there are $u$ linearly independent columns in $\bA^{(i,\mathbf{\pi})}$. Let $\mathcal{U}\subseteq\{1,\ldots,i-1\}$ be a set of indices corresponding to $u$ such linearly independent columns, and $\bar{\mathcal{U}}$ be its complement. Let $\mathbf{{A}}_{\mathcal{U}}^{(i,\mathbf{\pi})}\in\ZZ^{i-1\times u}$ be the matrix obtained by taking the columns of $\bA^{(i,\mathbf{\pi})}$ with indices in $\mathcal{U}$. Since~\eqref{triangEq3} has a solution, we have $\ba^{(i,\mathbf{\pi})}\in\Span\left(\mathbf{{A}}_{\mathcal{U}}^{(i,\mathbf{\pi})}\right)$. Thus, we can set $\bm{\ell}^{(i)}(k)=0$ for all $k\in\bar{\mathcal{U}}$, and~\eqref{triangEq3} will still have a solution. Letting $\bm{{\ell}}_{\mathcal{U}}^{(i)}\in\RR^{u\times 1}$ be the vector obtained by taking from $\bm{{\ell}}^{(i)}$ only the entries with indices in $\mathcal{U}$, it follows that
\begin{align}
\mathbf{{A}}_{\mathcal{U}}^{(i,\mathbf{\pi})}\mathbf{{\ell}}_{\mathcal{U}}^{(i)}=\ba^{(i,\mathbf{\pi})}\label{triangEq4}
\end{align}
has a solution. Now, multiplying both sides of~\eqref{triangEq4} by $\left(\mathbf{{A}}_{\mathcal{U}}^{(i,\mathbf{\pi})}\right)^T$ gives
\begin{align}
\mathbf{{A'}}^{(i,\mathbf{\pi})}\bm{{\ell}}_{\mathcal{U}}^{(i)}=\ba'^{(i,\mathbf{\pi})},\label{triangEq5}
\end{align}
where $\mathbf{{A'}}^{(i,\mathbf{\pi})}=\left(\mathbf{{A}}_{\mathcal{U}}^{(i,\mathbf{\pi})}\right)^T\mathbf{{A}}_{\mathcal{U}}^{(i,\mathbf{\pi})}\in\ZZ^{u\times u}$ is a full-rank matrix and $\ba'^{(i,\mathbf{\pi})}=\left(\mathbf{{A}}_{\mathcal{U}}^{(i,\mathbf{\pi})}\right)^T\ba^{(i,\mathbf{\pi})}\in\ZZ^{u\times 1}$. Note that all entries of $\mathbf{{A'}}^{(i,\mathbf{\pi})}$ as well as all entries of $\ba'^{(i,\mathbf{\pi})}$ have magnitude bounded from above by $\tilde{a}_{\text{max}}\triangleq u a^2_{\text{max}}$.
Cramer's rule for solving a system of linear equations (see e.g.,~\cite{harville}) implies that all elements of $\bm{{\ell}}_{\mathcal{U}}^{(i)}$ can be expressed as rational numbers with denominator $q_i\triangleq|\det(\bA'^{(i,\mathbf{\pi})})|$.
Recall the Leibnitz formula (see e.g.,~\cite{harville}) for the determinant of an $n\times n$ matrix $\bG$
\begin{align}
\det(\bG)=\sum_{\sigma\in\mathcal{S}_n}\sign(\sigma)\prod_{i=1}^n G_{i,\sigma_i},
\end{align}
where $\mathcal{S}_n$ is the set of all permutations of $\{1,\ldots,n\}$.
It follows that $\det\left(\bA'^{(i,\mathbf{\pi})}\right)$ must be an integer and in addition $1 \leq|\det\left(\bA'^{(i,\mathbf{\pi})}\right)|\leq u!(\tilde{a}_{\text{max}})^{u}$. Thus, $1\leq q_i \leq u!(\tilde{a}_{\text{max}})^{u}$. Moreover, Cramer's rule also implies that the numerator of each element in $\bm{{\ell}}_{\mathcal{U}}^{(i)}$ is an integer not greater than $u!(\tilde{a}_{\text{max}})^{u}$ in magnitude.
Since $u\leq K$, and since each element of $\bm{{\ell}}^{(i)}$ is either zero or corresponds to an element in $\bm{{\ell}}_{\mathcal{U}}^{(i)}$, each element $\ell_{ij}$, $j\leq i$ of $\bL$ can be written as a rational number $\ell_{ij}=m_{ij}/q_i$
with $1\leq q_i\leq K!(K a^2_{\text{max}})^{K}$ and $|m_{ij}|\leq K!(K a^2_{\text{max}})^{K}$ for $i=1,\ldots,K$.

Now, define the matrix $\mathbf{\tilde{L}}=\diag(q_1,\ldots,q_K)\bL$ and note that $\mathbf{\tilde{L}}\in\ZZ^{K\times K}$ due to the above. Let $\mathbf{\tilde{A}}'^{(p)}=[\mathbf{\tilde{L}}\bA]\bmod p$.
Since multiplying a row in a matrix by a constant leaves its zero entries unchanged, the entries of the matrix
\begin{align}
\mathbf{\tilde{A}}'^{(p)}&=[\mathbf{\tilde{L}}\bA]\bmod p\nonumber\\
&=[\diag(q_1,\ldots,q_K)\bL\bA]\bmod p\nonumber\\
&=[\diag(q_1,\ldots,q_K)\mathbf{\tilde{A}}]\bmod p,\label{modPeq}
\end{align}
are zero whenever the entries of $\mathbf{\tilde{A}}$ are equal to zero. Moreover, since all elements of $\bL$ are bounded in magnitude by $K!(K a^2_{\text{max}})^K$ and all elements of $\bA$ are bounded in magnitude by $a_{\text{max}}$, all elements of $\mathbf{\tilde{A}}=\bL\bA$ are bounded in magnitude by $K K!(K a^2_{\text{max}})^{K}a_{\text{max}}$. Combining with the fact that $1\leq q_i \leq K!(K a^2_{\text{max}})^{K}$, we have $|\tilde{a}'^{(p)}_{ij}|\leq K (K!)^2(K a_{\text{max}})^{2K}a_{\text{max}}$ for all $i=1,\ldots,K$, $j=1,\ldots,K$. Therefore, for a prime number $p>K (K!)^2(K a_{\text{max}})^{2K}a_{\text{max}}$ the modulo reduction in~\eqref{modPeq} does not change any of the non-zero entries of $\diag(q_1,\ldots,q_K)\mathbf{\tilde{A}}$ to zero.

Recall that if $\bA$ can be pseudo-triangularized with a matrix $\bL$ and permutation vector $\mathbf{\pi}$ then $\tilde{a}_{i,\pi(i)}\neq 0$, and hence also $\tilde{a}'^{(p)}_{i,\pi(i)}\neq 0$ for $i=1,\ldots,K$. We have therefore shown that for $p$ large enough there exists a lower-triangular matrix $\mathbf{\tilde{L}}^{(p)}=[\diag(q_1,\ldots,q_K)\bL]\bmod p$ with elements from $\{0,1,\ldots,p-1\}$ such that $\mathbf{\tilde{A}}'^{(p)}=[\mathbf{\tilde{L}}^{(p)}\bA]\bmod p$ is upper-triangular up to column permutation $\mathbf{\pi}$. In order to complete the proof, it is left to transform $\mathbf{\tilde{L}}^{(p)}$ to a lower-triangular matrix with elements from $\{0,1,\ldots,p-1\}$ and \emph{unit diagonal}. Let $(q_i)^{-1}$ be an integer that satisfies $[(q_i)^{-1} q_i]\bmod p=1$. Such an integer always exists since $q_i$ is an integer different than zero, and $p$ is prime. It is easy to verify that the matrix $\bL^{(p)}=[\diag\left((q_1)^{-1},\ldots,(q_K)^{-1}\right)\mathbf{\tilde{L}}^{(p)}]\bmod p$ is a lower-triangular matrix with elements from $\{0,1,\ldots,p-1\}$ and unit diagonal, and $\mathbf{\tilde{A}}^{(p)}=\left[\bL^{(p)}\bA\right]\mod p$ is upper triangular up to column permutation $\mathbf{\pi}$.
\end{proof}

\vspace{2mm}

\begin{lemma}
\label{lem:ModOperetaions}
Let $\bt_1,\ldots,\bt_k$ be lattice points from a chain of nested lattices satisfying the conditions of Theorem~\ref{thm:CoF}. Let \mbox{$\bv=\big[\sum_{k=1}^K a_k\bt_k\big]\Mod$} and \mbox{$\bu=\big[\sum_{k=1}^K b_k\bt_k\big]\Mod$} be integer linear combinations of these points. Then
\begin{align}
\left[\bv+\bu\right]\Mod=\left[\sum_{k=1}^K\big((a_k+b_k)\bmod p\big) \bt_k\right]\Mod.\nonumber
\end{align}
\end{lemma}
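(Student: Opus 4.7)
The plan is to chain together two already-established facts: the distributive law of the $\!\bmod\Lambda$ operation, and part~(\ref{thm:CoF3}) of Theorem~\ref{thm:CoF}, which asserts that $[p\cdot\bt]\Mod=\mathbf{0}$ for every lattice point $\bt\in\Lambda_k$ in the nested chain.

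First I would apply the distributive law to $[\bv+\bu]\Mod$ directly. Since $\bv=[\sum a_k\bt_k]\Mod$ and $\bu=[\sum b_k\bt_k]\Mod$, the distributive law of the mod operation (stated in Section~\ref{s:cf}) gives
\begin{align*}
\left[\bv+\bu\right]\Mod
&=\left[\left[\sum_{k=1}^K a_k\bt_k\right]\Mod+\left[\sum_{k=1}^K b_k\bt_k\right]\Mod\right]\Mod\\
&=\left[\sum_{k=1}^K (a_k+b_k)\bt_k\right]\Mod.
\end{align*}
This reduces the lemma to showing that the integer coefficients $a_k+b_k$ may be replaced by $(a_k+b_k)\bmod p$ without altering the value of the outer $\!\bmod\Lambda$ operation.

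Second, for each $k$ write $a_k+b_k=p\,m_k+r_k$ where $r_k=(a_k+b_k)\bmod p$ and $m_k\in\ZZ$. Splitting the sum accordingly,
\begin{align*}
\sum_{k=1}^K(a_k+b_k)\bt_k
=\sum_{k=1}^K p\,m_k\bt_k+\sum_{k=1}^K r_k\bt_k.
\end{align*}
By Theorem~\ref{thm:CoF}\eqref{thm:CoF3}, each $p\,\bt_k$ lies in $\Lambda$, so $p\,m_k\bt_k\in\Lambda$ as well, and hence the first sum is a member of $\Lambda$ (since $\Lambda$ is closed under integer linear combinations). Reducing modulo $\Lambda$ then eliminates this term, yielding
\begin{align*}
\left[\sum_{k=1}^K(a_k+b_k)\bt_k\right]\Mod
=\left[\sum_{k=1}^K r_k\bt_k\right]\Mod
=\left[\sum_{k=1}^K\big((a_k+b_k)\bmod p\big)\bt_k\right]\Mod,
\end{align*}
which is the desired identity. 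There is no real obstacle here; the entire argument is a bookkeeping exercise that directly invokes the two properties mentioned above, and the proof should fit in a few lines.
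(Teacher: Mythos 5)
Your proof is correct and follows essentially the same route as the paper's: apply the distributive law to collapse $[\bv+\bu]\Mod$ to $\left[\sum(a_k+b_k)\bt_k\right]\Mod$, then decompose each integer $a_k+b_k$ into a multiple of $p$ plus its residue and use Theorem~\ref{thm:CoF}\eqref{thm:CoF3} to discard the multiple-of-$p$ part. The only cosmetic difference is that you observe directly that $p\,\bt_k\in\Lambda$ (so the discarded term is a lattice vector), whereas the paper reaches the same conclusion by pulling $[p\cdot\bt_k]\Mod=\mathbf{0}$ inside the outer modulo via the distributive law; both are valid.
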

\begin{proof}
Due to the distributive property of the modulo operation we have
\begin{align}
\left[\bv+\bu\right]\Mod&=\left[\sum_{k=1}^K(a_k+b_k)\bt_k\right]\Mod.\nonumber\\
&=\left[\sum_{k=1}^K[(a_k+b_k)\bmod p + M_k\cdot p]\bt_k\right]\Mod\nonumber\\
&=\bigg[\sum_{k=1}^K\big((a_k+b_k)\bmod p\big) \bt_k \nonumber\\
&+\sum_{k=1}^K M_k\cdot [p\cdot\bt_k]\Mod\bigg]\Mod
\end{align}
where $\{M_k\}_{k=1}^K$ are some integers. Utilizing the fact that $[p\cdot\bt_k]\Mod=\mathbf{0}$ for all lattice points in the chain, which follows from Theorem~\ref{thm:CoF}\eqref{thm:CoF3}, the lemma is established.
\end{proof}

\vspace{1mm}

We are now ready to prove Theorem~\ref{thm:MAC}.

\vspace{1mm}

\begin{proof}[Proof of Theorem~\ref{thm:MAC}]
Let $\bT=[\bt_1^T \ \cdots\  \bt^T_K]^T$ and $\bV=[\bv_1^T \ \cdots\  \bv_K^T]^T=[\bA\bT]\Mod$. The compute-and-forward transform of the MAC~\eqref{MACchannel} can be written as
\begin{align}
\bS&=\left[\bA \ \left(
                       \begin{array}{c}
                         \bt_1 \\
                         \vdots \\
                         \bt_K \\
                       \end{array}
                     \right)+{\bZ}_{\text{eff}} \right]\bmod\Lambda\nonumber\\
                     &=[\bA\bT+{\bZ}_{\text{eff}}]\Mod\nonumber\\
                     &=[\bV+{\bZ}_{\text{eff}}]\Mod.\nonumber
\end{align}
Assume there exists a pseudo-triangularization of ${\bA}$ with permutation vector $\mathbf{\pi}$, i.e., there exists a lower triangular matrix $\bL$ with unit diagonal such that $\mathbf{\tilde{A}}=\bL\bA$ is upper triangular up to column permutation $\mathbf{\pi}$. Lemma~\ref{lem:ModuloTriangularization} implies that there exists a lower triangular matrix $\bL^{(p)}$ with elements from $\{0,1,\ldots,p-1\}$ and unit diagonal such that \mbox{$\mathbf{\tilde{A}}^{(p)}=\left[\bL^{(p)}\bA\right]\mod p$} is upper triangular up to column permutation $\mathbf{\pi}$. Since $\bL^{(p)}$ has a unit diagonal it can be written as $\bL^{(p)}=\bI+\bR$ where $\bI$ is the identity matrix and $\bR$ has non-zero entries only below the main diagonal.

Assume the receiver has access to the side information $\bv_1,\ldots,\bv_{K-1}$. As the entries of $\bR$ are non-zero only below the main diagonal, the receiver could compute $\bR\cdot\bV$, add it to $\bS$ and reduce modulo $\Lambda$, giving rise to
\begin{align}
\bS^{\text{SI}}&=\left[\bS+\bR\cdot\bV\right]\Mod\nonumber\\
&=\left[\bA\bT+\bR\bA\bT+{\bZ}_{\text{eff}}\right]\Mod\nonumber\\
&=\left[(\bI+\bR)\bA\bT+{\bZ}_{\text{eff}}\right]\Mod\nonumber\\
&=\left[\bL^{(p)}\bA\bT+{\bZ}_{\text{eff}}\right]\Mod\nonumber\\
&=\left[[\bL^{(p)}\bA]\bmod p\cdot \bT+{\bZ}_{\text{eff}}\right]\Mod\label{modEquallity}\\
&=\left[\mathbf{\tilde{A}}^{(p)} \left(
                       \begin{array}{c}
                         \bt_1 \\
                         \vdots \\
                         \bt_K \\
                       \end{array}
                     \right)+{\bZ}_{\text{eff}}\right]\Mod\nonumber
\end{align}
where~\eqref{modEquallity} follows from Lemma~\ref{lem:ModOperetaions}.
Let $\mathbf{\tilde{V}}=[\mathbf{\tilde{A}}^{(p)}\bT]\Mod$ and recall that $\mathbf{\tilde{A}}^{(p)}$ is upper-triangular up to column permutation $\mathbf{\pi}$, thus $\tilde{a}^{(p)}_{j,\pi(m)}=0$ for all $j=\pi(m)+1,\ldots,K$.
It follows that for any $m<K$ the lattice point $\bt_{\pi(m)}$ does not participate in any of the linear combinations $\mathbf{\tilde{v}}_{m+1},\ldots,\mathbf{\tilde{v}}_K$.

Assume the mapping function between users and lattices is chosen as $\map(k)=\pi^{-1}(k)$, i.e., each user $k$ employs the codebook $\mathcal{L}_k=\Lambda_{\pi^{-1}(k)}\cap\Lambda$. In this case, the densest lattice participating in linear combination $\mathbf{\tilde{v}}_{m}$ is $\Lambda_m$.
The decoder uses $\bs_m^{\text{SI}}$ in order to produce an estimate
\begin{align}
\mathbf{\hat{\tilde{v}}}_m=\left[Q_{\Lambda_m}(\bs_m^{\text{SI}})\right]\Mod
\end{align}
for each one of the linear combinations $\mathbf{\tilde{v}}_m$, $m=1,\ldots,K$. It follows from Theorem~\ref{thm:CoF} that there exists a chain of nested lattices $\Lambda\subseteq\Lambda_K\subseteq\cdots\subseteq\Lambda_1$ forming the set of codebooks $\mathcal{L}_1,\ldots,\mathcal{L}_K$ with rates $R_1,\ldots,R_K$ such that all linear combinations $\mathbf{\tilde{v}}_1,\ldots,\mathbf{\tilde{v}}_K$ can be decoded with a vanishing error probability as long as the rates of all users satisfy the constraints of~\eqref{rateAllocation}.

We have shown that if the receiver has access to $\bv_1,\ldots,\bv_{K-1}$ it can decode the set of linear combinations $\mathbf{\tilde{V}}$. We now show a sequential decoding procedure which guarantees that the receiver has the right amount of side information at each step. First, note that
\begin{align}
\bs_m^{\text{SI}}=\left[\bs_m+\sum_{\ell=1}^{m-1}r_{m \ell}\bv_\ell\right]\Mod,
\end{align}
thus the necessary side information for decoding $\mathbf{\tilde{v}}_m$ is only $\bv_1,\ldots,\bv_{m-1}$. In particular, $\bs_1^{\text{SI}}=\bs_1$ and hence $\bv_1$ can be decoded with a vanishing error probability with no side information. After decoding $\bv_1$ the receiver has it as side information, and can therefore compute $\bs_2^{\text{SI}}$ and decode $\mathbf{\tilde{v}}_2$. As $\mathbf{\tilde{v}}_2=[r_{21}\bv_1+\bv_2]\Mod$ and the receiver knows $\bv_1$, it can use it in order to recover $\bv_2$. Now, the receiver has $\bv_1$ and $\bv_2$ as side information and can use it to compute $\bs_3^{\text{SI}}$. The process continues sequentially until all linear combinations $\mathbf{\tilde{v}}_1,\ldots,\mathbf{\tilde{v}}_K$ are decoded.

Conditioned on correct decoding, we obtain $K$ noiseless linear combinations
\begin{align}
\left(
  \begin{array}{c}
    \mathbf{\tilde{v}}_1 \\
    \vdots \\
    \mathbf{\tilde{v}}_K \\
  \end{array}
\right)=\left[\mathbf{\tilde{A}}^{(p)}\left(
                            \begin{array}{c}
                              \bt_1 \\
                              \bt_2 \\
                              \vdots \\
                              \bt_K \\
                            \end{array}
                          \right)
\right]\bmod\Lambda.\label{noiselessEqs}
\end{align}
Since $\mathbf{\tilde{A}}^{(p)}$ is upper-triangular up to column permutation, and in particular full-rank modulo $p$, the original lattice points $\bt_1,\ldots,\bt_K$ each user transmitted can be recovered.
\end{proof}

\section{Proof of Theorems~\ref{thm:EffSumRate} and~\ref{thm:MACeff}}
\label{app:effproofs}

\begin{proof}[Proof of Theorem~\ref{thm:EffSumRate}]
The proof is identical to that of Theorem~\ref{thm:SumRate} with $\bF=\left(\Tsnr^{-1}\bB^{-1}+\bg\bg^T\right)^{-1/2}$.
\end{proof}

\vspace{2mm}

\begin{proof}[Proof of Theorem~\ref{thm:MACeff}]
Let
\begin{align}
{\bS}=\left[\bA\left(
                       \begin{array}{c}
                         \bt_{\text{eff},1} \\
                         \vdots \\
                         \bt_{\text{eff},L} \\
                       \end{array}
                     \right)+\bZ_{\text{eff}} \right]\bmod\Lambda\nonumber
\end{align}
be the compute-and-forward transform of the effective $L$-user MAC, and assume that $\bA$ can be pseudo-triangularized with permutation vector $\mathbf{\pi}$. Repeating the proof of Theorem~\ref{thm:MAC} it is easy to see that, for any set of rates
\begin{align}
R_{\ell}<R_{\text{comp},\mathbf{\pi}^{-1}(\ell)}, \ \ell=1,\ldots,L\nonumber \ ,
\end{align}
there exists a chain of nested lattices $\Lambda\subseteq\Lambda_L\subseteq\cdots\subseteq\Lambda_1$ inducing the codebooks $\mathcal{L}_\ell=\Lambda_{\mathbf{\pi}^{-1}(\ell)}\cap\CV$ with rates $R_{\ell}$, such that if $\bt_{\text{eff},\ell}\in\mathcal{L}_{\ell}$ for all $\ell=1,\ldots,L$, all effective lattice points can be decoded from $\bS$.

If each of the users $i\in\mathcal{K}_{\ell}$ that comprise effective user $\ell$ uses the lattice codebook $\mathcal{L}_{\ell}$ (or any codebook nested in $\mathcal{L}_\ell$), then $\bt_{\text{eff},\ell}\in\mathcal{L}_{\ell}$ and all effective lattice points can be decoded. 
\end{proof}

\section{Proof of Lemma~\ref{lem:equiEffChannel}}
\label{app:CsymProofs}

In order to decode the desired effective lattice points, it suffices to decode $L-1$ linearly independent integer linear combinations of them, in which $\bt_{\text{eff},L}$ does not participate. Let $\bar{\ba}=[\bar{a}_1 \ \cdots \ \bar{a}_{L-1} \ 0]^T$ be some coefficient vector for such a linear combination. The effective rate for computing the linear combination $\bar{\bv}=\big[\sum_{\ell=1}^{L-1}\bar{a}_\ell\bt_{\text{eff},\ell}\big]\Mod$ with the coefficient vector $\bar{\ba}$ over the channel~\eqref{effectiveMAC} is
\begin{align}
R_{\text{comp}}(\bg,\bar{\ba},\bB)=\frac{1}{2}\log\left(\frac{\Tsnr}{\sigma^2_{\text{eff}}(\bg,\bar{\ba},\bB)}\right),
\end{align}
where
\begin{align}
\sigma^2_{\text{eff}}(\bg,\bar{\ba},\bB)&=\min_{\bar{\beta}\in\RR}\Tsnr\sum_{\ell=1}^{L-1}(\bar{\beta}g_\ell-\bar{a}_\ell)^2 b^2_{\text{eff},\ell}\nonumber\\
&+\bar{\beta}^2(1+\Tsnr g^2_L b^2_{\text{eff},L})\nonumber\\
&=\min_{\beta\in\RR}\Tsnr\sum_{\ell=1}^{L-1}(\beta\kappa g_\ell-\bar{a}_\ell)^2 b^2_{\text{eff},\ell}+\beta^2\label{eqAlpha},
\end{align}
where~\eqref{eqAlpha} follows by substituting $\bar{\beta}=\beta\kappa$. The effective noise variance and computation rate for decoding a linear combination with coefficient vector $\bar{\ba}=[\bar{a}_1 \ \cdots \ \bar{a}_{L-1} \ 0]$ over the effective channel~\eqref{effectiveMAC} are therefore the same as those of decoding a linear combination with $\ba=[\bar{a}_1 \ \cdots \ \bar{a}_{L-1}]$ over the effective channel~\eqref{zeroEffMAC}. Thus, for purposes of decoding integer linear combinations of effective lattice points $\bt_{\text{eff},1},\ldots,\bt_{\text{eff},L-1}$ the two channels are equivalent. Since this is all we need in order to decode $\bt_{\text{eff},1},\ldots,\bt_{\text{eff},L-1}$, the lemma follows.

\section{Derivation of the upper bounds on $\sigma^2_{\text{HK}}$ within the different intervals}

\subsection{Moderately weak interference regime}
\label{app:boundsModeratelyWeak}
We upper bound $\sigma^2_{\text{HK}}$ for all values of  $\tilde{\beta}$ within each of the four intervals. Recall that in the moderately weak interference regime $\Tsnr^{-1/3}\leq g^2\leq 1$.
Define $\delta=(2c+8)/\log(\Tsnr)$, where $c>0$ is some constant.

\vspace{1mm}

\underline{Interval $1$ : $0<|\tilde{\beta}|\leq 1/2$}

In this interval the choice $a_1=\sign(\tilde{\beta})$ is optimal due to~\eqref{integerChoices2}. Therefore, for all $|\tilde{\beta}|\leq1/2$ we have $(\tilde{\beta}-a_1)^2\geq1/4$. Thus,
\begin{align}
\sigma^2_{\text{HK}}\geq\frac{\Tsnr}{4}.\nonumber
\end{align}

\vspace{1mm}

\underline{Interval 2 : $1/2<|\tilde{\beta}|\leq \sqrt{|g|}\Tsnr^{1/4-\delta/2}/2$}  

Since $|\tilde{\beta}|>1/2$ we can express it as $\tilde{\beta}=q+\varphi$ with $q\in\mathbb{Z}\setminus 0$ and $\varphi\in[-1/2,1/2)$. We can further lower bound $\sigma^2_{\text{HK}}$ as
\begin{align}
\sigma^2_{\text{HK}}&>\min_{\varphi,q,a_1,a_3}\bigg(\left((\varphi+q-a_1)^2+(qg-a_3+\varphi g)^2\right)\Tsnr \bigg) \nonumber\\
&=\min_{\varphi,q,a_3}\bigg(\left(\varphi^2+(qg-a_3+\varphi g)^2\right)\Tsnr \bigg).\label{phibound1}
\end{align}
Ignoring the constraint $\varphi\in[-1/2,1/2)$, the minimizing value of $\varphi$ is found to be
\begin{align}
\varphi^*=-\frac{g}{1+g^2}(qg-a_3).\nonumber
\end{align}
Substituting $\varphi^*$ into~\eqref{phibound1} gives
\begin{align}
\sigma^2_{\text{HK}}&>\min_{q,a_3}\bigg(\frac{1}{1+g^2}(qg-a_3)^2\Tsnr \bigg)\nonumber\\
&\geq\frac{1}{2}\min_{q,a_3}\bigg((qg-a_3)^2\Tsnr \bigg).\label{phibound2}
\end{align}
For $b=1,2,\ldots,\lceil1/6\log(\Tsnr)\rceil$ we define the sets
\begin{align}
\mathcal{G}_b=\left\{g : g\in\left[2^{-b},2^{-b+1}\right) \right\},
\end{align}
and the quantities
\begin{align}
q_{\text{max},b}&\triangleq \sqrt{2^{-b+1}}\Tsnr^{1/4-\delta/2},\nonumber\\
\Phi_{b}&\triangleq \frac{1}{\sqrt{2^{-b+1}}}\Tsnr^{-1/4-\delta/2}.\nonumber
\end{align}
Let $\mathcal{S}_b$ be the set of all values of $g\in\mathcal{G}_b$ such that the inequality
\begin{align}
|qg-a_3|<\Phi_b \label{DiophIneq3}
\end{align}
has at least one solution with $0<|q|\leq q_{\text{max},b}$ and $a_3\in\mathbb{Z}$. Note that since $q=\tilde{\beta}-\varphi$ and we assume in this interval that $1/2<|\tilde{\beta}|\leq\sqrt{|g|}\Tsnr^{1/4-\delta/2}/2$, we have $$|q|<\sqrt{|g|}\Tsnr^{1/4-\delta/2}.$$ Thus, for all $g\in\mathcal{G}_b$ and $\tilde{\beta}$ in the considered interval, we have
$$|q|<q_{\text{max},b}.$$
Let $\bar{\mathcal{S}}_b=\mathcal{G}_b\setminus\mathcal{S}_b$. Using~\eqref{phibound2}, we have that for all $g\in\bar{\mathcal{S}}_b$ and $\tilde{\beta}$ in the considered interval
\begin{align}
\sigma^2_{\text{HK}}&\geq\frac{1}{2}\Phi_b^2\Tsnr\nonumber\\
&\geq \frac{1}{2}\frac{\Tsnr^{1/2-\delta}}{2^{-b+1}}\nonumber\\
&\geq \frac{1}{4}\frac{\Tsnr^{1/2-\delta}}{\sqrt{g^2}}.\label{SigmaGammaBound2}
\end{align}
The condition~\eqref{DiophIneq3}, which defines the set $\mathcal{S}_b$, can be written equivalently as
\begin{align}
&|q \cdot 2^b g-2^b a_3|<2^b\Phi_b.\label{DiophIneq4}
\end{align}
Define $\tilde{g}=2^b\cdot g$, and note that for all $g\in\mathcal{G}_b$ we have $\tilde{g}\in[1,2)$. With this notation,~\eqref{DiophIneq4} becomes
\begin{align}
\left|\tilde{g}-\frac{2^b a_3}{q}\right|<2^b\frac{\Phi_b}{q}.\label{DiophIneq5}
\end{align}
Define
\begin{align}
\mathcal{T}_b(q)&=\Bigg[\Bigg\{\frac{0}{q},\frac{1\cdot 2^b}{q},\frac{2\cdot 2^b}{q},\hdots,\frac{\left\lfloor\frac{2q-1}{2^b} \right\rfloor\cdot 2^b}{q}\Bigg\}\nonumber\\
&~~~~~~~~~~~~~~~~~~~~~~+2^b\frac{\Phi_b}{q}{\mathcal{I}}\Bigg]\bmod[0,2),\label{SetDef1}
\end{align}
where $\mathcal{I}=[-1,1)$ and the sum in~\eqref{SetDef1} is a Minkowski sum. It is easy to verify that
\begin{align}
\mathcal{S}_b&\subseteq 2^{-b}\bigcup_{q=1}^{\lfloor q_{\text{max}}\rfloor}\mathcal{T}_b(q).
\end{align}
Since $\left\lfloor\frac{2q-1}{2^b} \right\rfloor=0$ for all $0<q<2^{b-1}$, for all values of $q$ in this range we have
\begin{align}
\mathcal{T}_b(q)&=\left[2^b\frac{\Phi_b}{q}{\mathcal{I}}\right]\bmod[0,2)\nonumber\\
&\subseteq\left[2^b\Phi_b{\mathcal{I}}\right]\bmod[0,2)\nonumber\\
&=\mathcal{T}_b(1).
\end{align}
Therefore,
\begin{align}
\mathcal{S}_b&\subseteq 2^{-b}\left(\left(\bigcup_{q=1}^{2^{b-1}-1}\mathcal{T}_b(q)\right)\cup\left(\bigcup_{q=2^{b-1}}^{\lfloor q_{\text{max}}\rfloor}\mathcal{T}_b(q)\right)\right)\nonumber\\
&=2^{-b}\left(\mathcal{T}_b(1)\cup\left(\bigcup_{q=2^{b-1}}^{\lfloor q_{\text{max}}\rfloor}\mathcal{T}_b(q)\right)\right).
\end{align}
The Lebesgue measure of $\mathcal{S}_b$ is bounded by
\begin{align}
\mu(\mathcal{S}_b)&=\Vol\left(\mathcal{S}_b\right)\nonumber\\
&\leq2^{-b}\left(\Vol\left(\mathcal{T}_b(1)\right)+\sum_{q=2^{b-1}}^{\lfloor q_{\text{max},b} \rfloor}\Vol\left(\mathcal{T}_b(q)\right)\right)\nonumber
\end{align}
and hence,
\begin{align}
\mu(\mathcal{S}_b)&\leq 2^{-b}\left(2\cdot2^b\Phi_b+\sum_{q=2^{b-1}}^{\lfloor q_{\text{max},b} \rfloor}\left\lceil\frac{2q}{2^b} \right\rceil\cdot2\cdot 2^b\frac{\Phi_b}{q}\right)\nonumber\\
&\leq 2\Phi_b+2\Phi_b\sum_{q=2^{b-1}}^{\lfloor q_{\text{max},b} \rfloor}2\frac{2q}{2^b} \frac{1}{q}\nonumber\\
&\leq 2\Phi_b+8\cdot2^{-b}\Phi_b q_{\text{max},b}\nonumber\\
&= 2\Phi_b+8\cdot2^{-b}\Tsnr^{-\delta}\nonumber\\
&= \sqrt{2}\cdot2^{b/2}\Tsnr^{-1/4-\delta/2}+8\cdot2^{-b}\Tsnr^{-\delta}.\label{measure1}
\end{align}
We can now upper bound the measure of the outage set
\begin{align}
\mathcal{S}=\bigcup_{b=1}^{\lceil1/6\log(\Tsnr)\rceil}\mathcal{S}_b,\nonumber
\end{align}
of all values of $\Tsnr^{-1/6}\leq g<1$ for which~\eqref{SigmaGammaBound2} does not necessarily hold, as
\begin{align}
\mu(\mathcal{S})&=\sum_{b=1}^{\lceil1/6\log(\Tsnr)\rceil}\mu(\mathcal{S}_b)\nonumber\\
&<\sqrt{2}\Tsnr^{-1/4-\delta/2}\sum_{b=1}^{\lfloor1/6\log(\Tsnr)\rfloor+1}(\sqrt{2})^{b}\nonumber\\
&~~~+~8\Tsnr^{-\delta}\sum_{b=1}^{\lfloor1/6\log(\Tsnr)\rfloor+1}2^{-b}.\nonumber
\end{align}
Using the identity
\begin{align}
\sum_{b=1}^B \rho^b=\frac{\rho}{\rho-1}(\rho^B-1),\nonumber
\end{align}
which is valid for all $\rho\neq 1$, and the fact that $\sum_{b=1}^{\infty}2^{-b}<1$, we have
\begin{align}
\mu(\mathcal{S})&<\sqrt{2}\Tsnr^{-1/4-\delta/2}\frac{\sqrt{2}}{\sqrt{2}-1}\sqrt{2}\Tsnr^{1/12}+8\Tsnr^{-\delta}\nonumber\\
&<7\Tsnr^{-\delta/2}+8\Tsnr^{-\delta}\nonumber\\
&<16\Tsnr^{-\delta/2}.\label{measure2}
\end{align}
Substituting $\delta=(2c+8)/\log(\Tsnr)$ into~\eqref{SigmaGammaBound2} and~\eqref{measure2}, we see that in the interval $1/2<|\tilde{\beta}|\leq \sqrt{|g|}\Tsnr^{1/4-\delta/2}/2$ for all values of $\Tsnr^{-1/3}<|g|\leq 1$ except for an outage set with measure not greater than $2^{-c}$ we have
\begin{align}
\sigma^2_{\text{HK}}>\frac{2^{-2c}}{4\cdot2^{8}}\frac{\Tsnr^{1/2}}{\sqrt{g^2}}.\nonumber
\end{align}

\vspace{1mm}

\underline{Interval $3$ : $\sqrt{|g|}\Tsnr^{1/4-\delta/2}/2<|\tilde{\beta}|\leq \Tsnr^{1/4}/\sqrt{8|g|}$}

Since $\Tsnr^{-1/3}\leq g^2<1$ and we assumed $\Tsnr>4$, we have
\begin{align}
g^2\Tsnr-1>\frac{g^2\Tsnr}{2}.\label{g2SNRbound}
\end{align}
Note that~\eqref{g2SNRbound} continues to hold for all $g^2>\Tsnr^{-1/2}$. This will be useful in the weak interference regime.
For all values of $|\tilde{\beta}|$ in this interval
\begin{align}
\left|\frac{\tilde{\beta}}{\sqrt{g^2\Tsnr-1}}\right|&\leq \frac{\Tsnr^{1/4}/\sqrt{8|g|}}{\sqrt{g^2\Tsnr-1}}\nonumber\\
&<\frac{\Tsnr^{1/4}}{\sqrt{8|g|\cdot\frac{g^2\Tsnr}{2}}}\nonumber\\
&\leq\frac{1}{2}|g|^{-3/2}\Tsnr^{-1/4}\nonumber\\
&\leq\frac{1}{2},\nonumber
\end{align}
and hence, using~\eqref{integerChoices}, the optimal value of $a_2$ is
\begin{align}
a_2=\left\lfloor\tilde{\beta}/\sqrt{g^2\Tsnr-1}\right\rceil=0.\nonumber
\end{align}
Therefore, using the fact that $\delta=(2c+8)/\log(\Tsnr)$, we can upper bound~\eqref{sigmaHK2} as
\begin{align}
\sigma^2_{\text{HK}}\geq \frac{\tilde{\beta}^2\Tsnr}{g^2\Tsnr-1}\geq \frac{2^{-2c}}{4\cdot 2^8}\frac{\Tsnr^{1/2}}{\sqrt{g^2}}.\nonumber
\end{align}

\underline{Interval $4$ : $\Tsnr^{1/4}/\sqrt{8|g|}<|\tilde{\beta}|$}

In this interval,
\begin{align}
\sigma^2_{\text{HK}}\geq K\tilde{\beta}^2\geq \frac{1}{8}\frac{\Tsnr^{1/2}}{\sqrt{g^2}}.\nonumber
\end{align}

\subsection{Weak Interference Regime}
\label{app:boundsWeak}

We upper bound $\sigma^2_{\text{HK}}$ for all values of  $\tilde{\beta}$ within each of the four intervals. Recall that in this regime $\Tsnr^{-1/2}\leq g^2<\Tsnr^{-1/3}$.

\vspace{1mm}

\underline{Interval $1$ : $0<|\tilde{\beta}|\leq 1/2$}

As $a_1\neq 0$, in this interval $(\tilde{\beta}-a_1)^2>1/4$. Thus,
\begin{align}
\sigma^2_{\text{HK}}\geq\frac{\Tsnr}{4}.\nonumber
\end{align}

\vspace{1mm}

\underline{Interval $2$ : $1/2<|\tilde{\beta}|\leq 1/(2|g|)$}

In this interval $a_3=\lfloor \tilde{\beta} g \rceil=0$. Thus,
\begin{align}
\sigma^2_{\text{HK}}\geq (\tilde{\beta} g)^2\Tsnr\geq \frac{g^2\Tsnr}{4}.\nonumber
\end{align}

\vspace{1mm}

\underline{Interval $3$ : $1/(2|g|)<|\tilde{\beta}|\leq \sqrt{g^2\Tsnr/8}$}

Under our assumption that $\Tsnr>4$, for all values of $|\tilde{\beta}|$ in this interval we have
\begin{align}
\left|\frac{\tilde{\beta}}{\sqrt{g^2\Tsnr-1}}\right|&\leq \frac{\sqrt{g^2\Tsnr/8}}{\sqrt{g^2\Tsnr-1}}\nonumber\\
&< \frac{\sqrt{g^2\Tsnr/8}}{\sqrt{g^2\Tsnr/2}}\nonumber\\
&\leq \frac{1}{2},\nonumber
\end{align}
where the second inequality follows from~\eqref{g2SNRbound}.
Thus, the optimal choice for $a_2$ is
\begin{align}
a_2=\left\lfloor\tilde{\beta}/\sqrt{g^2\Tsnr-1}\right\rceil=0.
\end{align}
Therefore,~\eqref{sigmaHK2} can be lower bounded by
\begin{align}
\sigma^2_{\text{HK}}\geq \frac{\tilde{\beta}^2}{g^2\Tsnr}\Tsnr\geq \frac{1}{4 g^4}.\nonumber
\end{align}

\vspace{1mm}

\underline{Interval $4$ : $\sqrt{g^2\Tsnr/8}<|\tilde{\beta}|$}

In this interval,
\begin{align}
\sigma^2_{\text{HK}}\geq K\tilde{\beta}^2>\frac{g^2\Tsnr}{4}.\nonumber
\end{align}

\section{Proof of Theorem~\ref{thm:DoF}}
\label{app:DoFproof}
For the proof we will need a key result from the field of metric Diophantine approximation which is due to Kleinbock and Margulis. The following theorem is a special case of~\cite[Theorem A]{km98}.

\vspace{2mm}

\begin{theorem}
\label{thm:margulis}
Let $\mathcal{U}$ be a domain in $\RR^m$ and let $f_1,f_2,\cdots,f_K$ be real analytic functions in $\mathbf{\tilde{h}}\in \mathcal{U}$, which together with $1$ are linearly independent over $\RR$, and define the manifold
$$\mathcal{M}=\Big\{\big[f_1(\mathbf{\tilde{h}}) \ \cdots \ f_K(\mathbf{\tilde{h}})\big] \ : \ \mathbf{\tilde{h}}\in \mathcal{U}\Big\}.$$
For almost every $\bh\in\mathcal{M}$ and any $\delta>0$, the inequality
\begin{align}
\max_{\ell=1,\ldots,K}\left|q h_\ell-a_\ell\right|\leq |q|^{-\frac{1}{K}-\delta}
\label{DiophantineIneq}
\end{align}
has at most finitely many solutions $(q,\ba)\in\mathbb{Z}\times\mathbb{Z}^K$.
\end{theorem}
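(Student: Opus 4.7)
The plan is to adopt the dynamical approach of Kleinbock and Margulis, reducing the Diophantine finiteness statement to a quantitative nondivergence theorem for certain orbits on the space of unimodular lattices in $\RR^{K+1}$. Via the Dani correspondence, to each $\bh \in \RR^K$ one associates the unimodular lattice $u(\bh)\ZZ^{K+1}$ with
$$u(\bh)=\begin{pmatrix} 1 & \mathbf{0}^T \\ \bh & I_{K\times K} \end{pmatrix},$$
and pushes it by the diagonal flow $g_t = \diag(e^{-t}, e^{t/K}, \ldots, e^{t/K})$. An elementary calculation shows that a solution $(q,\ba)$ of $\max_\ell |q h_\ell - a_\ell| \le |q|^{-1/K-\delta}$ corresponds to the lattice $g_t u(\bh)\ZZ^{K+1}$ containing a nonzero vector shorter than some threshold $\varepsilon(t)$ at a specific time $t = t(q)$ depending logarithmically on $|q|$, with $\varepsilon(t) \to 0$ at a rate controlled by $\delta$. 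Proving finiteness for a.e.\ $\bh \in \mathcal{M}$ thus reduces to showing that the orbit $\{g_t u(\bh)\ZZ^{K+1}\}_{t\ge 0}$ enters these shrinking cusp neighborhoods of the homogeneous space $SL_{K+1}(\RR)/SL_{K+1}(\ZZ)$ infinitely often only for a null subset of $\mathcal{M}$.

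The core step is a quantitative nondivergence estimate: for any small ball $B\subset\mathcal{U}$ in the parameter space and any $\varepsilon>0$, one aims to show that the set of $\mathbf{\tilde{h}}\in B$ for which $g_t u(\mathbf{f}(\mathbf{\tilde{h}}))\ZZ^{K+1}$ contains a nonzero vector of norm smaller than $\varepsilon$ has Lebesgue measure at most $C\varepsilon^\alpha\,\Vol(B)$, with constants $C,\alpha>0$ independent of $t$. This is established by the $(C,\alpha)$-good function machinery of Kleinbock and Margulis: every candidate short vector in $g_t u(\mathbf{f}(\mathbf{\tilde{h}}))\ZZ^{K+1}$ arises from an integer vector $(q,\ba)$, and the coordinates of its image are affine combinations of the functions $1,f_1,\ldots,f_K$ with coefficients in $\ZZ$ and weighted by $e^{\pm t/K}$. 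The hypothesis that $1,f_1,\ldots,f_K$ are $\RR$-linearly independent and real-analytic guarantees that these combinations are not identically zero and, being real-analytic, satisfy a uniform polynomial-type sublevel-set bound on sufficiently small balls, which is exactly the $(C,\alpha)$-good property. Combined with Minkowski-type successive-minima considerations to reduce the argument to primitive vectors, this yields the required measure estimate.

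Once the nondivergence bound is in hand, Borel--Cantelli completes the proof: choosing the geometric sequence $q_n = 2^n$, the bad sets in $\mathcal{M}$ have measure summable in $n$ since $\varepsilon(t(q_n))^\alpha$ decays like a positive power of $q_n^{-\delta}$, so a.e.\ $\bh\in\mathcal{M}$ hits only finitely many bad times $t(q_n)$, which translates back to only finitely many integer pairs $(q,\ba)$ satisfying the Diophantine inequality. The main obstacle is the nondivergence estimate itself. When $\mathcal{M}=\RR^K$ a direct Borel--Cantelli in $\RR^K$ already suffices since $\sum_q q^{-1-K\delta}<\infty$; but on a manifold of dimension $m<K$ the bad sets in $\RR^K$ have too small a trace on $\mathcal{M}$ to be controlled by naive measure bounds, and one must exploit the algebraic structure of the orbit through the $(C,\alpha)$-good property of polynomial combinations of the $f_\ell$. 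Verifying this uniform quantitative transversality is where the real-analyticity and linear-independence hypotheses truly enter, and it is the technical heart of the theorem proven in~\cite{km98}.
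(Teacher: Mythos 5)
The paper does not prove this theorem at all: it is imported verbatim as ``a special case of \cite[Theorem A]{km98},'' so there is no in-paper proof to compare yours against. What you have written is a faithful high-level outline of the actual Kleinbock--Margulis argument: the Dani correspondence turning solutions of \eqref{DiophantineIneq} into short vectors in $g_t u(\bh)\ZZ^{K+1}$ at times $t(q)\sim\log|q|$, the quantitative nondivergence estimate of the form $C\varepsilon^{\alpha}\Vol(B)$ obtained from the $(C,\alpha)$-good property of linear combinations of $1,f_1,\ldots,f_K$ (which is exactly where real-analyticity and linear independence over $\RR$ enter), and a Borel--Cantelli summation over dyadic ranges of $q$. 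Your observation that for $\mathcal{M}=\RR^K$ the naive convergence Borel--Cantelli already suffices, and that the manifold case is precisely where the dynamical machinery is needed, is also correct.

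Two caveats. First, this is a proof \emph{outline}, not a proof: the quantitative nondivergence theorem is the entire content of \cite{km98}, and you explicitly defer it, so your proposal ultimately rests on the same citation the paper uses. Second, your phrase ``Minkowski-type successive-minima considerations to reduce the argument to primitive vectors'' understates what the nondivergence theorem actually requires: one must bound $\|g_t u(\mathbf{f}(\mathbf{\tilde{h}}))\,\sigma\|$ from below for primitive subgroups $\sigma\subseteq\ZZ^{K+1}$ of \emph{every} rank (the full flag condition), with the $(C,\alpha)$-good property verified for the corresponding minors, not only for the coordinates of rank-one vectors. If you intended this as a self-contained proof, that step is the genuine gap; if you intended it as a reading of \cite{km98}, it is accurate.
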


\vspace{2mm}

For the proof of Theorem~\ref{thm:DoF} we will need a corollary of Theorem~\ref{thm:margulis}.

\vspace{2mm}

\begin{corollary}
\label{cor:DiophSubset}
Let $f_1,f_2,\ldots,f_K$ be functions from $\RR^m$ to $\RR$ satisfying the following conditions:
\begin{enumerate}
\item $f_i$ for $i=1,\ldots,K$ is analytic in $\RR^m$,
\item $1,f_1,\ldots,f_K$ are linearly independent over $\RR$.
\end{enumerate}
Let $\mathcal{D}=\big\{\mathbf{\tilde{h}}\in\RR^m \ : \ f_1(\mathbf{\tilde{h}})=0\big\}$
and ${\mathcal{D}}(\epsilon)=\mathcal{D}+\mathcal{B}(\mathbf{0},\epsilon)$,
where the sum is a Minkowski sum and $\mathcal{B}(\mathbf{0},\epsilon)$ is an $m$-dimensional closed ball with some radius $\epsilon>0$. Define the set $\mathcal{U}(\epsilon)=\RR^m\setminus{\mathcal{D}}(\epsilon)$, the set of functions $\tilde{f}_k(\mathbf{\tilde{h}})=f_k(\mathbf{\tilde{h}})/f_1(\mathbf{\tilde{h}})$ from $\mathcal{U}(\epsilon)$ to $\RR$ for $k=2,\ldots,K$, and the manifold
\begin{align}
\bar{\mathcal{M}}(\epsilon)=\Big\{\big[\tilde{f}_2(\mathbf{\tilde{h}}) \ \cdots \ \tilde{f}_K(\mathbf{\tilde{h}})\big] \ : \ \mathbf{\tilde{h}}\in \mathcal{U}(\epsilon)\Big\}.\label{ManifoldKminus1}
\end{align}
For all $\epsilon>0$, almost every $\bar{\bh}\in\bar{\mathcal{M}}(\epsilon)$, and any $\delta>0$ the inequality
\begin{align}
\max_{\ell=1,\ldots,K-1}\left|q \bar{h}_\ell-a_\ell\right|\leq |q|^{-\frac{1}{K-1}-\delta}
\label{DiophantineIneq1}
\end{align}
has at most finitely many solutions $(q,\ba)\in\mathbb{Z}\times\mathbb{Z}^{K-1}$.
\end{corollary}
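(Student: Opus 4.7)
The plan is simply to apply Theorem~\ref{thm:margulis} directly to the manifold $\bar{\mathcal{M}}(\epsilon)$, viewed as a manifold in $\RR^{K-1}$, with $K$ replaced by $K-1$. The entire task reduces to verifying that the ratios $\tilde f_k=f_k/f_1$ ($k=2,\ldots,K$) meet the two hypotheses of that theorem on the domain $\mathcal U(\epsilon)$.

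First, I would check analyticity. By construction, $\mathcal U(\epsilon)=\RR^m\setminus\mathcal{D}(\epsilon)$ excludes an $\epsilon$-neighborhood of the zero set of $f_1$, so $f_1$ is bounded away from zero on $\mathcal U(\epsilon)$. Hence $1/f_1$ is real analytic there, and so is each product $\tilde f_k=f_k/f_1$. Note also that $\mathcal U(\epsilon)$ is open, and non-empty since $f_1$ is non-constant analytic (it is linearly independent from $1$), so its zero set has empty interior.

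Next, I would verify linear independence of $1,\tilde f_2,\ldots,\tilde f_K$ over $\RR$. Suppose $c_0+c_2\tilde f_2+\cdots+c_K\tilde f_K=0$ on $\mathcal U(\epsilon)$. Multiplying through by $f_1\neq 0$ yields
\begin{align}
c_0 f_1+c_2 f_2+\cdots+c_K f_K=0 \qquad \text{on }\mathcal U(\epsilon).\nonumber
\end{align}
Both sides are analytic on all of $\RR^m$ and agree on the non-empty open set $\mathcal U(\epsilon)$, so by analytic continuation the identity holds on all of $\RR^m$. The hypothesis that $1,f_1,\ldots,f_K$ are linearly independent over $\RR$ implies, in particular, that the subset $\{f_1,\ldots,f_K\}$ is linearly independent over $\RR$. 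Therefore $c_0=c_2=\cdots=c_K=0$.

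Finally, applying Theorem~\ref{thm:margulis} to the $(K-1)$-tuple $(\tilde f_2,\ldots,\tilde f_K)$ on $\mathcal U(\epsilon)$ (or separately on each connected component, since the theorem is stated on a domain), we obtain that for almost every $\bar\bh\in\bar{\mathcal M}(\epsilon)$ and every $\delta>0$, the inequality~\eqref{DiophantineIneq1} has only finitely many integer solutions $(q,\ba)\in\ZZ\times\ZZ^{K-1}$, which is the claim. The only subtle point is the analytic-continuation step used to transfer the linear relation from $\mathcal U(\epsilon)$ to all of $\RR^m$; once this is in place, the conclusion is immediate from Theorem~\ref{thm:margulis}.
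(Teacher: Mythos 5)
Your proof is correct and follows essentially the same route as the paper: apply Theorem~\ref{thm:margulis} to the quotient functions $\tilde f_k = f_k/f_1$ on $\mathcal{U}(\epsilon)$, checking analyticity (clear since $f_1$ is bounded away from zero there) and linear independence by clearing denominators. The one small improvement you make is explicitly invoking analytic continuation to transfer the vanishing relation $c_0 f_1 + c_2 f_2 + \cdots + c_K f_K = 0$ from the open set $\mathcal{U}(\epsilon)$ to all of $\RR^m$ before appealing to the linear independence hypothesis, and noting that one should apply the theorem on each connected component of $\mathcal{U}(\epsilon)$ since the Kleinbock--Margulis result is stated for a domain; the paper's argument leaves these steps implicit.
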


\vspace{2mm}

\begin{proof}[Proof of Corollary~\ref{cor:DiophSubset}]
We would like to apply Theorem~\ref{thm:margulis} for the set of functions $\tilde{f}_2,\ldots,\tilde{f}_K$ from $\mathcal{U}(\epsilon)$ to $\RR$. To that end we have to show that for all  $\epsilon>0$ the functions $\tilde{f}_2,\ldots,\tilde{f}_K$ are analytic in $\mathcal{U}(\epsilon)$ and together with $1$ are linearly independent over $\RR$.

The reciprocal of an analytic function that is nowhere zero is analytic. Thus, for any $\epsilon>0$, the function $1/f_1(\mathbf{\tilde{h}})$ is analytic in $\mathcal{U}(\epsilon)$. Furthermore, the product of two analytic functions is analytic. Therefore, for any $\epsilon>0$, the functions ${\tilde{f}_k=f_k(\mathbf{\tilde{h}})\cdot (1/f_1(\mathbf{\tilde{h}}))}$ are analytic in $\mathcal{U}(\epsilon)$ for $k=2,\ldots,K$.

We show that the functions $1,\tilde{f}_2,\ldots,\tilde{f}_K$ from $\mathcal{U}(\epsilon)$ to $\RR$ are linearly independent for all $\epsilon>0$ by contradiction.
Assume they are linearly dependent. Thus, there a exists a measurable set $\mathcal{S}\in\mathcal{U}(\epsilon)$ and a set of coefficients $\{t_1(\epsilon),\ldots,t_K(\epsilon)\}\in\RR$ not all zero such that $\forall \mathbf{\tilde{h}}\in\mathcal{S}$
\begin{align}
t_1(\epsilon)\cdot 1+t_2(\epsilon)\cdot\frac{f_2(\mathbf{\tilde{h}})}{f_1(\mathbf{\tilde{h}})}+\cdots+t_K(\epsilon)\cdot\frac{f_K(\mathbf{\tilde{h}})}{f_1(\mathbf{\tilde{h}})}=0.\nonumber
\end{align}
This implies that $\forall \mathbf{\tilde{h}}\in\mathcal{S}$
\begin{align}
0\cdot 1+t_1(\epsilon)\cdot f_1(\mathbf{\tilde{h}})+t_2(\epsilon)\cdot f_2(\mathbf{\tilde{h}})+\cdots+&t_K(\epsilon)\cdot f_K(\mathbf{\tilde{h}})=0, \nonumber
\label{linearCombFunctions}
\end{align}
in contradiction to the assumption that the functions $1,f_1,\ldots,f_K$ from $\RR^m$ to $\RR$ are linearly independent over $\RR$.

We can therefore apply Theorem~\ref{thm:margulis} to the set of functions $\tilde{f}_2,\ldots,\tilde{f}_K$ from $\mathcal{U}(\epsilon)$ to $\RR$ for all $\epsilon>0$, and the corollary follows.
\end{proof}

\vspace{2mm}

We are now ready to prove Theorem~\ref{thm:DoF}.
Define the sets $\mathcal{D}$, ${\mathcal{D}}(\epsilon)$ and $\mathcal{U}(\epsilon)$ as in Corollary~\ref{cor:DiophSubset}, and the manifold
$$\tilde{\mathcal{M}}(\epsilon)=\Big\{\big[f_1(\mathbf{\tilde{h}}) \ \cdots \ f_K(\mathbf{\tilde{h}})\big] \ : \ \mathbf{\tilde{h}}\in\mathcal{U}(\epsilon)\Big\}.$$
We begin by showing that, for any $\epsilon>0$, the DoF offered by the first computation rate is upper bounded by
\begin{align}
d_{\text{comp},1}\leq \frac{1}{K}
\end{align}
for almost every $\bh\in\tilde{\mathcal{M}}(\epsilon)$. Then we take $\epsilon$ to zero in order to show that this holds for almost every
$$\bh\in\mathcal{M}=\Big\{\big[f_1(\mathbf{\tilde{h}}) \ \cdots \ f_K(\mathbf{\tilde{h}})\big] \ : \ \mathbf{\tilde{h}}\in\RR^m\Big\}.$$

Consider the manifold $\tilde{\mathcal{M}}(\epsilon)$ for some $\epsilon>0$. Note that $h_1=f_1(\mathbf{\tilde{h}})\neq 0$ for any $\bh\in\tilde{\mathcal{M}}(\epsilon)$, and we can therefore define $\bar{\bh}=\bh/h_1$. We have $\bar{h}_1=1$ and $[\bar{h}_2 \ \cdots \ \bar{h}_K]\in\bar{\mathcal{M}}(\epsilon)$, where $\bar{\mathcal{M}}(\epsilon)$ is the manifold from~\eqref{ManifoldKminus1} in Corollary~\ref{cor:DiophSubset}.

The channel~\eqref{MACchannel} is equivalent to the channel
\begin{align}
\bar{\by}=\frac{1}{h_1}\by=\bx_1+\sum_{k\neq 1}\bar{h}_k \bx_k+\frac{1}{h_1}\bz.
\end{align}
Let $\ba$ be a vector of integer coefficients, and $\beta$ be the scaling factor used by the receiver in order to decode the linear combination $\bv=[\sum_{k=1}^K a_k\bt_k]\Mod$, see Section~\ref{s:cf}. The effective noise encountered in decoding the linear combination $\bv$ with coefficient vector $\ba$ is
\begin{align}
\bz_{\text{eff}}(\bh,\ba,\beta)&=(\beta-a_1)\bx_1+\sum_{k\neq 1}(\beta \bar{h}_k-a_k)\bx_k +\frac{\beta}{h_1}\bz,\nonumber
\end{align}
and its effective variance is given by
\begin{align}
\sigma^2_{\text{eff}}(\bh,\ba,\beta)&=(\beta-a_1)^2\Tsnr\nonumber\\
&~~+~\sum_{k\neq 1}(\beta \bar{h}_k-a_k)^2\Tsnr+
\frac{\beta^2}{|h_1|^2}.\label{DiophSigmaBound}
\end{align}
Recall that \begin{align}
R_{\text{comp},1}&=\max_{\ba,\beta}\frac{1}{2}\log\left(\frac{\Tsnr}{\sigma^2_{\text{eff}}(\bh,\ba,\beta)}\right)\nonumber\\
&=\frac{1}{2}\log\left(\Tsnr\right)-\frac{1}{2}\log\left(\min_{\ba,\beta}\sigma^2_{\text{eff}}(\bh,\ba,\beta)\right).
\label{rateExp}
\end{align}
Thus, in order to obtain an upper bound on $R_{\text{comp},1}$ we need to lower bound $\sigma^2_{\text{eff}}(\bh,\ba,\beta)$ for all values of $\beta\in\RR$ and $\ba\in\ZZ^{K}\setminus\mathbf{0}$. Let $$\bar{h}^*=\max_{k=1,\ldots,K}\bar{h}_k,$$ and $$k^*=\argmax_{k=1,\ldots,K}\bar{h}_k.$$ Note that if $|\beta|<1/(2\bar{h}^*)$ the minimizing corresponding choice of integers $a_1,\ldots,a_K$ in~\eqref{DiophSigmaBound} is $a_{k^*}=\sign(\beta)$, and $a_k=0$ for all $k\neq k^*$. This in turn, implies that for $|\beta|<1/(2\bar{h}^*)$ we have
\begin{align}
\sigma^2_{\text{eff}}(\bh,\ba,\beta)>(\beta\bar{h}^*-\sign(\beta))^2\Tsnr>\frac{\Tsnr}{4},
\end{align}
which means $d_{\text{comp},1}=0$. Thus, in order to obtain a positive DoF, $|\beta|$ must be greater than $1/(2\bar{h}^*)$.

If $1/(2\bar{h}^*)\leq|\beta|\leq 1/2$, then the minimizing value of $a_1$ in~\eqref{DiophSigmaBound} is $a_1=0$. This implies that for all values of $1/(2\bar{h}^*)\leq|\beta|\leq 1/2$ we have
\begin{align}
\sigma^2_{\text{eff}}(\bh,\ba,\beta)>\beta^2\Tsnr>\frac{\Tsnr}{4\left(\bar{h}^*\right)^2},
\end{align}
which also means $d_{\text{comp},1}=0$. Thus, in order to obtain a positive DoF, $|\beta|$ must be greater than $1/2$.

Hence, in order to lower bound~\eqref{DiophSigmaBound} in the limit of very high $\Tsnr$, it suffices to limit the optimization space of $\beta$ to $|\beta|>1/2$. For such values, $\beta$ can be written in the form $\beta=q+{\varphi}$ where ${\varphi}\in[-1/2,1/2)$, and $q\in\mathbb{Z}\setminus 0$ is the nearest integer to $\beta$.

For any $|{\varphi}|<1/2$, $q\in\mathbb{Z}\setminus 0$ and $\ba\in\mathbb{Z}^K\setminus\mathbf{0}$ we have
\begin{align}
&\sigma^2_{\text{eff}}(\bh,\ba,q,{\varphi})\nonumber\\
&\geq (\varphi+q-a_1)^2\Tsnr\nonumber\\
&~~+~\max_{k\neq 1}\left(q \bar{h}_k-a_k+{\varphi}\bar{h}_k \right)^2\Tsnr + \frac{(q/2)^2}{|h_1|^2}\nonumber\\
&\geq {\varphi}^2\Tsnr+\max_{k\neq 1}\left(q \bar{h}_k-a_k+{\varphi}\bar{h}_k \right)^2\Tsnr + \frac{(q/2)^2}{|h_1|^2}\nonumber\\
&=\max_{k\neq 1}\bigg(\left({\varphi}^2+(q \bar{h}_k-a_k+{\varphi}\bar{h}_k)^2\right)\Tsnr + \frac{q^2}{|2h_1|^2}\bigg)\label{sigmaBound}
\end{align}
We further bound~\eqref{sigmaBound} by substituting the minimizing value of $\varphi$ for each $k\neq 1$. It follows by simple differentiation that for each $k\neq 1$ the minimum occurs at
\begin{align}
{\varphi}^*(k)&=\frac{-\bar{h}_k}{1+\bar{h}^2_k}(q \bar{h}_k-a_k).\nonumber
\end{align}
Substituting ${\varphi}^*(k)$ into~\eqref{sigmaBound} yields
\begin{align}
\sigma^2_{\text{eff}}&(\bh,\ba,q,{\varphi})\geq\max_{k\neq 1}\bigg(\frac{1}{1+\bar{h}^2_k}(q \bar{h}_k-a_k)^2\Tsnr + \frac{q^2}{|2h_1|^2}\bigg)\nonumber\\
&>\frac{1}{1+\max_{k\neq 1}\bar{h}^2_k}\cdot\max_{k\neq 1}|q \bar{h}_k-a_k|^2\Tsnr + \frac{q^2}{|2h_1|^2}\nonumber\\
&=c_0(\bh)\cdot\left(\max_{k\neq 1}|q \bar{h}_k-a_k|^2\Tsnr + q^2\right),
\label{sigmaBound3}
\end{align}
where $c_0(\bh)>0$ is some constant independent of the $\Tsnr$.

Consider the limit of $\Tsnr\rightarrow\infty$, and assume $|q|$ is upper bounded by some finite integer $q_0>0$. Then, for almost every $\bh\in\tilde{\mathcal{M}}(\epsilon)$, there exists a constant $c_1({\bh},q_0)>0$, independent of the $\Tsnr$, for which
\begin{align}
\max_{k\neq 1}|q \bar{h}_k-a_k|>c_1({\bh},q_0)\label{irrationalityEq}
\end{align}
for all $0<|q|\leq q_0$ and $\ba\in\mathbb{Z}^{K-1}$. Note that $\bh$ does not satisfy~\eqref{irrationalityEq} only if all elements of $\bar{\bh}$ are rational. Substituting~\eqref{irrationalityEq} into~\eqref{sigmaBound3} gives $\sigma^2_{\text{eff}}(\bh,\ba,q,{\varphi})>c_2({\bh},q_0)\Tsnr$ which means that the DoF is zero. Therefore, in order to get a positive DoF, $q$ must tend to infinity when the SNR tends to infinity.

Any positive integer $|q|$ can be expressed as $|q|=\Tsnr^{\gamma}$ for some $\gamma>0$. From Corollary~\ref{cor:DiophSubset} we know that for any $\epsilon,\delta>0$, almost every $\bar{\bh}\in\bar{\mathcal{M}}(\epsilon)$, and $q$ large enough, we have
\begin{align}
\max_{k\neq 1}|q \bar{h}_k-a_k|>|q|^{-\frac{1}{K-1}-\delta}=\Tsnr^{-\frac{\gamma}{K-1}-\gamma\delta}.\label{diophTerm}
\end{align}
Thus, for $|q|$ large enough and almost every $\bh\in\tilde{\mathcal{M}}(\epsilon)$, we have
\begin{align}
\sigma^2_{\text{eff}}(\bh,\ba,q,{\varphi})&\geq c^2_0(\bh)\cdot\max\bigg(\Tsnr^{1-\frac{2\gamma}{K-1}-2\gamma\delta}, \Tsnr^{2\gamma}\bigg).\label{sigmaBound4}
\end{align}
Minimizing~\eqref{sigmaBound4} with respect to $\gamma$ gives $$\gamma=\frac{K-1}{2(K+\delta K-\delta)}.$$ Hence, for all $q\in\mathbb{Z}$, ${\varphi}\in[-1/2,1/2)$, $\ba\in\mathbb{Z}^K\setminus\mathbf{0}$ and almost every $\bh\in\tilde{\mathcal{M}}(\epsilon)$
\begin{align}
\sigma^2_{\text{eff}}(\bh,\ba,q,{\varphi})>c_3(\bh)\Tsnr^{\frac{K-1}{K+\delta(K-1)}},
\end{align}
where $c_3(\bh)>0$ is also a constant independent of the $\Tsnr$.
Substituting into~\eqref{rateExp} gives
\begin{align}
R_{\text{comp},1}<\frac{1+\delta(K-1)}{K+\delta(K-1)}\cdot\frac{1}{2}\log(\Tsnr)-\frac{1}{2}\log(c_3(\bh))\label{rateExp2}
\end{align}
for any $\delta> 0$. Taking $\delta\rightarrow 0$, it follows that the DoF the highest computation rate offers is upper bounded by
\begin{align}
\lim_{\Tsnr\rightarrow\infty}\frac{R_{\text{comp},1}}{\frac{1}{2}\log{(1+\Tsnr)}}\leq\frac{1}{K},\label{DoFBound}
\end{align}
for almost every $\bh\in\tilde{\mathcal{M}}(\epsilon)$.
Since this holds for all $\epsilon>0$, we can now take $\epsilon$ to zero (note that the bound does not depend on $\epsilon$).
The set $\mathcal{D}$ has measure zero since $f_1$ is analytic on $\RR^m$ and is not identically zero (otherwise, the set of functions $1,f_1,\ldots,f_K$ is not linearly independent). Note that the measure of $\mathcal{D}(\epsilon)$ goes to zero as $\epsilon\rightarrow 0$, and furthermore $\mathcal{D}=\cap_{\epsilon>0}\mathcal{D}(\epsilon)$. Therefore, the claim holds for almost every $\bh\in\mathcal{M}$.

\section{Proof of Theorem~\ref{thm:DoFeff}}
\label{app:DoFeffProof}
Consider the reference $L$-user MAC
\begin{align}
\by_{\text{ref}}=\sum_{\ell=1}^L g_\ell\bx_\ell+\bz,\label{LmacRef}
\end{align}
where $\bz$ is AWGN with zero mean and unit variance and all users are subject to the power constraint $\|\bx_\ell\|^2\leq n\Tsnr$.
Applying Corollary~\ref{thm:symmetricDoF} to this channel implies that, for almost every $\bg\in\mathcal{M}$, the DoF that each optimal computation rate offers is $1/L$.
Let $R^{\text{ref}}_{\text{comp}}(\bg,\ba)$ be the computation rate corresponding to the coefficient vector $\ba$ over the reference MAC~\eqref{LmacRef}. We now show the computation rate of the same coefficient vector $R_{\text{comp}}(\bg,\ba,\bB)$ over the effective MAC~\eqref{effectiveMAC} is within a constant number of bits from $R^{\text{ref}}_{\text{comp}}(\bg,\ba)$.

For the reference channel~\eqref{LmacRef} the effective noise variance for a given $\ba$ and $\beta$ is
\begin{align}
\sigma^2_{\text{ref}}(\bg,\ba,\beta)=\Tsnr \|\beta \bg -\ba\|^2+\beta^2,\nonumber
\end{align}
while for the effective $L$-user MAC~\eqref{effectiveMAC} the effective variance for the same $\ba$ and $\beta$ is
\begin{align}
\sigma^2_{\text{eff}}(\bg,\ba,\beta,\bB)=\Tsnr\sum_{\ell=1}^L(\beta g_\ell-a_\ell)^2b^2_{\text{eff},\ell}+\beta^2.\nonumber
\end{align}
Letting $b^*=\max_{\ell=1,\ldots,L}b^2_{\text{eff},\ell}$ and noting that $b^*\geq1$ gives
\begin{align}
\sigma^2_{\text{ref}}(\bg,\ba,\beta) \leq \sigma^2_{\text{eff}}(\bg,\ba,\beta,\bB) \leq b^* \sigma^2_{\text{ref}}(\bg,\ba,\beta).\nonumber
\end{align}
Since the above inequalities are valid for any value of $\beta$, in particular they hold true for the optimal value of $\beta$ and it follows that
\begin{align}
R^{\text{ref}}_{\text{comp}}(\bg,\ba)-\frac{1}{2}\log(b^*)\leq R_{\text{comp}}(\bg,\ba,\bB)\leq R^{\text{ref}}_{\text{comp}}(\bg,\ba).\nonumber
\end{align}
As $b^*$ is independent of the $\Tsnr$, it follows that the DoF offered by each computation rate for the reference and effective MACs~\eqref{LmacRef} and~\eqref{effectiveMAC} are equal, In particular, this is the case for the optimal computation rates, thus the theorem follows.

\end{appendices}

\bibliographystyle{IEEEtran}
\bibliography{symICbib}
\end{document}